\newtheorem{notn}[thm]{Notation}
\providecommand*{\ifempty}[3]{\ifthenelse{\isempty{#1}}{#2}{#3}}
\newcommand{\eg}{\textit{e.g.}}
\newcommand{\ie}{\textit{i.e.}}
\newcommand{\wrt}{w.r.t.}
\renewcommand{\e}{\varepsilon}
\newcommand{\un}{\mathbbm{1}}
\newcommand{\D}{\mathcal{D}}
\newcommand{\R}{\mathcal{R}}
\newcommand{\M}{m}
\newcommand{\N}{\mathcal{N}}
\newcommand{\X}{\mathcal{X}}
\newcommand{\Y}{\mathcal{Y}}
\newcommand{\term}{\mathbb{T}}
\newcommand{\lol}{\multimap}
\newcommand{\lollol}{\multimapboth}
\newcommand{\ot}{\otimes}
\newcommand{\T}{\term}
\newcommand{\naturals}{\mathbb{N}}
\newcommand{\reals}{\mathbb{R}}
\newcommand{\preals}{[0,\infty)}
\newcommand{\extreals}{[0,\infty]}
\newcommand{\Prop}[1][]{\mathbb{P}_{#1}}
\newcommand{\LLQ}{\text{LLQ}}
\newcommand{\LL}{\mathbb{L}}
\newcommand{\LLun}{\LL_{\un}}
\newcommand{\LLs}{\LL_{\un}^*}
\newcommand{\Half}{\textstyle\frac{1}{2}}
\newcommand{\coloneq}{\mathrel{\mathop:}=}
\newcommand{\dotdiv}{\mathbin{\text{\@dotminus}}}
\newcommand{\@dotminus}{%
  \ooalign{\hidewidth\raise1ex\hbox{.}\hidewidth\cr$\m@th-$\cr}%
}
\newcommand{\infrule}[3][]{\infer[{\ifempty{#1}{}{(\textsc{#1})\;}}]{#3}{#2}}
\newcommand{\doubleinfrule}[3][]{\infer=[{\ifempty{#1}{}{(\textsc{#1})\;}}]{#3}{#2}}
\newcommand{\gcut}[1]{}
  \tikzset{
    commutative diagram/.style 2 args={
    	matrix of math nodes, row sep=#1,column sep=#2,
	text height=1.5ex, text depth=0.25ex},
    commutative diagram/.default={1cm}{1cm}
    }
  \tikzset{    
    skip loop/.style n args={3}{to path={-- ++(0,#1) -| node[pos=0.25,#2] {#3} (\tikztotarget)}},
    cross line/.style={preaction={draw=white, -, line width=6pt}}
  }
\tikzset{
	treenode/.style = {shape=rectangle, rounded corners,
		draw, align=center,
		top color=white, bottom color=blue!20},
	root/.style     = {treenode, font=\ttfamily\scriptsize, bottom color=red!30},
	env/.style      = {treenode, font=\ttfamily\scriptsize},
	dummy/.style    = {circle,draw}
}
\begin{document}
\begin{frontmatter}
 \title{Propositional Logics for the Lawvere Quantale}				
  \author{Giorgio Bacci\thanksref{a}\thanksref{emailA}}	                 
  \author{Radu Mardare\thanksref{b}\thanksref{emailB}}		
  \author{Prakash Panangaden\thanksref{c}\thanksref{emailC}}
  \author{Gordon Plotkin\thanksref{d}\thanksref{emailD}}		
   \address[a]{Department of Computer Science, Aalborg University, 			
    Aalborg, Denmark}  							
   \thanks[emailA]{Email: \href{mailto:grbacci@cs.aau.dk} {\texttt{\normalshape
        grbacci@cs.aau.dk}}} 
  \address[b]{Computer and Information Sciences, University of Strathclyde, 
    Glasgow, Scotland} 
  \thanks[emailB]{Email:  \href{mailto:r.mardare@strath.ac.uk} {\texttt{\normalshape
        r.mardare@strath.ac.uk}}}
  \address[c]{School of Computer Science, McGill University and Mila,
    Montr\'eal, Canada} 
  \thanks[emailC]{Email:  \href{mailto:prakash@cs.mcgill.ca} {\texttt{\normalshape
        prakash@cs.mcgill.ca}}}
  \address[d]{LFCS, School of Informatics, University of Edinburgh, 
    Edinburgh, Scotland} 
  \thanks[emailD]{Email:  \href{mailto:gdp@inf.ed.ac.uk} {\texttt{\normalshape
        gdp@inf.ed.ac.uk}}}
\begin{abstract} 
Lawvere showed that generalised metric spaces are categories enriched over $[0, \infty]$,
the quantale of the positive extended reals. The statement of enrichment is a quantitative
analogue of being a preorder. Towards seeking a logic for quantitative metric reasoning,
we investigate three $[0,\infty]$-valued propositional logics over the Lawvere quantale. 
The basic logical connectives shared by all three logics are those that can be interpreted
in any quantale, viz finite conjunctions and disjunctions, tensor (addition for the
Lawvere quantale) and linear implication (here a truncated subtraction); to these we add,
in turn, the constant $\un$ to express integer values, and scalar multiplication by a
non-negative real to express general affine combinations.  Quantitative equational logic
can be interpreted in the third logic if we allow inference systems instead of axiomatic
systems.


For each of these logics we develop a natural deduction system which we prove to be
decidably complete w.r.t.\ the quantale-valued semantics.  The heart of the completeness
proof makes use of the Motzkin transposition theorem.  Consistency is also decidable; the
proof makes use of Fourier-Motzkin elimination of linear inequalities.
Our logics are variants on $[0,1]$-logics, via the exponential map $e^{-x}$: the first is
equivalent to Goguen's product logic~\cite{Goguen74}, the second is a 
version of product logic with  a fixed choice of
constants, the third adds exponentiation. The first \text{is} known; the 
second and third are apparently novel. 
The third is natural in the additive $[0,\infty]$-context, but not the multiplicative
$[0,1]$-context.  Our proofs are novel in all cases, making uses of linear algebraic
results; these results are commonly regarded informally as a kind of completeness result;
here we put that view to good use.
Strong completeness does not hold in general, even (as is known) for theories over
finitely-many propositional variables; indeed even an approximate form of strong
completeness in the sense of Pavelka or Ben Yaacov---provability up to arbitrary
precision---does not hold.  However, we can show it for theories axiomatized by a (not
necessarily finite) set of judgements in \emph{normal form} over a finite set of
propositional variables when we restrict to models that do not map variables to $\infty$;
the proof uses Hurwicz's general form of the Farkas' Lemma.
\end{abstract}
\begin{keyword}
  Quantitative reasoning, axiomatizations, quantitative algebras, metric spaces, quantale-valued logics.
\end{keyword}
\end{frontmatter}

\section{Introduction}

Real-valued logics have been experiencing a recent resurgence of interest
because of probabilistic~\cite{Kozen81,Kozen85,Baier08} and metric
reasoning~\cite{Panangaden09} and applications such as neurosymbolic 
reasoning in machine learning~\cite{Fagin94,Fagin2020,sen2022neuro}.   
They are generally fuzzy logics~\cite{HajekBook} often interpreted over $[0,1]$---for example {\L}ukasiewicz logic (\cite{Lukasiewicz30,Yaacov13,yaacov10}).
%
In~\cite{Lawvere73}, Lawvere showed that
generalised metric spaces are categories enriched over $[0, \infty]$, the quantale of the
positive extended reals. The statement of enrichment is a quantitative analogue of being a
preorder. One can view {\L}ukasiewicz logic as the propositional logic of 
a quantale over $[0,1]$, see~\cite{diNola2011}. Also there are interpretations of
(variants of) linear logic in quantales~\cite{Yetter90}. 

In this paper, we propose studying logic over Lawvere's quantale, and we begin such a study with propositional logic.
An argument that points towards this quantale comes
also from the literature on quantitative algebras~\cite{Mardare16}, where the development
of quantitative equational logic is based on a family of binary predicates "$ =_\e$" for every $\e\geq
0$.  These are used in quantitative equations, such as $s=_\e t$, to encode the fact that
the distance between the interpretation of the terms $s$ and $t$ in a metric
space is at most $\e$. Once this quantitative information is embedded in the predicate
"$=_\e$", quantitative equations become Boolean statements, i.e., \textit{true} or
\textit{false} in a model. 
An alternative way to tackle this issue is to consider only one "$=$" predicate that is valued
in the Lawvere quantale. This requires exchanging the classical ``Boolean core'' of 
quantitative equational logic with a many-valued logic interpreted over Lawvere's
quantale.

 
In this paper, we investigate  basic concepts and proof systems for such
propositional logics. We consider three closely related logics,  built up in
stages. 
The basic logical
connectives, shared by all three logics, are those that can be interpreted in any quantale,
viz finite conjunctions and disjunctions, tensor (addition for the Lawvere quantale) and
linear implication (here a truncated subtraction). To these we add, in turn, the constant
$\un$ to express integer values, and  scalar multiplication by a non-negative 
 real to express
general non-negative affine combinations. 
Quantitative equational logic can be interpreted in the third logic once we extend the provability principles and instead of only looking to theories defined by axiomatic systems, we also consider theories closed under systems of inferences. 

 Our logics are variants of $[0,1]$-valued logics, via the exponential map $e^{-x}$: the
 first is equivalent to product logic~\cite{H96, H06}, the second is a 
 version of product logic with a fixed choice of constants~\cite{Sc06,H95}, the third adds
 exponentiation.  The first \text{is} known; the second and third are apparently novel.
 The third is natural in the additive $[0,\infty]$-context, but not the multiplicative
 $[0,1]$-context, where multiplication by a constant becomes
 exponentiation. Multiplication by a constant can be viewed as a graded modality; however
 the literature on product logic modalities seems to consider only Kripke
 models~\cite{VE17}.
 
 Our proofs are novel in all cases. The proofs used in the fuzzy logic literature, e.g.,
 in~\cite{H96,Sc06}, employ the methods of algebraic logic, and for product logic in
 particular, use results of ordered Abelian group theory.  Our proofs are arguably
 simpler. They consist of a syntactic reduction to a normal form using only affine
 combinations of propositional variables, followed by a novel application of theorems from
 linear algebra, such as the Farkas' Lemma. It has been long noted in the
 literature~\cite{MatousekBook} that the Farkas' Lemma, and related results, can be
 thought of as completeness theorems. Here they are literally seen as such, and used to
 prove general completeness for our propositional logics.

 For each of these logics we develop a natural deduction system, and prove it complete
 relative to interpretations in the Lawvere quantale.  The proof of completeness uses a
 normalisation technique which replaces a sequent $\varphi\vdash\psi$ that is not in
 normal form, with a finite set of sequents in normal form. The main normal form is a
 sequent where the formulas $\varphi$, $\psi$ are tensors
 $r_1*p_1 \otimes \ldots \otimes r_n*p_n\otimes r$ of propositional variables multiplied
 by positive reals, and a scalar. Semantically, these are exactly affine linear
 combinations of the variables. Leaving aside the details, this reduces the problem of
 proving that a given sequent follows from a given finite set of sequents, to the problem
 of proving that a given affine inequality is a consequence of a given finite set of
 affine inequalities.
 This is exactly the province of (variants of) the Farkas' Lemma~\cite{Farkas1902} and the
 Motzkin transposition theorem~\cite{Motzkin51}. These show that when such a consequence
 holds for all values of the variables, then the given affine inequality is a linear
 combination of the given set of inequalities (an integer variant of
 Motzkin~\cite{Stoer2012} is used when the reals are integers). It is in this way that the
 Farkas' Lemma and its relatives can be viewed as completeness results (rather than, as in
 another--very similar--view, as a dichotomy result that either there is a linear
 combination or there is a counterexample).


 As reduction to sets of normal forms and the Farkas' Lemma (and related) are effective,
 it follows that satisfiability is decidable. We can also decide consistency via the
 reduction to normal form, now making use of Fourier-Motzkin
 elimination~\cite{Fourier1826,Williams1986}.  By making these reductions efficient one
 can further show that consequence is co-NP complete and that consistency is NP-complete.
 
 
Strong completeness does not
hold in general, even for theories over finitely-many propositional variables; indeed even
an approximate form of strong completeness in the sense of Pavelka or Ben Yaacov
\cite{EG00, P79, Yaacov13,yaacov10}---provability up to arbitrary precision---does not hold. 
However, we can show it does hold for such theories that admit a, not necessarily finite,
set of axioms in \emph{normal form} over a \emph{finite} set of propositional variables
when we restrict to models that do not map variables to $\infty$; the proof uses Hurwicz's
general form~\cite{Hurwicz2014} of the Farkas' Lemma.

\paragraph*{Synopsis}
In Section~\ref{sec:prelim}, we present some preliminaries and  basic notation. In Section~\ref{sec:LLQ}, we introduce the three quantale-valued logics with their syntax and semantics. Section~\ref{sec:proofSys} is dedicated to natural deduction systems, one for each of the three logics. It contains soundness statements and a series of results regarding provability in these logics. A subsection here is dedicated to deduction theorems, with a detailed discussion about the failure of the classic deduction theorem. 
In Section~\ref{sec:theories}, we collect a series of model-theoretic results and we introduce the concept of a diagrammatic theory that is central in the model theory of these logics. Section~\ref{sec:normalform} is dedicated to normalization; it contains a brief description of an algorithm that computes the normal form of a sequent. Due to space restrictions, 
the algorithm is only explained for a couple of key cases.
 Section~\ref{sec:completeness} is dedicated to completeness and incompleteness results for all these logics. In Section~\ref{sec:infSystems}, we show how we can encode  
quantitative equational logic in our logic. Quantitative equational reasoning requires more than just an axiomatic system; it needs a system of inferences that does not produce one axiomatic theory, but a set of axiomatic theories.
\section{Preliminaries and Notation}
\label{sec:prelim}

A \emph{quantale} is a complete lattice with a binary, associative operation $\ot$ (the \emph{tensor})
preserving all joins (and so with both $a \ot -$ and $- \ot a$ having  adjoints).
A quantale is called \emph{commutative} whenever its tensor is; it is called \emph{unital} if there
is an element $1$, the unit, such that $1 \ot a = a = a \ot 1$, for all $a$; and it is called
\emph{integral} if the unit is the top element.
For commutative quantales, where the adjoints coincide, we denote the  adjoint to $a \ot -$ by $a \lol -$,
which is characterised by
\begin{equation*}
 a \ot b \leq c  \Longleftrightarrow  b \leq a \lol c \,.
\end{equation*}


The complete lattice $\extreals$ ordered by the ``greater or equal'' relation $\geq$ with extended sum as tensor, is known as the \emph{Lawvere quantale} (or metric quantale).
In this paper, we mainly work with the Lawvere quantale, so it is convenient to have an
explicit characterisation of its basic operations. Join and meet are  
$\inf$ and $\sup$, respectively, $\infty$ is the bottom element and 
$0$ the top. For $r, s \in \extreals$, we define truncated subtraction as
\begin{equation*}
  r \dotdiv s = 
  \begin{cases}
     0 		&\text{if $r \leq s$} \\
     r - s 	&\text{if $r > s$ and $r \neq \infty$} \\
     \infty 	&\text{if $r = \infty$ and $s \neq \infty$} \,.
  \end{cases}
\end{equation*}
Then, the right adjoint $s \lol r$ is just $r \dotdiv s$ (note that the order of terms is inverted).
We remark that the continuous t-norms of fuzzy logic are exactly the integral commutative
quantales over $[0,1]$, except that the condition of preserving all sups is replaced by
the stronger condition of monotonicity plus continuity (in the usual
sense)~\cite{HajekBook}.

\section{Logics for the Lawvere Quantale}
\label{sec:LLQ}

In this section, we present three propositional logics interpreted over the 
Lawvere quantale which we will collectively refer to as \emph{logics for the Lawvere quantale} (\LLQ).

\textbf{Syntax of logical formulas.}
Formulas are freely generated from a set $\Prop= \{p_1, p_2, \dots \}$ of atomic propositions over logical connectives that can be interpreted in the Lawvere quantale: 
\begin{align*}
 &\bot \mid \top \mid
 \phi \land \psi \mid 
 \phi \lor \psi \mid
  \phi \ot \psi  \mid 
 \phi \lol \psi 
 \tag{quantale connect.}
 \\
 &\un \tag{constant} 
 \\
 &r * \phi \quad \text{(for $r \in \preals$)} \tag{scalar multiplication}
\end{align*}

The first logic, $\LL$, uses only the basic logical connectives that can 
be interpreted in any commutative quantale, viz, the constants bottom ($\bot$) and top ($\top$), binary conjunction ($\land$) and disjunction ($\lor$), tensor ($\ot$), and linear implication ($\lol$). 

The second logic, $\LLun$, additionally allows the use of the constant $\un$. 

The third logic, $\LLs$, extends the syntax further with scalar multiplication by a non-negative real ($r * -$).


It will be useful to define, in all \LLQ, the following derived connectives:
\begin{align*}
\lnot\phi &\coloneq \phi\lol\bot \,, 
\tag{Negation} \\
\phi \lollol \psi &\coloneq (\phi \lol \psi) \land (\psi \lol \phi) \,.
\tag{Double implication}
\end{align*}
Moreover, for any $n \in \naturals$, the derived connective $n \phi$ is inductively defined as follows
\begin{align*}
 0 \phi \coloneq \top && \text{and} && (n+1) \phi \coloneq \phi \ot n \phi \,.
\end{align*}
Thus $\LLun$ effectively has all natural numbers as constants via the formulas
$r*\un$. (In fact $\LLun$ is equivalent to the logic obtained from $\LL$ by adding all
rationals as constants.)  Similarly, for any $r \in \preals$, we write simply $r$ to
denote the formula $r * \un$. 

\begin{notn}
  To simplify the presentation, we assume an operator precedence rule so that $*$ binds
  most strongly, followed by $\ot$, next are $\land$ and $\lor$, and the weakest are
  $\lol$, $\lollol$ and $\lnot$.  Thus, the formula $r*\phi\ot\psi\land s*\psi\lol\theta$
  is interpreted as $(((r*\phi)\ot\psi)\land (s*\psi))\lol\theta$.
\end{notn}

\textbf{Semantics of logical formulas.}
The models of \LLQ\ are maps $\M \colon \Prop \to \extreals$ interpreting the 
propositional symbols in the Lawvere quantale, which can be extended uniquely 
to formulas by setting 
\begin{align*}
\begin{aligned}
  \M(\bot) 		&\coloneq \infty \,, \\
  \M(\top) 		&\coloneq 0 \,, \\
  \M(\un)		&\coloneq 1 \,, \\
  \M(r * \phi)	&\coloneq r \M(\phi) \,,
\end{aligned}
&&
\begin{aligned}
  \M(\phi \land \psi)	&\coloneq \max \{ \M(\psi), \M(\phi) \} \,, \\
  \M(\phi \lor \psi)	&\coloneq \min \{ \M(\psi), \M(\phi) \}  \,, \\
  \M(\phi \ot \psi)		&\coloneq \M(\phi) + \M(\psi) \,, \\
  \M(\phi \lol \psi)		&\coloneq \M(\psi) \dotdiv \M(\phi) \,,
\end{aligned}
\end{align*}
with derived connectives $\lnot$ and $\lollol$ interpreted as
\begin{align*}
  \M(\lnot\phi) = \infty \dotdiv \M(\phi) \,,
  &&
  \M(\phi \lollol \psi) = |\M(\psi) - \M(\phi)| \,.
\end{align*}

\section{Natural Deduction Systems}
\label{sec:proofSys}

We present natural deduction systems for the thee logics 
$\LL \subseteq \LLun \subseteq \LLs$. As each logic is intended to be a conservative
extension of its sub-logics, we present their deduction systems incrementally. 

Let $\mathcal{L} \in \{ \LL, \LLun, \LLs\}$.
A \emph{judgement} in $\mathcal{L}$ is a syntactic construct of the form
\begin{equation*}
 \phi_1, \dots, \phi_n \vdash \psi \,,
 \tag{Judgement}
\end{equation*}
where the $\phi_i$ and $\psi$ are logical formulas of $\mathcal{L}$, called respectively the \emph{antecedents} 
and the \emph{consequent} of the judgement. Note that the antecedents $\Gamma = (\phi_1, \dots,
\phi_n)$ of a judgement form a finite ordered list, possibly, with repetitions.  
As is customary, for $\Gamma$ and $\Delta$ lists of formulas, their comma-separated
juxtaposition $\Gamma, \Delta$ denotes concatenation; and $\vdash \psi$ is the notation
for a judgement with empty list of antecedents. 

A judgement $\gamma = (\Gamma \vdash \psi)$ in $\mathcal{L}$ \emph{is satisfied by} a
model $\M$, in symbols $\M \models_{\mathcal{L}} \gamma$, whenever 
\begin{equation*}
\textstyle
 \sum_{\phi \in \Gamma} \M(\phi) \geq \M(\psi) \,.
 \tag{Semantics of judgements}
\end{equation*}
When the logic $\mathcal{L}$ is clear from the context or the satisfiability holds in all
\LLQ, we simply write $m \models \gamma$. 

A judgement is \emph{satisfiable} if it is satisfied by a model; \emph{unsatisfiable} if
it is not satisfiable; and a \emph{tautology} if it is satisfied by all models. 
Note that, for any model $\M$ in \LLQ
\begin{gather*}
\begin{aligned}
\M \models (\vdash \phi) &&\text{iff} && \M(\phi) &= 0 \\
\M \models (\vdash \lnot \phi) &&\text{iff} && \M(\phi) &= \infty \quad \text{(\ie, $\phi$ is infinite)} \\
\M \models (\vdash \lnot \lnot \phi) &&\text{iff} && \M(\phi) &< \infty 
					\quad \text{(\ie, $\phi$ is finite)} \\
\M \models (\phi \vdash \psi) &&\text{iff}&& \M(\phi) &\geq \M(\psi) \,.
\end{aligned}
\end{gather*}
In particular, $\vdash \phi \lol \phi$, $\vdash \top$, and 
$\vdash \lnot \bot$ are examples of tautologies, while $\vdash \phi \lollol (\lnot\lnot\phi)$
is not. Moreover, by using negation we can express whether the interpretation 
of a formula is either finite or infinite.

An \textit{inference (rule)} is a syntactic construct of the form $\infrule{S}{~\gamma~}$,
for $S$ a set of judgements and $\gamma$ a judgement. 
The judgements in $S$ are the \emph{hypotheses of the inference} and 
$\gamma$ is the \emph{conclusion of the inference}. When $S=\{\gamma'\}$ is a singleton, we write 
\begin{equation*}
\begin{aligned}
\doubleinfrule{\gamma'}{~\gamma~}
\end{aligned}
\quad \text{to denote that both} \quad
\begin{aligned}
\infrule{\gamma'}{~\gamma~}
\end{aligned}
\;
\text{and }
\;
\begin{aligned}
\infrule{\gamma}{~\gamma'~}
\end{aligned}\;
\text{hold.}
\end{equation*}

A judgement $\gamma$ is a \emph{semantic consequence} of a set $S$ of judgements, in
symbols $S \models \gamma$, if every model that satisfies all the judgements in $S$
satisfies also $\gamma$. Thus, $\emptyset \models \gamma$ (or more simply
$\models \gamma$) means that $\gamma$ is a tautology. For a model $\M$, we will also use
the notation $S \models_\M \gamma$, to mean that, whenever $m$ satisfies all the
judgements of $S$, then it satisfies also $\gamma$.

\paragraph*{The natural deduction system of $\LL$}
It consists of the inference rules in Figure~\ref{fig:LL}. Figure~\ref{tab:structRules}
contains the basic rules of logical deduction (\textsc{id}) and (\textsc{cut}), and the
structural rules of weakening (\textsc{weak}) and permutation (\textsc{perm}). Note that,
there is no cancellation rule. Figure~\ref{tab:latticeRules} provides the rules for the
lattice operations of the Lawvere quantale\footnote{Recall that the order on $\extreals $
  is reversed!}. Figure~\ref{tab:lawvereRules} collects the rules that are specific to the
Lawvere quantale. (\textsc{wem}) is the weak excluded middle; (\textsc{tot}) states that
the quantale is totally ordered; the other rules explain the actions of $\ot$ and its
adjoint in the Lawvere quantale. 

\paragraph*{\textbf{The natural deduction system of $\LLun$}}
It includes all the rules in Figure~\ref{fig:LL} and, in addition,
\begin{equation*} 
\infrule[one]{\vdash \un \lor \lnot\un}{ \vdash \bot}
\end{equation*}
expressing that $0 \geq 1$ and $1 \geq \infty$ are inconsistencies. Thus, 
with $\LLun$ we exit the universe of classical logic, as $\un$ cannot be 
provably equivalent either to $\top$, or to $\bot$.

\paragraph*{\textbf{The natural deduction system of $\LLs$}}
It extends the deduction system of $\LLun$ with the rules for scalar multiplication in
Figure~\ref{tab:scalarRules}. In ($S_4$), ${\bowtie}$ can be either of   
$\land,\lor,\otimes, \lol$, meaning that we have one version of ($S_4$) for each of these operators.

\begin{figure}[tp]
\centering
\begin{subfigure}{0.4\textwidth}
\hrule
\begin{gather*}
\infrule[id]{}{\phi \vdash \phi}
\\[2ex]
\infrule[cut]{
	\Gamma \vdash \phi
	&
	\Delta,\phi \vdash \psi
}{ \Gamma,\Delta \vdash \psi }
\\[2ex]
\infrule[weak]{
	\Gamma \vdash \phi
}{\Gamma,\psi \vdash \phi}
\\[2ex]
\infrule[perm]{
	\Gamma, \phi, \psi, \Delta \vdash \theta
}{\Gamma, \psi, \phi, \Delta \vdash \theta}
\end{gather*}
\hrule\vspace{1ex}
	\caption{Logical deduction and Structural rules}
	\label{tab:structRules}
\end{subfigure}
\quad
%
\begin{subfigure}{0.4\textwidth}
\hrule
\begin{align*}
\infrule[top]{}{\Gamma \vdash \top}
&&&
\infrule[bot]{}{\bot \vdash \phi}
\\[2ex]
\infrule[$\land_1$]{
	\Gamma, \phi \vdash \theta
}{\Gamma, \phi \land \psi \vdash \theta}
&&&
\infrule[$\lor_1$]{
	\Gamma, \phi \vdash \theta
	&
	\Gamma, \psi \vdash \theta
}{\Gamma, \phi \lor \psi \vdash \theta}
\\[2ex]
\infrule[$\land_2$]{
	\Gamma \vdash \phi
	&
	\Gamma \vdash \psi
}{\Gamma \vdash \phi \land \psi}
&&&
\infrule[$\lor_2$]{
	\Gamma \vdash \phi
}{\Gamma \vdash \phi \lor \psi}
\\[2ex]
\infrule[$\land_3$]{
	\Gamma \vdash \phi \land \psi
}{\Gamma \vdash \psi}
&&&
\infrule[$\lor_3$]{
	\Gamma, \phi \lor \psi \vdash \theta
}{\Gamma, \psi \vdash \theta}
\end{align*}
\hrule\vspace{1ex}
	\caption{Lattice rules}
	\label{tab:latticeRules}
\end{subfigure}

\begin{subfigure}{0.858\textwidth}
\hrule
\begin{align*}
\infrule[wem]{}{\vdash (\lnot\phi) \lor (\lnot\lnot\phi)}
&&&
\infrule[tot]{}{\vdash (\phi \lol \psi) \lor (\psi \lol \phi)}
\\[2ex]
\doubleinfrule[$\ot_1$]{
	\hfill\Gamma, \phi, \psi \vdash \theta
}{\Gamma, \phi \ot \psi \vdash \theta}
&&&
\infrule[$\lol_1$]{
	\Gamma, \phi \lol \theta \vdash \psi
	&
	\theta \vdash \phi
}{\Gamma, \theta \vdash \phi \ot \psi}
\\[2ex]
\doubleinfrule[$\ot_2$]{
	\Gamma, \phi \ot \psi \vdash \theta
}{\Gamma, \phi \vdash \psi \lol \theta}
&&&
\infrule[$\lol_2$]{
	\Gamma, \theta \vdash \phi \ot \psi
	&
	\vdash \lnot\lnot\phi
}{\Gamma, \phi \lol \theta \vdash \psi}
\\[2ex]
\infrule[$\ot_3$]{
	\phi \ot \phi \vdash \psi \ot \psi
}{\phi \vdash \psi}
&&&
\infrule[$\lol_3$]{
	\Gamma, \theta \vdash \phi \ot \psi
	&
	\vdash \lnot\lnot\theta
}{\Gamma, \phi \lol \theta \vdash \psi}
\end{align*}
\hrule\vspace{1ex}
	\caption{Lawvere quantale rules}
	\label{tab:lawvereRules}
\end{subfigure}
\caption{The natural deduction system of $\LL$}
\label{fig:LL}
\end{figure}

\begin{figure}[tp]
\centering
\begin{subfigure}{0.858\textwidth}
\hrule
\begin{gather*}
\begin{aligned}
\doubleinfrule[$S_1$]{\phi\vdash\psi & (r > 0)}{r*\phi\vdash r*\psi}
&\quad&\quad&
\infrule[$S_2$]{}{\vdash r*(s*\phi) \lollol (rs)*\phi} 
\\[1ex]
\infrule[$S_3$]{}{\vdash \phi \lollol 1*\phi}
&\quad&\quad&
\infrule[$S_4$]{{\bowtie} \in \{\land,\lor,\otimes, \lol\}}{\vdash r*(\phi\bowtie\psi) \lollol r*\phi\bowtie r*\psi } 
\\[1ex]
\infrule[$S_5$]{}{\vdash 0*\phi }
&\quad&\quad&
\infrule[$S_6$]{}{\vdash (r+s)*\phi \lollol r*\phi\ot s*\phi }
\\[1ex]
\infrule[$S_7$]{}{r*\bot \vdash \bot }
&\quad&\quad&
\infrule[$S_8$]{}{\vdash (r\dotdiv s)*\phi \lollol (r*\phi\lol s*\phi) }
\end{aligned}
\\[1ex]
\infrule[$S_9$]{}{\vdash r*\phi\land s*\phi \lollol \max\{r,s\}*\phi  }
\\[1ex]
\infrule[$S_{10}$]{}{\vdash r*\phi \lor s*\phi \lollol \min\{r,s\}*\phi  }
\end{gather*}
\hrule
\end{subfigure}
	\caption{The natural deduction system of $\LLs$ (scalar product rules)}
	\label{tab:scalarRules}
\end{figure}

\begin{definition}
Let $S$ be a set of judgements in $\mathcal{L} \in \{ \LL, \LLun, \LLs\}$. 
We say that a judgement $\gamma$ is \emph{provable from} 
(or \emph{deducible from}) $S$ in $\mathcal{L}$
(in symbols $S \Vdash_{\mathcal{L}} \gamma$), if there exists a 
sequence $\gamma_1, \dots, \gamma_n$ of judgements 
ending in $\gamma$ whose members are either 
an axiom of $\mathcal{L}$, or a member of $S$, or it follows from some preceding members of the sequence by using the inference rules in $\mathcal{L}$.
A sequence $\gamma_1, \dots, \gamma_n$ as above is called a \emph{proof}.

A judgement $\gamma$ is a \emph{theorem} of $\mathcal{L}$ if it is provable in
$\mathcal{L}$ from the empty set (denoted $\emptyset \Vdash_{\mathcal{L}} \gamma$, or
$\Vdash_{\mathcal{L}} \gamma$). 
\end{definition}

%
\begin{theorem}[Soundness of \LLQ]\label{soundness}
Let $\mathcal{L} \in \{ \LL, \LLun, \LLs\}$. If a judgement $\gamma$ is 
provable from $S$ in $\mathcal{L}$, then $\gamma$ is a semantic consequence 
of $S$ in $\mathcal{L}$
(in symbols, $S \Vdash_{\mathcal{L}} \gamma$ implies $S \models_{\mathcal{L}} \gamma$).
\end{theorem}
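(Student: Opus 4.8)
The plan is to prove soundness by induction on the structure of the proof witnessing $S \Vdash_{\mathcal{L}} \gamma$, i.e.\ on the length of the deriving sequence $\gamma_1, \dots, \gamma_n = \gamma$. Fix a model $\M$ that satisfies every judgement in $S$; I must show $\M \models_{\mathcal{L}} \gamma$. Recall that $\M \models (\Gamma \vdash \psi)$ means $\sum_{\phi \in \Gamma} \M(\phi) \geq \M(\psi)$, where the order on $\extreals$ is reversed, so ``$\geq$'' is really the lattice order pointing towards the top element $0$. Each $\gamma_i$ is either an axiom, a member of $S$, or follows from earlier members by an inference rule. The member-of-$S$ case holds by the choice of $\M$; the remaining work is to verify, for each inference rule of the system, that if $\M$ satisfies the hypotheses then it satisfies the conclusion. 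This is a finite, rule-by-rule check, which by the incremental presentation $\LL \subseteq \LLun \subseteq \LLs$ breaks into three layers.

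First I would discharge the rules of $\LL$ (Figures~\ref{tab:structRules}, \ref{tab:latticeRules}, \ref{tab:lawvereRules}). The structural rules are immediate: (\textsc{id}) is $\M(\phi) \geq \M(\phi)$; (\textsc{weak}) uses that adding the summand $\M(\psi) \geq 0$ on the left only increases the left-hand sum (recall $0$ is the top, so all values are $\geq 0$ in the reversed order); (\textsc{perm}) is commutativity of $+$; and (\textsc{cut}) chains $\sum_\Gamma \M \geq \M(\phi)$ with $\sum_\Delta \M + \M(\phi) \geq \M(\psi)$ by monotonicity of $+$ and transitivity. The lattice rules reduce to the defining (co)universal properties of $\max$ and $\min$ under the reversed order, and the Lawvere-specific rules unfold to elementary arithmetic facts about $+$ and $\dotdiv$ on $\extreals$: for instance $(\ot_1)$ and $(\ot_2)$ are just associativity of $+$ and the adjunction $a + b \geq c \iff b \geq c \dotdiv a$ that characterises $\lol$, while (\textsc{tot}) and (\textsc{wem}) express that $\extreals$ is totally ordered. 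The one place demanding care is the pair $(\lol_2)$, $(\lol_3)$ and the rule $(\ot_3)$: these carry side conditions such as $\vdash \lnot\lnot\phi$ (meaning $\M(\phi) < \infty$) or are nonlinear, and the truncation in $\dotdiv$ makes the adjunction behave subtly at $\infty$, so I would check these cases explicitly. For $(\ot_3)$, $\M(\phi)+\M(\phi) \geq \M(\psi)+\M(\psi)$ gives $2\M(\phi) \geq 2\M(\psi)$, whence $\M(\phi) \geq \M(\psi)$ by halving in $\extreals$.

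Next I would treat the rule (\textsc{one}) of $\LLun$, which is sound because $\M(\un)=1$, so $\M(\un \lor \lnot\un) = \min\{1, \infty \dotdiv 1\} = \min\{1,\infty\} = 1 > 0 = \M(\bot)$ — vacuously sound since its hypothesis $\vdash \un \lor \lnot\un$ is never satisfied (no model makes $\min\{1,\infty\}=0$), matching the stated reading that $0 \geq 1$ and $1 \geq \infty$ are inconsistencies. Finally, for $\LLs$ I would verify the scalar rules $(S_1)$--$(S_{10})$ of Figure~\ref{tab:scalarRules} against $\M(r*\phi) = r\,\M(\phi)$; each becomes a homogeneity or distributivity identity for scalar multiplication by $r \in \preals$ over $+$, $\max$, $\min$, and $\dotdiv$, e.g.\ $(S_6)$ is $(r+s)\M(\phi) = r\M(\phi) + s\M(\phi)$ and $(S_9)$ is $\max\{r,s\}\M(\phi) = \max\{r\M(\phi), s\M(\phi)\}$.

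The main obstacle I anticipate is not any single rule but the uniform bookkeeping of the reversed order together with the partiality of arithmetic at $\infty$: because $\infty$ is the bottom element, several identities (distribution of $r*{-}$ over $\dotdiv$ in $(S_8)$, the adjunction underlying $(\lol_1)$) require separate verification in the boundary cases $r=0$, $\M(\phi)=\infty$, or $s=\infty$, where $\dotdiv$ and $0 \cdot \infty$ must be handled by the explicit definitions given in Section~\ref{sec:prelim} and the convention $\M(0*\phi)=0$ encoded by $(S_5)$. Once these corner cases are checked, the induction closes and soundness follows for all three logics simultaneously.
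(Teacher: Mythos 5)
The paper states Theorem~\ref{soundness} without proof, and your argument --- induction on the length of the derivation, fixing a model of $S$ and verifying model-preservation rule by rule, with explicit attention to the truncation/$\infty$ corner cases in $(\lol_1)$--$(\lol_3)$, $(\ot_3)$, (\textsc{one}) and the scalar rules --- is exactly the routine check the authors are relying on, and your treatment of the individual rules is correct. One small slip: in the discussion of (\textsc{one}) you write ``$1 > 0 = \M(\bot)$'', but $\M(\bot) = \infty$; this is harmless because the justification you actually use --- that no model satisfies the hypothesis $\vdash \un \lor \lnot\un$ since $\M(\un \lor \lnot\un) = 1 \neq 0$, so the rule is vacuously sound --- is the correct one.
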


\begin{notn}
In \LLQ\ we can prove that $\ot$ is associative, commutative, and with $\top$ 
as null element. Thus, hereafter we will write $\phi_1\ot \dots \ot\phi_n$,
or sometimes $\bigotimes_{i\leq n} \phi_i$, 
without involving unnecessary parenthesis, as the notion is unambiguous.
This includes the case $n = 0$, where we interpret $\bigotimes_{i\leq 0} \phi_i = \top$.
\end{notn}

Note that, any judgement 
$\phi_1, \dots, \phi_n \vdash \psi$ is provably equivalent to 
$\vdash (\phi_1 \ot \dots \ot \phi_n) \lol \psi$. Thus, without loss of generality,
we may assume judgements are always of the form $\vdash \phi$, for some $\phi$.

\subsection{Deduction Theorems}
Classical and intuitionistic logics both enjoy the \emph{deduction theorem}: 
if $\phi, \psi$ are formulas and $S$ a set of judgements, $\vdash \psi$ is provable from 
$S \cup \{\vdash \phi\}$ iff $\vdash \phi \to \psi$ is provable from $S$.
In \LLQ\ ---similarly to other substructural logics without a cancellation rule--- 
(the left-to-right implication) does not hold.
\begin{fact}[Failure of the deduction theorem] 
Consider the formulas $\phi \coloneq \eta \land((\eta\ot\rho)\lol\theta)$ 
and $\psi \coloneq \rho \lol \theta$. Then $\vdash \psi$ is provable from 
$\{\vdash \phi\}$, as follows.
\begin{equation*}
\infrule[$\ot_2$]{
	\infrule[cut]{ 
		\infrule[$\land_3$]{\vdash \phi}{\vdash \eta}
		&
		\infrule[$\land_3$ ; $\ot_2$ ; $\ot_1$]{\vdash \phi
			}{\eta, \rho \vdash \theta}
	}{ \rho \vdash \theta}
}{\vdash \psi}
\end{equation*}
But $\vdash \phi \lol \psi$ is not provable, because otherwise, from the soundness, it
should be a tautology and it is not: consider the model $\M$ such that
$\M(\eta) = \frac14$, $\M(\rho) =0$ and $\M(\theta) = 1$.  
\end{fact}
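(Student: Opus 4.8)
The plan is to establish the two halves of the Fact separately, since it asserts both a positive provability claim and a negative one.

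For the \textbf{positive part} ($\vdash \psi$ provable from $\{\vdash\phi\}$), I would verify that the displayed derivation is actually valid under the stated inference rules. The skeleton is clear: from $\vdash\phi$ I extract $\vdash\eta$ by a single application of ($\land_3$), using $\phi = \eta\land((\eta\ot\rho)\lol\theta)$ read as $\eta'\land\eta''$ so that ($\land_3$) yields the \emph{second} conjunct's dual—here I must be careful, since ($\land_3$) as stated concludes $\Gamma\vdash\psi$ from $\Gamma\vdash\phi\land\psi$, giving the right conjunct. To get $\vdash\eta$ (the left conjunct) I would first use ($\land_1$)-style reasoning or commutativity of $\land$ (provable, via the lattice rules) to swap conjuncts, then apply ($\land_3$). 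The right branch needs $\eta,\rho\vdash\theta$: starting from $\vdash\phi$, apply ($\land_3$) to isolate $\vdash(\eta\ot\rho)\lol\theta$, then ($\ot_2$) read backwards turns $\vdash(\eta\ot\rho)\lol\theta$ into $\eta\ot\rho\vdash\theta$ (recalling $\M(\phi\lol\psi)=\M(\psi)\dotdiv\M(\phi)$ and that $\vdash\chi\lol\chi'$ corresponds to $\chi\ot\rho\vdash\theta$-style judgements), and finally ($\ot_1$) decomposes $\eta\ot\rho\vdash\theta$ into $\eta,\rho\vdash\theta$. A ($\text{cut}$) on $\vdash\eta$ against $\eta,\rho\vdash\theta$ produces $\rho\vdash\theta$, and a final ($\ot_2$)/adjunction step converts $\rho\vdash\theta$ into $\vdash\rho\lol\theta = \vdash\psi$. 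I would spell out each rule instance to confirm the antecedent lists match exactly.

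For the \textbf{negative part} ($\vdash\phi\lol\psi$ not provable), the argument is purely semantic via soundness (Theorem~\ref{soundness}). If $\vdash\phi\lol\psi$ were provable from $\emptyset$, soundness forces it to be a tautology, i.e.\ $\M(\phi\lol\psi)=0$ for every model $\M$, equivalently $\M(\psi)\dotdiv\M(\phi)=0$, i.e.\ $\M(\phi)\geq\M(\psi)$ for all $\M$. I would then exhibit the counterexample model with $\M(\eta)=\tfrac14$, $\M(\rho)=0$, $\M(\theta)=1$ and compute both sides. We have $\M(\eta\ot\rho)=\tfrac14+0=\tfrac14$, so $\M((\eta\ot\rho)\lol\theta)=\M(\theta)\dotdiv\M(\eta\ot\rho)=1\dotdiv\tfrac14=\tfrac34$; hence $\M(\phi)=\max\{\tfrac14,\tfrac34\}=\tfrac34$. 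Meanwhile $\M(\psi)=\M(\theta)\dotdiv\M(\rho)=1\dotdiv 0=1$. Since $\tfrac34 < 1$, we have $\M(\phi)\not\geq\M(\psi)$, so $\M\not\models(\phi\vdash\psi)$ and $\M\not\models(\vdash\phi\lol\psi)$, contradicting tautologyhood. This refutes provability.

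The \textbf{main obstacle} is entirely in the positive part: matching the displayed proof tree to the precise formulations of the rules, especially the direction and context-handling of ($\land_3$) (which extracts only the right conjunct, so conjunct-swapping must be justified) and the two-way rule ($\ot_2$) used to pass between $\vdash\chi\lol\theta$ and $\chi\vdash\theta$. The semantic counterexample is a routine truncated-subtraction computation and presents no real difficulty; once the arithmetic $1\dotdiv\tfrac14=\tfrac34$ and $\max\{\tfrac14,\tfrac34\}=\tfrac34$ is in hand, soundness closes the argument immediately.
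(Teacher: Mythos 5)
Your proposal is correct and follows essentially the same route as the paper: the positive half validates the displayed derivation (your care about the left/right asymmetry of ($\land_3$) is warranted --- the symmetric variant is derivable, as noted in the appendix), and the negative half is exactly the paper's countermodel, for which you correctly compute $\M(\phi)=\tfrac34<1=\M(\psi)$ so that $\vdash\phi\lol\psi$ fails to be a tautology and soundness finishes the argument.
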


Similarly to other substructural logics like linear or {\L}ukasiewicz logics, for $\LL$
a weaker form of the deduction theorem holds: $\vdash \psi$ is provable from $S \cup \{\vdash \phi\}$ in $\LL$ iff $\vdash n\phi \lol \psi$ is provable from $S$ in $\LL$ for some $n \in \naturals$. This is a consequence of the equivalence of $\LL$ with product logic, for which such a deduction theorem holds~\cite[pg.196]{H96}.
However, this weaker version does not hold in $\LLun$ and $\LLs$.
\begin{fact}[Failure of the weak deduction theorem] \label{fact:wdt}
Consider the formulas $\phi \coloneq \un \lor \lnot\un$ 
and $\psi \coloneq \bot$. Then $\vdash \psi$ is provable from $\{\vdash \phi\}$
using \textsc{(one)}.
But $\vdash n\phi \lol \psi$ is not provable for any $n \in \naturals$, 
since otherwise, using the soundness, there should exist
an $n$ such that $\vdash n\phi \lol \psi$ is a tautology.
However, no model satisfies this judgement. Indeed, for any model $\M$, 
\begin{equation*}
\M(n\phi \lol \psi)
	= \M(\bot) \dotdiv n \min\{ \M(\un), \M(\bot)\} \\
	= \infty \dotdiv n = \infty \,.
\end{equation*}
\end{fact}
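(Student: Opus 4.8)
The plan is to treat the two halves of the claim separately: first the \emph{positive} part, that $\vdash\psi$ is provable from $\{\vdash\phi\}$ in $\LLun$ (hence also in $\LLs$), and then the \emph{negative} part, that $\vdash n\phi\lol\psi$ is provable from the empty theory for no $n\in\naturals$. Together these witness that the bounded-$n$ weak deduction theorem (which does hold for $\LL$) fails once the constant $\un$ and the rule \textsc{(one)} are present.

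For the positive part I would simply match the hypothesis against the rule. The assumed judgement is $\vdash\phi$, which by definition of $\phi$ is $\vdash(\un\lor\lnot\un)$, and this is precisely the single premise of \textsc{(one)}. One application of \textsc{(one)} then yields $\vdash\bot$, i.e.\ $\vdash\psi$. So this direction is a one-line derivation requiring nothing beyond reading off the rule.

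For the negative part the key move is to pass from unprovability to unsatisfiability through the contrapositive of soundness (Theorem~\ref{soundness}): were $\vdash n\phi\lol\psi$ provable from $\emptyset$, it would be a semantic consequence of $\emptyset$, i.e.\ a tautology, so satisfied by every model. It therefore suffices to show the stronger fact that \emph{no} model satisfies it, which is convenient since I need not search for a specific counterexample. First I would evaluate $\M(\phi)$ for an arbitrary model $\M$: since $\lor$ is interpreted as $\min$ (the order on $\extreals$ being reversed) and $\M(\lnot\un)=\infty\dotdiv 1=\infty$, we get $\M(\phi)=\min\{1,\infty\}=1$. Because $n\phi$ abbreviates the $n$-fold tensor and $\ot$ is addition, $\M(n\phi)=n\cdot\M(\phi)=n$ for every $n$ (including $n=0$, where $0\phi=\top$ and $\M(\top)=0$). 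Finally, with $\M(\psi)=\M(\bot)=\infty$ and $\lol$ interpreted by truncated subtraction with inverted arguments, $\M(n\phi\lol\psi)=\M(\psi)\dotdiv\M(n\phi)=\infty\dotdiv n=\infty$ for every finite $n$. Since $\M\models(\vdash\chi)$ iff $\M(\chi)=0$ and $\infty\neq 0$, the judgement $\vdash n\phi\lol\psi$ is satisfied by no model, hence is not a tautology, hence by soundness is not provable.

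There is no genuine deductive obstacle here; the only delicate points are interpretive rather than proof-theoretic, namely remembering that the Lawvere order is reversed (so $\lor$ computes a minimum and $\bot$ evaluates to $\infty$) and that $\lol$ swaps the arguments of the truncated subtraction. The real content is conceptual: a single copy of $\phi$ suffices, via \textsc{(one)}, to derive $\bot$, yet no finite multiple $n\phi$ can internalise this as an implication, precisely because $\M(n\phi)$ stays finite while $\M(\bot)=\infty$, so the truncated subtraction can never be driven down to $0$.
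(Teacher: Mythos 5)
Your proof is correct and follows essentially the same route as the paper: the positive half is a single application of \textsc{(one)} to the hypothesis $\vdash\un\lor\lnot\un$, and the negative half uses the contrapositive of soundness together with the computation $\M(n\phi\lol\psi)=\infty\dotdiv n=\infty\neq 0$, showing the judgement is satisfied by no model. Your version merely spells out the intermediate steps ($\M(\lnot\un)=\infty$, $\M(\phi)=\min\{1,\infty\}=1$, the $n=0$ case) that the paper's one-line calculation leaves implicit.
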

Fact~\ref{fact:wdt} shows that $\LLun$ is not equivalent to any of the
  extensions of product logic with truth-constants proposed in~\cite{EG00}, for which such
  a weak deduction theorem does hold.

\subsection{Totality Lemma}
\label{sec:totalityLemma}

We call pairs of judgements of the following form
\begin{align*}
( \vdash\phi\lol\psi \; , \; \vdash\psi\lol\phi )
&& \text{ or } &&
(\vdash\lnot\phi \; , \; \vdash\lnot\lnot\phi )
\end{align*}
\emph{supplementary judgements}. These judgements, when used as a pair as above,
explore different alternatives for the interpretations of \LLQ-formulas.
The first pair of judgements explores different ordering alternatives; 
the second is choosing 
either a finite or infinite interpretation for a formula.

Supplementary judgements play a special r\^ole in reasoning, 
as clearly stated in the following lemma.
\begin{lemma}[Totality Lemma]
\label{totalityl1}
The following statements are provable in all \LLQ.
\begin{enumerate}[itemsep={1ex}, topsep={1ex}]

\item 
$\text{If }
\Big[
\dfrac{\vdash\rho~~~~\phi\vdash\psi}{\vdash\theta}
\text{ and }
\dfrac{\vdash\rho~~~~\psi\vdash\phi}{\vdash\theta}
\Big]
\text{ then }
\dfrac{\vdash\rho}{\vdash\theta}$.


\item 
$\text{If }
\Big[
\dfrac{\vdash\rho~~~~\vdash\lnot\phi}{\vdash\theta}
\text{ and }
\dfrac{\vdash\rho~~~~\vdash\lnot\lnot\phi}{\vdash\theta}
\Big]
\text{ then }
\dfrac{\vdash\rho}{\vdash\theta}$.
\end{enumerate}	
\end{lemma}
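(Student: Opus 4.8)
The plan is to read both items as \emph{proof-by-cases} principles, each justified by one of the two ``choice'' axioms of \LLQ: the order axiom (\textsc{tot}) for item~1 and the finiteness axiom (\textsc{wem}) for item~2. First I would put every supplementary premise into object-level form, using the remark that $\phi\vdash\psi$ is provably equivalent to $\vdash(\phi\lol\psi)$; the two alternatives then become formulas $\alpha,\beta$ with $\alpha=\phi\lol\psi,\ \beta=\psi\lol\phi$ in item~1 and $\alpha=\lnot\phi,\ \beta=\lnot\lnot\phi$ in item~2. The engine of the argument is that in both cases $\vdash\alpha\lor\beta$ is a theorem --- this is precisely (\textsc{tot}), respectively (\textsc{wem}) --- and that the two alternatives are \emph{jointly exhaustive}: in every model at least one of $\M(\alpha),\M(\beta)$ equals the top value $0$.

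For item~2 I expect a clean run. Here $\lnot\phi$ and $\lnot\lnot\phi$ are \emph{two-valued}, taking only the values $0$ and $\infty$, and they are \emph{complementary}: never both $0$ and never both $\infty$. This makes a restricted deduction theorem available for them: from the ambient $\vdash\rho$ and the first hypothesis one can obtain the single sequent $(\lnot\phi)\vdash\theta$, because when $\M(\lnot\phi)=\infty$ that sequent holds vacuously and when $\M(\lnot\phi)=0$ the hypothesis supplies $\M(\theta)=0$; symmetrically for $(\lnot\lnot\phi)\vdash\theta$. Having both branch sequents, I would apply $(\lor_1)$ to get $\big((\lnot\phi)\lor(\lnot\lnot\phi)\big)\vdash\theta$ and then (\textsc{cut}) against (\textsc{wem}) to conclude $\vdash\theta$.

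Item~1 is where the real difficulty sits, and it is exactly the failure of the deduction theorem recorded in Fact~\ref{fact:wdt}. The guards $\phi\lol\psi$ and $\psi\lol\phi$ are \emph{not} two-valued and they \emph{overlap} --- both are $0$ on the diagonal $\M(\phi)=\M(\psi)$ --- so the branch sequent $(\phi\lol\psi)\vdash\theta$ is \emph{not} derivable from the first hypothesis alone (a model with $\M(\phi)<\M(\psi)$ and $\M(\theta)$ large refutes it), and the restricted deduction theorem used above is unavailable. The sound substitute one can aim for is a consequent-disjunction out of each hypothesis: $\vdash\theta\lor(\psi\lol\phi)$ from the first and $\vdash\theta\lor(\phi\lol\psi)$ from the second, both sound because totality of the order forces the inactive guard to $0$. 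Combining them by $(\land_2)$ and distributing (legitimate since $\extreals$ is totally ordered) yields $\vdash\theta\lor(\phi\lollol\psi)$, whose residual disjunct $\phi\lollol\psi$ is precisely the diagonal, the region on which \emph{both} hypotheses fire. The hard part --- and the step I would scrutinise most --- is discharging this diagonal without a deduction theorem: since one may not promote the antecedent $\phi\lollol\psi$ to an assumption, the closing argument must instead splice the two given derivations along (\textsc{tot}), supplying each supplementary leaf from the matching disjunct so that neither $\phi\vdash\psi$ nor $\psi\vdash\phi$ is ever asserted outright. Verifying that this splice is legitimate in the contraction-free, $\un$-bearing system is the crux of the proof.
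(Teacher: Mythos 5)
Your high-level instinct---read both items as proof-by-cases principles powered by (\textsc{tot}) and (\textsc{wem})---matches the paper, but the execution has a genuine gap: every step that converts one of the meta-level hypotheses into an object-level judgement is justified by a semantic case analysis rather than by a derivation. Concretely, you obtain $(\lnot\phi)\vdash\theta$ in item~2 by arguing ``when $\M(\lnot\phi)=\infty$ the sequent holds vacuously, and when $\M(\lnot\phi)=0$ the hypothesis supplies $\M(\theta)=0$'', and you obtain $\vdash\theta\lor(\psi\lol\phi)$ in item~1 because it is ``sound''. Soundness of a judgement does not make it provable---that inference is exactly completeness, and this lemma is one of the ingredients used to \emph{prove} completeness, so the argument is circular. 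The ``restricted deduction theorem'' for two-valued formulas that you lean on in item~2 is never derived (it would itself be a nontrivial metatheorem, presumably requiring an induction on derivations), and for item~1 you explicitly concede that the closing step---discharging the diagonal disjunct $\phi\lollol\psi$---is open. So neither item is actually proved.

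The missing idea is that one should never internalise the supplementary judgement as an \emph{antecedent} at all. The paper first rewrites each hypothesis in the equivalent form $\dfrac{\vdash\rho\land\alpha}{\vdash\theta}$ (with $\alpha=\phi\lol\psi$, $\psi\lol\phi$, $\lnot\phi$ or $\lnot\lnot\phi$, using $\land_2$/$\land_3$ to pass between ``$\vdash\rho$ and $\vdash\alpha$'' and ``$\vdash\rho\land\alpha$''), and then applies a purely syntactic meta-rule (Lemma~\ref{meta4}): if $\vdash A$ derives $\vdash\theta$ and $\vdash B$ derives $\vdash\theta$, then $\vdash A\lor B$ derives $\vdash\theta$. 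Taking $A=\rho\land\alpha$, $B=\rho\land\beta$, distributing to $\vdash\rho\land(\alpha\lor\beta)$, and feeding in the axiom $\vdash\alpha\lor\beta$ ((\textsc{tot}) for item~1, (\textsc{wem}) for item~2) closes both items by one and the same argument; the overlap of the two guards on the diagonal, which you identify as the crux, is harmless because neither $\phi\vdash\psi$ nor $\psi\vdash\phi$ is ever asserted outright---they occur only as disjuncts inside a theorem. If you want to repair your proposal, the work to do is to prove Lemma~\ref{meta4} and its auxiliaries (monotonicity of derivability under $-\lor\theta$ and $-\land\theta$, and pairing) syntactically; everything else then follows uniformly for both items.
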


The totality lemma, along with others, allows us to prove the distributivity of the tensor
product with respect to $\land$ and $\lor$, respectively. 

\section{Theories and Models}
\label{sec:theories}

In what follows, we use $\mathcal{L}$ to range over $\{\LL,\LLun,\LLs\}$, 
as the following definitions are uniform for all \LLQ.

A \emph{theory} $\T$ in $\mathcal{L}$ is a set of judgements that is
deductively closed (in symbols, $\T \Vdash_{\mathcal{L}} \gamma$ implies 
$\gamma \in \T$). An \emph{axiomatic theory} in $\mathcal{L}$ is a theory 
for which there exists a set of judgements, called \emph{axioms}, such that 
all the judgements in the theory can be proven in $\mathcal{L}$ from the axioms;
it is \emph{finitely axiomatized} if it admits a finite set of axioms.

If $\T$ and $\T'$ are two theories in $\mathcal{L}$ such that $\T\subseteq \T'$, 
we say that $\T'$ is a \emph{extension} of $\T$; it is a \emph{proper extension} 
if $\T\subsetneq \T'$.

A theory $\T$ in $\mathcal{L}$ is \emph{disjunctive}, if for any formulas
$\phi,\psi\in\mathcal{L}$, $\vdash\phi\lor\psi\in\T$ implies that either $\vdash\phi\in\T$
or $\vdash\psi\in\T$. It is immediate that if $\T$ is a disjunctive theory,
because of (\textsc{tot}) and (\textsc{wem}), we have that for any set of supplementary
judgements in $\mathcal{L}$, at least one of the judgements belongs to $\T$.

A theory in $\mathcal{L}$ is \emph{inconsistent} if it contains $\top\vdash\bot$, 
otherwise it is \emph{consistent}; it is \emph{maximal consistent} if it is consistent 
and all its proper extensions are inconsistent.

A \emph{model of a theory $\T$} is a model $\M$ that satisfies all the judgements of the
theory. If the theory is axiomatized, $\M$ is a model for all the axioms iff it is a model
of the theory.

Note that an assignment of values to all the propositional atoms defines a unique model
since the values of all formulas are given inductively and are determined by the values of
the atomic propositions.

\begin{lemma}\label{max}
In all \LLQ\ the following statements are true.
\begin{enumerate}
\item If a theory has a model, then it is consistent.
\item Any model satisfies a disjunctive consistent theory.
\end{enumerate}
\end{lemma}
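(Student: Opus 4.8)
The plan is to treat the two parts separately: the first is an immediate semantic observation, while the second rests on an explicit construction of a theory from the model together with soundness (Theorem~\ref{soundness}).

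For the first part, I would argue directly against the definition of inconsistency. A theory is inconsistent exactly when it contains $\top\vdash\bot$. So suppose $\M$ is a model of $\T$ and, for contradiction, that $\top\vdash\bot\in\T$. Then $\M\models(\top\vdash\bot)$, which by the semantics of judgements requires $\M(\top)\geq\M(\bot)$, i.e.\ $0\geq\infty$ --- impossible. Hence no model can satisfy $\top\vdash\bot$, so a theory possessing a model cannot contain that judgement and is therefore consistent.

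For the second part, the first step is to pin down which theory is meant: given a model $\M$, I would set $\mathrm{Th}(\M)\coloneq\{\gamma : \M\models\gamma\}$, the set of all judgements satisfied by $\M$, and show that it is a disjunctive consistent theory, which $\M$ satisfies by construction. Three things must be checked. First, $\mathrm{Th}(\M)$ is deductively closed: if $\mathrm{Th}(\M)\Vdash_{\mathcal{L}}\gamma$, then by soundness $\gamma$ is a semantic consequence of $\mathrm{Th}(\M)$, and since $\M$ satisfies every judgement in $\mathrm{Th}(\M)$ it satisfies $\gamma$, so $\gamma\in\mathrm{Th}(\M)$. Second, consistency is immediate from the first part, as $\M$ is a model of $\mathrm{Th}(\M)$ (equivalently, $\top\vdash\bot\notin\mathrm{Th}(\M)$). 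Third, disjunctivity: if $\vdash\phi\lor\psi\in\mathrm{Th}(\M)$, then $\M(\phi\lor\psi)=\min\{\M(\phi),\M(\psi)\}=0$, and since both values are non-negative one of them is $0$, giving $\vdash\phi\in\mathrm{Th}(\M)$ or $\vdash\psi\in\mathrm{Th}(\M)$.

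The only genuinely load-bearing step is the deductive closure, and it is precisely there that soundness is needed: without it, the set of judgements true in $\M$ would not obviously be closed under the inference rules, so $\mathrm{Th}(\M)$ might fail to be a theory at all. The consistency and disjunctivity clauses are direct consequences of the definitions and of the $\min$-interpretation of $\lor$, so I anticipate no real obstacle beyond correctly identifying the intended theory and making the appeal to soundness precise.
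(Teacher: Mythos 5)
Your proof is correct and follows essentially the same route as the paper's: the paper likewise constructs $\T_\M=\{\Gamma\vdash\phi\mid \Gamma\models_\M\phi\}$ and establishes disjunctivity, the only cosmetic difference being that it routes that step through the totality of the order ($\phi\vdash\psi$ or $\psi\vdash\phi$ must hold in $\T_\M$) where you read it off directly from the $\min$-semantics of $\lor$. If anything your write-up is slightly more explicit, since you spell out the deductive closure of $\mathrm{Th}(\M)$ via soundness and observe that part (1) needs only the unsatisfiability of $\top\vdash\bot$ rather than an appeal to derivability; the paper leaves these checks largely implicit.
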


In the case of $\LLs$ we can identify a special class of 
disjunctive consistent theories. 
\begin{definition}
A \emph{diagrammatic theory} is a consistent theory $\T$ of $\LLs$ 
such that for any $p\in\Prop$,
\begin{itemize}
	\item either $p\vdash\bot\in\T$,
	\item or there exists $\e\in\preals$ such that $\e\vdash p\in\T$ and $p\vdash \e\in\T$.
\end{itemize}
\end{definition}

It is not difficult to observe that in a diagrammatic theory, for any $\phi\in\LLs$,
either $\phi\vdash\bot\in\T$, or there exists $\e\in\preals$ such that 
$\e\vdash \phi$,\;  $\phi\vdash \e\in\T$. 

\begin{lemma}\label{characteristictheory}
In $\LLs$ we have that
\begin{enumerate}
  \item Every diagrammatic theory has a unique model. 
  \item Every model satisfies a unique diagrammatic theory.
  \item A theory is diagrammatic iff it is maximal consistent.
\item Every disjunctive consistent theory has a unique diagrammatic extension; and a unique model.
	
\end{enumerate}
\end{lemma}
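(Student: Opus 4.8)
The common engine of all four parts is a \emph{value lemma} for diagrammatic theories. Given a diagrammatic theory $\T$, I would define an assignment $\M_\T$ on atoms by $\M_\T(p)=\infty$ when $p\vdash\bot\in\T$ and $\M_\T(p)=\e$ when $\e\vdash p,\,p\vdash\e\in\T$; this is well defined, and $\e$ is unique, since two distinct pinnings of $p$ would, after a \textsc{(cut)} and the scalar rules, force a positive constant $\vdash r*\un$ ($r>0$) into $\T$, which is impossible by the following \emph{sublemma}: a consistent theory never contains $\vdash r*\un$ for $r>0$. The sublemma is proved by rescaling $\vdash r*\un$ via the scalar rules $(S_1)$, $(S_2)$, $(S_3)$ (using $(S_6)$ to amplify when $r<1$) to obtain $\vdash\un$, then $\vdash\un\lor\lnot\un$ by $(\lor_2)$, and finally $\top\vdash\bot$ by \textsc{(one)}. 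I would then show by induction on $\phi$ that $\M_\T$ is \emph{computed} by $\T$: $\M_\T(\phi)=\infty$ iff $\phi\vdash\bot\in\T$, and otherwise $\M_\T(\phi)=\e$ iff $\e\vdash\phi,\,\phi\vdash\e\in\T$. The inductive cases are routine applications of the scalar rules $(S_1)$--$(S_{10})$ (e.g.\ $(S_6)$ for $\ot$, and $(S_9)$--$(S_{10})$ with the lattice rules for $\land,\lor$), with $(\ot_1)$, $(S_7)$ and weakening handling the $\bot$-cases; exhaustiveness of the two cases is the extended diagrammatic property already noted after the definition.

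Parts 1 and 2 then follow quickly. For Part 1, any model $\M\models\T$ is forced on atoms to coincide with $\M_\T$, so a model is unique if it exists; and $\M_\T$ \emph{is} a model, because for every $\vdash\phi\in\T$ the value lemma pins $\phi$, and \textsc{(cut)} against $\phi\vdash\e$ (resp.\ $\phi\vdash\bot$) yields $\vdash\e*\un\in\T$ (resp.\ $\top\vdash\bot\in\T$); consistency and the sublemma then force $\e=0$, i.e.\ $\M_\T(\phi)=0$. For Part 2, given a model $\M$ put $\T_\M=\{\gamma:\M\models\gamma\}$. It is deductively closed by soundness (Theorem~\ref{soundness}), consistent since $\M\not\models(\top\vdash\bot)$, and diagrammatic by reading $\M$ off on atoms. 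For uniqueness, if $\T$ is diagrammatic and $\M\models\T$ then $\T\subseteq\T_\M$, while conversely any $\vdash\phi\in\T_\M$ has $\M(\phi)=0$, so the extended diagrammatic property of $\T$ gives $0*\un\vdash\phi\in\T$, which together with the theorem $\vdash0*\un$ (instance of $(S_5)$) yields $\vdash\phi\in\T$ by \textsc{(cut)}; hence $\T=\T_\M$.

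The forward direction of Part 3 is direct: a proper consistent extension $\T'\supsetneq\T$ of a diagrammatic $\T$ contains some $\vdash\phi\notin\T$, and the extended diagrammatic property places either $\phi\vdash\bot$ or $\phi\vdash\e*\un$ (with $\e>0$) in $\T\subseteq\T'$; cutting against $\vdash\phi$ produces $\top\vdash\bot$ or a positive constant, both impossible in the consistent $\T'$. Thus diagrammatic theories are maximal consistent.

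The substance of the lemma---and what I expect to be the main obstacle---is the converse of Part 3 together with the existence claims of Part 4, where each atom's value must be \emph{extracted} rather than read off. The plan is a Dedekind-cut model: set $\e_0^{(p)}=\sup\{\e:p\vdash\e*\un\in\T\}$ and let $\M_0(p)=\e_0^{(p)}$, with $\M_0(p)=\infty$ when $p\vdash\bot\in\T$. For a \emph{disjunctive} consistent theory---the hypothesis of Part 4---I would prove a truth lemma, $\M_0\models\T$, by induction on formulas: disjunctivity is exactly what validates the $\lor$ step, while \textsc{(tot)} together with disjunctivity makes $\{\e:p\vdash\e*\un\in\T\}$ and $\{\e:\e*\un\vdash p\in\T\}$ cover $\preals$, so the cut is well placed. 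Granting $\M_0\models\T$, Part 4 follows: $\T_{\M_0}$ is, by Part 2, the unique diagrammatic theory satisfied by $\M_0$, hence the unique diagrammatic extension of $\T$, with unique model $\M_0$ (Part 1). For the converse of Part 3 one reduces a maximal consistent $\T$ to this case, via $\T\subseteq\T_{\M_0}$ and maximality forcing $\T=\T_{\M_0}$. The two genuinely hard points are these: first, showing that maximal consistent theories are disjunctive is awkward precisely because the deduction theorem fails (Fact~\ref{fact:wdt}), so the textbook argument ``$\T\cup\{\vdash\phi\}$ inconsistent hence $\T\Vdash\lnot\phi$'' is unavailable, and---since strong completeness is known to fail---an arbitrary consistent theory need \emph{not} extend to a disjunctive one, so maximality must be used essentially rather than by a blanket Lindenbaum step; second, the placement and \emph{attainment} of the cut (that $\e_0^{(p)}$ lies on both sides, pinning $p$ to an actual element of $\preals$ rather than to a gap) relies on the Lawvere quantale being totally ordered and on an Archimedean/compactness feature of the finitary calculus, namely that an inconsistency surfacing at the limit value is already witnessed at nearby values.
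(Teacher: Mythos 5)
Your treatment of parts 1, 2 and the forward half of part 3 is sound and follows essentially the paper's route: define $\M_\T$ from the atomic pinnings, observe that distinct pinnings or a positive constant $\vdash r*\un$ force inconsistency, and get maximality from the fact that adjoining any $\vdash\phi\notin\T$ lets you cut against $\phi\vdash\e$ with $\e>0$ (or against $\phi\vdash\bot$). If anything you are more explicit than the paper, whose existence claim in part 1 is largely asserted. The genuine gap is that the two points you flag as ``genuinely hard'' in 3($\Leftarrow$) and 4 are exactly where you stop, and the first of them has a concrete resolution that you are missing. To show that a maximal consistent $\T$ is disjunctive you do not need to internalize ``$\T\cup\{\vdash\phi\}$ is inconsistent'' as $\T\Vdash\lnot\phi$: a refutation of $\T\cup\{\vdash\phi\}$ uses only finitely many members $\vdash\rho_1,\dots,\vdash\rho_k$ of $\T$, hence is a proof of $\vdash\bot$ from the single judgement $\vdash\rho\land\phi$ with $\rho=\rho_1\land\dots\land\rho_k$, and likewise for $\psi$. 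The meta-level disjunction elimination of Lemma~\ref{meta4} (the machinery behind the Totality Lemma~\ref{totalityl1}: if $\vdash\theta$ is provable from $\{\vdash\alpha\}$ and from $\{\vdash\beta\}$ then it is provable from $\{\vdash\alpha\lor\beta\}$) then combines the two refutations into a proof of $\vdash\bot$ from $\vdash\rho\land(\phi\lor\psi)$, which lies in $\T$ --- no deduction theorem required. This is the step the paper uses and your proposal lacks.

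For part 4, your Dedekind-cut construction is the right idea, and your ``attainment'' worry partly dissolves: disjunctivity together with (\textsc{tot}) gives, for \emph{every} $r\in\preals$ including the cut value $v_p$ itself, one of $p\vdash r$ or $r\vdash p$ in $\T$; setting $\M_0(p)=v_p$ (or $\infty$ when the upper set is empty) satisfies whichever of the two judgements $\T$ actually contains, so no Archimedean or compactness feature of the calculus is needed at the atomic level. The diagrammatic extension is then $\T_{\M_0}$, with uniqueness of both the extension and the model following from parts 1 and 2 once $\M_0\models\T$ is established. What does remain --- and what you correctly sense is the substance --- is the induction over compound formulas showing that every judgement of $\T$ is satisfied by $\M_0$ (relating $v_{\phi\ot\psi}$ to $v_\phi+v_\psi$, handling $\lol$, and treating the infinite case via the finitist/alethic dichotomy). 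You leave this open; to be fair, the paper's own write-up of part 4 is also only a sketch at this point, but a complete proof must supply it.
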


\section{Normal Forms}
\label{sec:normalform}
In this section, we prove that any finitely axiomatized theory can be presented 
in a normal form, where all the axioms have a specific syntactic format.

There are some important classes of judgements that play a crucial role in our development:
\begin{align*}
\tag{tautological}
&\bot\vdash\phi \;\mid\; \phi\vdash\top \\
\tag{inconsistent}
&\top\vdash\bot \;\mid\; \top\vdash\un \;\mid\; \un\vdash\bot \\
\tag{assertive} 
&\underbrace{\top\vdash p \;\mid\;  p\vdash\bot}_{\text{alethic}}
\;\mid\; 
\underbrace{\vdash\lnot\lnot p}_{\text{finitist}} 
\\[0.5ex]
&\textstyle
(\bigotimes_{i \leq n} r_i*p_i ) \ot r*\un \vdash (\bigotimes_{j \leq m} s_i*q_i ) \ot s*\un
\tag{affine}
\end{align*}
where $p, p_i, q_j \in \Prop$ are atomic propositions. 
In the case of $\LL$ and $\LLun$ the coefficient in an affine judgement are positive integers, and for $\LL$ the term involving $\un$ is not present. 

\begin{definition}[Normal form]
A judgement is in \emph{normal form} if it is either tautological,  inconsistent, assertive (finitist or alethic), or affine.
\end{definition}


\begin{notn}
Since in $\LLs$ $\ot$ commutes with all the other logical connectives, we assume hereafter that in all the formulas the scalar products 
guard the atomic propositions or the constants, and no other scalar products appear in a formula.
\end{notn}

\begin{definition}\label{decisiontree}
A theory in \LLQ\ is \emph{normal}, or it has a \emph{normal axiomatization}, 
if it admits a finite axiomatization such that
\begin{itemize}
  \item every axiom is in normal form;
  \item no atomic proposition that occurs in an alethic axiom appears 
  in any other axiom;
  \item there is an assertive judgement for each atomic proposition 
  that appears in the axioms.
\end{itemize}
\end{definition}

For \LLQ\ it is not possible, in general, to convert a judgement into a model-theoretic equivalent judgement in normal form. It is however possible to associate with any judgement $\gamma$, a finite (possibly empty) set of normal theories 
$\T_1, \dots, \T_n$, such that:
\begin{align*}
\M \models \gamma 
&&\text{iff}&& 
\text{for some $i\leq n$, } \, \M \models \T_i \, 
 \,. 
\end{align*}
In such a case, we call the set $\{\T_1, \dots, \T_n \}$ of theories
 a \emph{normal representation of the judgement} $\gamma$.
Similarly, a \emph{normal representation of a finite set of judgements $V$} (or of a finitely axiomatized theory) is a finite (possibly empty) set of normal theories 
$\T_1, \dots, \T_n$, such that, 
$\M$ is a model of $V$ iff it is a model for at least one of the theories 
$\T_1, \dots, \T_n$.

\subsection{Normalization Algorithm}
There exists a simple algorithm that allows us to compute, for any given 
finite set of judgements $V$, its normal representation $\N(V)$.
The details of this algorithm are in the appendix. In what follows, we sketch how 
this algorithm works on a couple of examples that illustrate the main 
subtleties of the algorithm.

Suppose we have a finite set $V$ of judgements. If $V$ is not already in 
normal form, we can use the theorems of \LLQ\ to simplify the judgements and 
eventually convert them to a normal form.
In doing this, in some cases, we will have to use pairs $(\gamma_1, \gamma_2)$ 
of supplementary judgements (see Section~\ref{sec:totalityLemma}) that will be treated as new axioms.
This is done when the conversion cannot
progress without extra assumptions. One can think of it as ``a proof
by cases'' resulting in two new separate set of judgements $V_1$ and $V_2$,
each containing one of the supplementary judgements.
The invariant preserved in each reduction step is that, for $i \in \{1,2\}$
\begin{itemize}
  \item all the judgements in $V_i$ are provable 
  from $V \cup \{ \gamma_i \}$;
  \item all the judgements in $V$ are provable from $V_i$.
\end{itemize}

\begin{example} \label{ex:rule3}
Let $\gamma = \theta \vdash (\phi \lor \psi) \ot \rho$ be the judgement we
would like to reduce to normal form. The disjunction occurring in $\gamma$
is problematic as it prevents $\gamma$ to be provably equivalent to another
(single) judgement in normal form. However, by using the supplementary 
hypotheses $\psi \vdash \phi$ and $\phi \vdash \psi$, we can split the 
reduction by cases and obtain $V_1 = \{ \phi\vdash\psi, \theta \vdash \psi\ot\rho\}$
and $V_2 = \{ \psi\vdash\phi, \theta\vdash \phi\ot\rho\}$, two sets of judgements
on which each element is (at least) one step closer to be in normal form 
(Fig.~\ref{fig:rule3}). 
Note that the invariant described above is preserved.
\end{example}

\begin{figure}[tb]

\centering
\begin{subfigure}{0.3\textwidth}
\begin{tikzpicture}
	[
	grow                    = right,
	sibling distance        = 3em,
	level distance          = 6.3em,
	edge from parent/.style = {draw, -latex},
	every node/.style       = {font=\scriptsize},
	sloped
	]
	\node[root] {$\theta\vdash (\phi\lor\psi)\ot\rho$}
	child { node [env, label=right:$V_2$] 
		{$\phi\vdash\psi$\\$\theta\vdash \psi\ot\rho$}
		edge from parent node [below] {} }
	child { node [env,label=right:$V_1$] 
		{$\psi\vdash\phi$\\$\theta\vdash \phi\ot\rho$}
		edge from parent node [above] {} };
\end{tikzpicture}
\caption{}
\label{fig:rule3}
\end{subfigure}
\begin{subfigure}{0.5\textwidth}
\begin{tikzpicture}
	[
	grow                    = right,
	sibling distance        = 5em,
	level distance          = 6.3em,
	edge from parent/.style = {draw, -latex},
	every node/.style       = {font=\footnotesize},
	sloped
	]
\node [root] {$\theta\vdash (\phi\lol\psi)\ot\rho$}
[child anchor=west]
child { node [env, label=right:$W_1$] {$\phi\vdash\psi$\\$\theta\vdash\rho$}
	edge from parent node [below] {} }
child { node [env] {$\psi\vdash\phi$\\$\theta\vdash (\phi\lol\psi)\ot\rho$}
	child { node [env] {$\vdash\lnot\psi$\\$\psi\vdash\phi$\\
			  	    $\theta\vdash (\phi\lol\psi)\ot\rho$}
		child { node [env, label=above:$W_4$] {$\vdash\lnot\phi$\\
		$\vdash\lnot\psi$\\
					    $\theta\vdash\rho$}
			edge from parent node [below] {} }
		child { node [env, label=above:$W_3$] {$\vdash\lnot\lnot\phi$\\
					    $\vdash\lnot\psi$\\$\vdash\lnot\theta$}
			edge from parent node [above] {}
		}
		edge from parent node [below] {} }
	child { node [env, label=right:$W_2$] {$\vdash\lnot\lnot\psi$\\$\psi\vdash\phi$\\
				    $\theta\ot \phi\vdash \psi\ot\rho$}
		edge from parent node [above, align=center]
		{}}
	edge from parent node [above] {} };
\end{tikzpicture}
\caption{}
\label{fig:rule5}
\end{subfigure}

\caption{Conversion into normal representation: (a) Rule 3 and (b) Rule 5 of the
  normalization algorithm}
\label{fig:conversionNormal}
\end{figure}

\begin{example}\label{ex:rule5}
Let $\gamma = \theta\vdash (\phi\lol\psi)\ot\rho$ be the judgement to be converted 
into normal form. In this specific case, the problematic connective is $\lol$. 
By adding appropriate pairs of supplementary judgements, in sequence, 
we split the reduction in four cases and obtain $W_1, \dots, W_4$ as new sets of judgements (Fig~\ref{fig:rule5}).

Of interest in this particular case, is that in order to guarantee that the new sets 
of judgements have strictly reduced complexity ---interpreted as number of 
sub-formulas not in normal form--- we need to take several reduction steps.
\end{example}

Starting from a finite set of judgements $V$, the normalization algorithm works
essentiality by repeatedly applying conversion rules to the judgements that are not in
normal form by inspecting the structure of the formulas in the judgements. Note that,
Examples~\ref{ex:rule3} and \ref{ex:rule5} describe actual conversion rules in the
algorithm (for the other rules see the full version of the paper on ArXiV).  As each
conversion rule guarantees that the number of sub-formulas not in normal form is strictly
reduced, the algorithms eventually terminates.

The output $\N(V)$ of the algorithm is a set of theories (technically, only their
axioms). The next theorem states the correctness of this conversion. 
\begin{theorem}[Normal representation]\label{normalrep}
Given a finite set $V$ of judgements in $\mathcal{L} \in \{\LL,\LLun,\LLs\}$, the set of the theories axiomatized by the elements in $\N(V)$ is a normal representation of the theory axiomatized by $V$.
Consequently, any model of $V$ is a model for at least one of the elements in $\N(V)$; and any model of an element in $\N(V)$ is a model of $V$.
\end{theorem}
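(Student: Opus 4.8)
The plan is to prove the normal representation property by induction on the number of conversion steps performed by the normalization algorithm, using the stated step-invariant together with soundness (Theorem~\ref{soundness}) and the covering property of supplementary judgements as the semantic glue. Throughout, $\M \models U$ for a set $U$ abbreviates ``$\M$ satisfies every judgement in $U$'', and likewise $S \models U$ means every model of $S$ satisfies all of $U$.

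First I would isolate a single-step lemma. Suppose the algorithm replaces a set $U$ (not yet in normal form) by the two sets $U_1, U_2$ obtained by adding a pair $(\gamma_1,\gamma_2)$ of supplementary judgements. I claim that for every model $\M$,
\begin{equation*}
 \M \models U \quad\text{iff}\quad \M\models U_1 \text{ or } \M\models U_2.
\end{equation*}
The key observation is the \emph{covering property}: every model satisfies at least one of $\gamma_1,\gamma_2$. Indeed, for a pair $(\vdash\phi\lol\psi,\vdash\psi\lol\phi)$ this says $\M(\psi)\leq\M(\phi)$ or $\M(\phi)\leq\M(\psi)$, which holds because $\extreals$ is totally ordered (rule \textsc{tot}); for a pair $(\vdash\lnot\phi,\vdash\lnot\lnot\phi)$ it says $\M(\phi)=\infty$ or $\M(\phi)<\infty$ (rule \textsc{wem}). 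The forward direction then follows: if $\M\models U$, covering gives some $i$ with $\M\models U\cup\{\gamma_i\}$; since the invariant gives that each judgement of $U_i$ is provable from $U\cup\{\gamma_i\}$, soundness yields $U\cup\{\gamma_i\}\models U_i$ and hence $\M\models U_i$. The backward direction follows from the other half of the invariant, that each judgement of $U$ is provable from $U_i$, which by soundness gives $U_i\models U$, so $\M\models U_i$ implies $\M\models U$. A non-splitting rule is the same argument with the supplementary hypothesis omitted.

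Next I would run the induction. Call a finite collection of judgement-sets a \emph{representation} of $U$ when $\M\models U$ iff $\M$ models some member of the collection. The trivial collection $\{V\}$ represents $V$; the single-step lemma shows that replacing one member $U$ of a representation of $V$ by its successors $U_1,U_2$ (or $U'$) again yields a representation of $V$, since ``is a representation of'' composes in the obvious way. As each conversion rule strictly decreases the finite, non-negative measure counting sub-formulas not in normal form, the algorithm terminates after finitely many steps, so the terminal collection $\N(V)$ is a representation of $V$. Because a model satisfies all axioms of a theory iff it satisfies the theory, this is exactly the assertion that the theories axiomatized by $\N(V)$ form a normal representation of the theory axiomatized by $V$, and the two displayed consequences are its two halves. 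It remains to check that each terminal set is a \emph{normal} axiomatization in the sense of Definition~\ref{decisiontree}: all its judgements are in normal form (precisely the termination condition), while the side conditions---that atoms occurring alethically occur nowhere else, and that every occurring atom carries an assertive judgement---are maintained as additional invariants by the elimination and substitution steps of the algorithm.

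The main obstacle is not the semantic induction, which is clean, but verifying the two-directional invariant---provability of $U_i$ from $U\cup\{\gamma_i\}$ and of $U$ from $U_i$---for \emph{every} conversion rule. This is the proof-theoretic heart of the argument and must be carried out case by case over the structure of the non-normal formula being eliminated (as in Examples~\ref{ex:rule3} and~\ref{ex:rule5}), drawing on the Lawvere-quantale rules of Figure~\ref{fig:LL}, the scalar rules of Figure~\ref{tab:scalarRules}, the Totality Lemma~\ref{totalityl1}, and the derived distributivity of $\ot$ over $\land$ and $\lor$. Once the invariant is secured rule by rule, the theorem follows uniformly for all three logics, since the only logic-specific ingredients---the constant $\un$ via \textsc{(one)} and the scalar rules---enter solely through these rule-local derivations.
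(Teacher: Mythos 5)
Your proposal is correct and follows essentially the same route as the paper: the per-step, two-directional provability invariant combined with soundness and the fact that every model satisfies at least one judgement of each supplementary pair, iterated over the finitely many (terminating) conversion steps. The paper packages the identical argument in decision-tree language and routes the covering property through disjunctive theories (Lemma~\ref{max} and Lemma~\ref{discrimination1}), but the mathematical content---including the deferred rule-by-rule verification of the invariant, which neither you nor the paper carries out in full detail---is the same.
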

 
 The normalization algorithm also allows us to prove the decidability of satisfiability in \LLQ.
 
\begin{theorem}[Decidability of satisfiability in \LLQ]
Given a finite set $V$ of judgements in $\mathcal{L}\in\{\LL,\LLun,\LLs\}$, $V$ is satisfiable iff there exists $S\in\N(V)$ s.t. $\vdash\bot\not\in S$. Consequently, the satisfiability of judgements in \LLQ\ is decidable.
\end{theorem}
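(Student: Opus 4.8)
The plan is to reduce the semantic question about satisfiability of $V$ to a purely syntactic inspection of its normal representation $\N(V)$, which by Theorem~\ref{normalrep} faithfully captures the models of $V$. First I would establish the stated equivalence. The key tool is the correctness of normalization: a model $\M$ satisfies $V$ if and only if it is a model of at least one theory $S \in \N(V)$. Hence $V$ is satisfiable exactly when some $S \in \N(V)$ has a model. So the task becomes characterizing, for a \emph{normal} theory $S$, when $S$ admits a model. Here I would invoke Lemma~\ref{max}(1): if a theory has a model, it is consistent, so that a necessary condition for $S$ to be satisfiable is $\vdash\bot \notin S$ (since $\vdash\bot \in S$ would force the inconsistent judgement $\top\vdash\bot$ into $S$ by weakening, making $S$ inconsistent and hence modelless).

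The substantive direction is the converse: every normal theory $S$ with $\vdash\bot \notin S$ actually \emph{has} a model. This is where I expect the main work to lie. The plan is to exploit the rigid syntactic shape guaranteed by the definition of a normal axiomatization (Definition~\ref{decisiontree}): each atomic proposition occurring in the axioms carries an assertive judgement fixing it as either infinite (alethic $p\vdash\bot$) or finite (finitist $\vdash\lnot\lnot p$), and the remaining axioms are affine judgements over the finite variables. One builds a candidate model $\M$ by setting $\M(p)=\infty$ for the alethically-asserted variables and, for the others, solving the resulting finite system of affine inequalities over $\preals$. The absence of $\vdash\bot$ in $S$ means none of the explicitly inconsistent patterns ($\top\vdash\bot$, $\top\vdash\un$, $\un\vdash\bot$) is present, so the affine constraints are genuinely satisfiable over the reals; a concrete assignment can be extracted, for instance via the feasibility direction of the Farkas'/Motzkin machinery the authors use elsewhere, or directly by inspection since the normal form already isolates an affine-inequality system. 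Once the values on the finitely many occurring atoms are chosen consistently, any extension to all of $\Prop$ yields a model of every axiom, hence of $S$, giving satisfiability.

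For the consequences, decidability of satisfiability follows because $\N(V)$ is computed by the terminating normalization algorithm (Theorem~\ref{normalrep} and the termination argument preceding it), producing a \emph{finite} set of normal theories, each presented by a finite axiomatization. Checking the side condition ``$\vdash\bot \notin S$'' reduces to testing whether the inconsistent judgement is derivable among the finitely many axioms of $S$, which, after the affine part is isolated, amounts to deciding feasibility of a finite system of affine inequalities over $\preals$ --- an effective task (e.g.\ by linear programming feasibility or Fourier--Motzkin elimination). Since there are finitely many $S \in \N(V)$, and each feasibility test is decidable, the overall satisfiability test is decidable.

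The main obstacle I anticipate is the model-construction step for a single consistent normal theory: one must verify that the interaction between the alethic (infinite-valued) variables, the finitist (finite-valued) variables, and the affine judgements never produces a hidden contradiction beyond the three syntactic inconsistency patterns. Concretely, one has to check that substituting $\infty$ for the alethic variables into the affine judgements either trivializes them or leaves a finite affine system that remains feasible precisely when $\vdash\bot$ is absent. The careful bookkeeping here --- ensuring that truncated subtraction and the $\infty$ values behave as the semantics demands, and that feasibility of the residual real affine system is equivalent to non-derivability of $\vdash\bot$ --- is the delicate part, and it is exactly where the normal-form discipline of Definition~\ref{decisiontree} (no alethic variable reused, one assertive judgement per variable) is indispensable.
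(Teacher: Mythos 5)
Your overall strategy --- reduce to $\N(V)$ via Theorem~\ref{normalrep}, use Lemma~\ref{max}(1) for the easy direction, and settle the remaining feasibility questions by Farkas/Fourier--Motzkin --- is the route the paper intends. But there is one genuine gap, and it sits exactly where you placed your ``delicate part'': the inference from ``$\vdash\bot\notin S$'' to ``the affine constraints are genuinely satisfiable over the reals.'' If the condition is read as literal non-membership of $\vdash\bot$ in the finite axiom set, the inference is false: a normal axiomatization can carry an infeasible affine system without exhibiting any of the three syntactic inconsistency patterns. For instance $S=\{\vdash\lnot\lnot p,\; p\vdash \un\ot p\}$ satisfies every clause of Definition~\ref{decisiontree}, its affine axiom forces $\M(p)\geq \M(p)+1$ and hence $\M(p)=\infty$, contradicting the finitist axiom --- yet $\vdash\bot$ appears nowhere. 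So you must either (i) read the condition as ``$\vdash\bot$ is not \emph{derivable} from $S$'' (which is how the paper uses it, and which you implicitly switch to when describing the decision procedure), and then actually prove the missing equivalence that the affine-plus-finitist system of a normal theory is infeasible iff $\vdash\bot$ is derivable from it; or (ii) argue that the normalization algorithm collapses such sets to $\{\vdash\bot\}$, which its refinement step as described does not do. Either way the substantive content is the same Motzkin/Farkas argument that drives Theorem~\ref{completeness}: infeasibility yields a nonnegative combination of the axioms deriving an inconsistent judgement of the form $r*\un\vdash (r+s)*\un$ with $s>0$, whence $\top\vdash\bot$; conversely feasibility yields a model and hence consistency by Lemma~\ref{max}. (Equivalently, you may simply invoke Theorem~\ref{completeness}: a finitely-axiomatized theory with no model has every judgement as a vacuous semantic consequence, so $\top\vdash\bot$ is provable from it.) You assert this equivalence but never establish it, and it is the whole theorem.

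Two smaller points. By Definition~\ref{decisiontree} no variable occurring in an alethic axiom occurs in any other axiom, so your worry about substituting $\infty$ into the affine judgements is moot: the affine part mentions only finitist variables and the model construction splits cleanly. And once the equivalence above is supplied, your decidability argument is fine: $\N(V)$ is finite and computable by the terminating algorithm, and feasibility of each finite affine system over $\preals$ is decidable by Fourier--Motzkin elimination, exactly as the paper indicates.
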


\section{Completeness and Incompleteness}
\label{sec:completeness}

In this section, we demonstrate first that all the logics for the Lawvere quantale are
incomplete in general, even for theories over finitely many propositional
symbols. Secondly, we prove that all \LLQ\ are complete if consider finitely-axiomatised
theories only. Finally, we prove an approximate form of strong completeness over a well
behaved class of theories, not necessarily finitely-axiomatizable.


\begin{theorem}[Incompleteness]
\label{th:incompleteness}
LLQ are incomplete: for any $\mathcal{L}\in\{\LL,\LLun\LLs\}$, there exist theories $\T$
and judgements $\gamma$ in $\mathbb L$ so that all the models of $\T$ are models of
$\gamma$ but $\gamma$ is not provable from $\T$ in $\mathcal{L}$. Moreover, the result
is independent of the particular proof systems that are chosen for LLQ, in the sense
that any finite set of finitary proof rules that can be proposed (as an alternative to the
rules presented in this paper) still produces an incomplete theory for each $\mathcal{L}$.  
\end{theorem}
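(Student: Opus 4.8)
The plan is to exhibit a single pair $(\T,\gamma)$ for which semantic consequence holds but provability fails, and to recognise that the obstruction is the failure of \emph{compactness} of $\models$ rather than any deficiency of the particular rules. First I would record the elementary but decisive observation that our proof systems are \emph{finitary}: a proof is a finite sequence invoking finitely many hypotheses, so $S \Vdash_{\mathcal{L}} \gamma$ entails $S_0 \Vdash_{\mathcal{L}} \gamma$ for some finite $S_0 \subseteq S$. Combined with Soundness (Theorem~\ref{soundness}), this gives the contrapositive tool used throughout: if every finite $S_0 \subseteq S$ admits a model falsifying $\gamma$, then $S \not\Vdash_{\mathcal{L}} \gamma$.

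Next I would construct the witness using two propositional variables $p,q$. Set $\gamma \coloneq (\vdash p)$, so $\M \models \gamma$ iff $\M(p)=0$, and
\[
\T \coloneq \{\, q \vdash np \mid n \in \naturals,\ n \geq 1 \,\} \cup \{\, \vdash \lnot\lnot q \,\},
\]
using the derived connective $np = p \ot \dots \ot p$ (so $\M(np)=n\M(p)$) and recalling that $\M \models (\vdash\lnot\lnot q)$ iff $\M(q)<\infty$. I would first check $\T \models \gamma$: a model of $\T$ has $\M(q)<\infty$ and $\M(q) \geq n\M(p)$ for all $n$, so $\M(p)>0$ would force $\M(q)=\infty$, a contradiction; hence $\M(p)=0$. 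I would then show no finite fragment forces $\gamma$: given finite $S_0 \subseteq \T$, let $N$ bound the indices occurring in $S_0$ and take the model with $\M(q)=1$, $\M(p)=\tfrac{1}{N+1}$; it satisfies every judgement of $S_0$ (as $n/(N+1)<1$ for $n \le N$, and $\M(q)=1<\infty$) yet has $\M(p)>0$, so $S_0 \not\models \gamma$. By the tool above, $\T \not\Vdash_{\mathcal{L}} \gamma$, giving incompleteness; since every formula used lies in $\LL$, the same $(\T,\gamma)$ works uniformly for $\LL,\LLun,\LLs$. (For $\LLun$ and $\LLs$ a one-variable witness is available, replacing $q$ by the constant $\un$: take $\T = \{\, \un \vdash np \mid n \geq 1 \,\}$, whose models all satisfy $\M(p)=0$.)

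Finally I would argue robustness. The argument never inspects the particular inference rules: it uses only that provability is finitary --- which holds for \emph{any} finite set of finitary rules, since a derivation is a finite tree and so mentions only finitely many members of $S$ --- together with soundness of those rules with respect to the quantale semantics. Put differently, $\models$ is shown by $(\T,\gamma)$ to be non-compact, whereas $\Vdash$ arising from any finitary rule set is always compact; the two relations can therefore never coincide, so no finitary system is simultaneously sound and complete, and the countermodels constructed above witness $\T \not\Vdash \gamma$ for every such system while $\T \models \gamma$. I expect the main obstacle to be less the verification of the witness than the precise formulation of this clause: one must state clearly that an admissible alternative system is required to be sound (otherwise ``completeness'' is vacuously attainable by an over-generating system), and that finitariness of the rule set is exactly what yields compactness of $\Vdash$ --- the single property in tension with the non-compactness of $\models$ that $(\T,\gamma)$ establishes.
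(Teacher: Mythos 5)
Your proof is correct and follows essentially the same route as the paper's: both arguments exhibit an infinite axiom set whose semantic consequence fails to be a consequence of any finite subset, and then combine the finitariness of proofs (hence compactness of $\Vdash$ for any finite set of finitary rules) with soundness to conclude unprovability. The only difference is the concrete witness --- the paper uses $\T=\{(n+1)p\vdash nq\mid n\in\naturals\}$ with conclusion $p\vdash q$ (exploiting $n/(n+1)\to 1$), whereas you use $\{q\vdash np\mid n\geq 1\}\cup\{\vdash\lnot\lnot q\}$ with conclusion $\vdash p$ (exploiting unboundedness against a finiteness axiom); both witnesses check out, and your explicit observation that the robustness clause presupposes soundness of the alternative rule set makes precise a point the paper leaves implicit.
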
  
  
\begin{proof}
Consider $\mathcal{L}\in\{\LL,\LLun\LLs\}$ with their proof systems presented in
Section~\ref{sec:proofSys}, or any alternative finite set of  \emph{finitary} rules that
can describe $\mathcal{L}$. 
Let $p,q\in\Prop$ be two atomic propositions and $\T$ a theory in 
$\mathcal{L}$ axiomatized by all the judgements of the form
$
(n+1)p\vdash n q \quad \text{for all }n \in \naturals \,.
$

Note that in all the models $\M$ of $\T$ we must have $\M(p) \geq \M(q)$, 
hence all the models of $\T$ are also models of $p\vdash q$.
%
Assume there exists a finite proof of $p\vdash q$ in 
$\mathcal{L}$ from the set $\{(n+1)p\vdash q\mid n\geq 0\}$ of axioms of $\T$. 
Since this proof is finite and uses a finite set of finitary rules, there must exist 
$k\geq 0$ so that the only judgements used in the proof of $p\vdash q$ are 
from the set 
$V=\{(n+1)p\vdash q\mid 0\leq n\leq k\}$. If that is the case, then any model of $V$ is a model 
for $p\vdash q$ (from soundness).
But this is false: consider the model $\M$ such that 
$\M(p) = \frac{k}{k+1}$ and $\M(q) =1$. This is a model of $V$, but not a 
model of $p\vdash q$.
\end{proof}

A consequence of Theorem~\ref{th:incompleteness} is that not all consistent theories have
models. For instance in $\LLun$, the theory axiomatized by the set
$\{p\vdash n\mid n\in\mathbb N\}\cup\{\vdash\lnot\lnot p\}$ of axioms for $p\in\Prop$, is
consistent: any proof will use a finite subset of axioms from the first set and possibly
$\vdash\lnot\lnot p$, and these are not sufficient to prove $\vdash\bot$. However, this
theory has no model, because in any model $\M$ the axioms in the first set guarantee that
$\M(p) \geq n$, for all $n\in\naturals$, while the last axiom require that $\M(p)$ is
finite.

%
\begin{theorem}[Completeness for finitely-axiomatized theories]
\label{completeness}
Let $\mathcal{L}\in\{\LL,\LLun\LLs\}$ and $\T$ a finitely-axiomatized theory in
$\mathcal{L}$. If a judgement $\gamma$ is a semantic consequence of $\T$ in $\mathcal{L}$,
then $\gamma$ is provable from $\T$ in $\mathcal{L}$ 
(in symbols, $\T \models_{\mathcal{L}} \gamma$ implies $\T \Vdash_{\mathcal{L}} \gamma$). 
\end{theorem}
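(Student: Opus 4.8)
The plan is to assemble three ingredients already available---soundness (Theorem~\ref{soundness}), the normal representation theorem (Theorem~\ref{normalrep}), and the Totality Lemma (Lemma~\ref{totalityl1})---with the only genuinely new input being a linear-algebraic completeness statement for \emph{affine} judgements supplied by the Motzkin transposition theorem. First I would make the standard reductions. Since $\T$ is finitely axiomatized, fix a finite set $V$ of axioms, so that $\T \Vdash_{\mathcal L} \delta$ iff $V \Vdash_{\mathcal L} \delta$; as $\T$ and $V$ have the same models, the hypothesis $\T \models \gamma$ becomes $V \models \gamma$, and it suffices to prove $V \models \gamma \Rightarrow V \Vdash \gamma$. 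Using the remark that $\phi_1,\dots,\phi_n \vdash \psi$ is provably equivalent to $\vdash (\phi_1\ot\dots\ot\phi_n)\lol\psi$, together with the lattice rules (so that a finite family $\vdash\phi_i$ is provably equivalent to a single $\vdash \bigwedge_i \phi_i$), I may assume without loss of generality that $\gamma$ is a single judgement $\vdash\phi$ and that $V$ is folded into a single judgement $\vdash\rho$; this is exactly the shape needed to invoke the Totality Lemma. Note it is here that finiteness is essential: the incompleteness witness of Theorem~\ref{th:incompleteness} used infinitely many axioms, whereas a finite $V$ yields a \emph{finite} system of inequalities to which Motzkin applies.

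Next I would run the normalization algorithm on $V$. By Theorem~\ref{normalrep} this produces a finite binary tree whose internal nodes are case splits on supplementary pairs and whose leaves are normal theories $S_1,\dots,S_n$ with $\N(V)=\{S_1,\dots,S_n\}$. Because every model of $S_i$ is a model of $V$, each leaf satisfies $S_i \models \gamma$. The strategy is to establish $S_i \Vdash \gamma$ at every leaf and then climb back up the tree: at an internal node whose two children adjoin a supplementary pair $(\phi\vdash\psi,\;\psi\vdash\phi)$ or $(\vdash\lnot\phi,\;\vdash\lnot\lnot\phi)$, if $\gamma$ is provable from both children then Lemma~\ref{totalityl1} discharges the split and yields a proof of $\gamma$ at the node. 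Nodes that merely simplify a judgement without splitting transfer provability, since the normalization invariant makes the old and new judgement sets interprovable. Iterating up to the root gives $V \Vdash \gamma$, as required.

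The core, and the heart of the whole argument, is proving $S \Vdash \gamma$ for a normal theory $S$ with $S \models \gamma$. Here I would first normalize the consequent as well, continuing the case-splitting until $\gamma$ is, within the context of $S$, provably equivalent to a judgement in normal form; the alethic axioms ($\top\vdash p$ forcing $m(p)=0$, $p\vdash\bot$ forcing $m(p)=\infty$) and finitist axioms ($\vdash\lnot\lnot p$ forcing $m(p)$ finite) pin down the regime of each variable, and since the structural conditions on normal theories isolate these atoms, the tautological, inconsistent, and assertive cases are immediate or reduce to substitution. The essential case is when $S$ is a finite system of affine judgements and $\gamma$ is affine. Reading each affine judgement as a linear inequality $\ell_i(x)\ge 0$ in the variables $x=(m(p_1),\dots)$ ranging over $[0,\infty)$, the hypothesis $S\models\gamma$ says that the target $\ell(x)\ge 0$ holds on the nonnegative solution set of the $\ell_i$. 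By the Motzkin transposition theorem (the integer variant for $\LL$ and $\LLun$, whose coefficients must be natural numbers, and the real version for $\LLs$) one obtains a nonnegative combination $\ell=\sum_i\lambda_i\ell_i+\sum_j\mu_j x_j+\nu$ with all coefficients nonnegative. The last step is to realize this certificate syntactically: the scalar rules ($S_1$--$S_{10}$) implement multiplication of an affine judgement by $\lambda_i$, the tensor and \textsc{cut} rules implement addition of inequalities, the constraints $x_j\ge 0$ are the provable tautologies $p_j\vdash\top$, and the nonnegative slack $\nu$ only weakens the derived judgement, which is sound along the reversed Lawvere order. Assembling these pieces reproduces $\gamma$.

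The step I expect to be the main obstacle is precisely this final translation: turning the purely numerical Motzkin certificate into a formal derivation. One must ensure the coefficient arithmetic is actually available in each logic---integer combinations realised by the derived $n\phi$ for $\LL,\LLun$ versus genuine real scalars for $\LLs$, which is exactly why the integer variant of Motzkin is needed for the former---and that the bookkeeping of the $0$, finite, and $\infty$ regimes together with the truncated subtraction $\dotdiv$ stays consistent across the boundary cases, so that the affine calculation never silently leaves $[0,\infty)$ where the transposition theorem applies.
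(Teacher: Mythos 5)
Your proposal follows essentially the same route as the paper's own proof: reduce to normal form via the normalization algorithm, climb the resulting decision tree with the Totality Lemma, and settle the affine leaf case by extracting a nonnegative certificate from the Motzkin transposition theorem (integer variant for $\LL$ and $\LLun$) and replaying it syntactically with the scaled-tensor derived rule. The argument is correct and matches the paper's three-stage structure, so no further comparison is needed.
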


\begin{proof} 
The proof is similar for all \LLQ, by adapting only the arguments to the appropriate
context. Hereafter, we sketch the proof for $\LLs$, as it is the most complex of all. We
only sketch this for the case both $\gamma$ and $\T$ are in normal form. The other cases
can all be reduced to this one.  
	
Assume that $\gamma$ is a normal judgement and $\T$ a normal theory. The cases when
$\gamma$ is an alethic or a finitist judgement are relatively simple. We detail here the
case when $\gamma$ is affine: 
\begin{equation*}
(\bigotimes_{i \leq n} r_i*p_i ) \ot r*\un \vdash (\bigotimes_{j \leq m} s_j*q_j ) \ot s*\un
\end{equation*} 
with $m, n$ possibly $0$. We need to show that the judgement above is provable from the axioms of $\T$. 
We have a couple of cases regarding the possibility that some of atomic propositions
$p_i$, $q_j$ appear in alethic axioms in $\T$, but we can easily deal with these
propositions and eventually reduce the problem to the case when none of the atomic
propositions $p_i$, $q_j$ appear in the alethic axioms of $\T$.  
	
	Consequently, all these atomic propositions appear in finitist axioms of $\T$. 	
	Using commutativity and associativity of $\ot$, we will reorganise both our judgement and the non-assertive axioms of $\T$ so that we put together different copies of the same atomic proposition in a tensorial product and use the facts that $0p=\top$ and $r*\top\ot\phi\dashv\vdash\phi$. So, without losing generality, we can assume that our judgement $\gamma$ is  
\begin{equation*}
(\bigotimes_{i \leq k} a_i*x_i ) \ot r*\un \vdash (\bigotimes_{i \leq k} b_i*x_i ) \ot s*\un
\end{equation*} 
and the non-assertive axioms of $\T$ are 
\begin{equation*}
\left\{
\begin{aligned} 
(\bigotimes_{i \leq k} a^1_i*x_i ) \ot r^1*\un 
&\vdash (\bigotimes_{i \leq k} b^1_i*x_i ) \ot s^1*\un
\\
&\dots
\\
(\bigotimes_{i \leq k} a^l_i*x_i ) \ot r^l*\un 
&\vdash (\bigotimes_{i \leq k} b^l_i*x_i ) \ot s^l*\un
\end{aligned}
\right.
\end{equation*}	
for some positive reals $a_i, b_i, a_i^j, b_i^j, r, s, r^j, s^j$ and atomic propositions $x_1, \dots, x_k$.
Consider the matrices $A \in \reals^{l \times k}$, 
$C \in \reals^{k\times 1}$ and vector $\beta \in \reals^{k}$.
\begin{align*}
A=\begin{pmatrix} 
	a^1_1-b^1_1 & \dots & 
	a^1_k-b^1_k \\
	\dots &\dots&\dots
	\\
	a^l_1-b^l_1 &\dots& a^l_k-b^l_k 
\end{pmatrix}
&&
C = (b_1 - a_1, \dots, b_k - a_k) \,
&&
\beta =\begin{pmatrix} r^1-s^1\\ \dots \\r^l-s^l\end{pmatrix}
\end{align*}
and let $\delta = r-s$.
According to our hypothesis, any model of $\T$ is a model of $\gamma$, it follows that
there exists no $x = (x_1,\dots,x_k)\in \mathbb R^k$ such that 
\begin{gather*}
Ax + \beta \geq 0 \,, \quad Cx + \delta >0 \,.
\end{gather*}
By Mozkin transposition theorem~\cite{Motzkin51}%
\footnote{Motzkin~\cite{Motzkin51} contains an unfortunate typo. 
For a correct form of the statement see~\cite{Border2013}.
Here we use an adapted form for affine transformations obtained by using the Fundamental
Theorem of Linear Inequalities (see~\cite[Corollary~7.1h]{Schrijver1998})},
there exist 
$t_0 \in \reals$ and $t=(t_1,\dots,t_l)\in\reals^{l\times 1}$, such that
\begin{gather*}
C x + \delta = t(Ax + \beta) + t_0 \,, \quad t \geq 0 \,, \quad t_0 \geq 0 \,.
\end{gather*}
In $\LLs$ we can repeatedly apply the derived rule following the pattern from $t$, $t_0$
\begin{equation*}
\infrule{\phi_1\vdash\psi_1 & \phi_2\vdash\psi_2
	}{r*\phi_1\ot s*\phi_2\vdash r*\psi_1\ot s*\psi_2}
\end{equation*}
we obtain 
a proof from $\T$ for 
$
\big( (\bigotimes_{i \leq k} a_i*x_i ) \ot r*\un \big) 
\vdash 
\big( (\bigotimes_{i \leq k} b_i*x_i ) \ot s*\un \big)
$.
\end{proof}

A consequence of this completeness result is the following.
\begin{corollary}\label{characteristicmodel}
	For any finitely axiomatized theory of $\LLs$, the set of its diagrammatic extensions
    coincide with the set of the diagrammatic theories of its models. 
\end{corollary}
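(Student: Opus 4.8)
The plan is to read the statement as an equality of two sets attached to a finitely-axiomatized theory $\T$ of $\LLs$: the set $D = \{\,\T' : \T' \text{ diagrammatic},\ \T \subseteq \T'\,\}$ of diagrammatic extensions, and the set $E = \{\,\T_\M : \M \models \T\,\}$, where $\T_\M$ denotes the unique diagrammatic theory satisfied by $\M$ supplied by Lemma~\ref{characteristictheory}(2). The goal is $D = E$, and the whole argument is organised around the bijection between diagrammatic theories and models established in Lemma~\ref{characteristictheory}; completeness (Theorem~\ref{completeness}) enters through that lemma and, for finitely-axiomatized $\T$, through the fact that consistency is equivalent to having a model.

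First I would record the identity $\T_\M = \mathrm{Th}(\M) := \{\gamma : \M \models \gamma\}$ for every model $\M$. The set $\mathrm{Th}(\M)$ is deductively closed by soundness (Theorem~\ref{soundness}), it is consistent by Lemma~\ref{max}(1) since it has the model $\M$, and it is diagrammatic because $\M$ pins down each atom $p$: either $\M(p)=\infty$, giving $p\vdash\bot\in\mathrm{Th}(\M)$, or $\M(p)=\e\in\preals$, giving $\e\vdash p$ and $p\vdash\e$ in $\mathrm{Th}(\M)$. By the uniqueness clause of Lemma~\ref{characteristictheory}(2), this diagrammatic theory is exactly $\T_\M$.

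With this in hand, both inclusions are short. For $E \subseteq D$, take $\M \models \T$; every $\gamma\in\T$ is then satisfied by $\M$, hence lies in $\mathrm{Th}(\M)=\T_\M$, so $\T \subseteq \T_\M$ and $\T_\M$ is a diagrammatic extension of $\T$. For $D \subseteq E$, take a diagrammatic $\T'$ with $\T\subseteq\T'$; by Lemma~\ref{characteristictheory}(1) it has a unique model $\M$, and since $\M$ satisfies $\T'\supseteq\T$ it satisfies $\T$, so $\M\models\T$. As $\T'$ is diagrammatic and satisfied by $\M$, the uniqueness in Lemma~\ref{characteristictheory}(2) forces $\T'=\T_\M\in E$. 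Combining the inclusions gives $D=E$.

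I expect the real content to live not in this bookkeeping but in what it imports, and this is also where the dependence on completeness is genuine. The load-bearing fact is Lemma~\ref{characteristictheory}(1): that every diagrammatic theory—possibly \emph{not} finitely axiomatized—has an (automatically unique) model; certifying that the assignment reading off each atom's pinned value satisfies \emph{every} judgement of the theory is exactly a completeness-type argument, so I would quote it rather than try to reprove it by applying Theorem~\ref{completeness} directly to $\T'$ (which need not be finitely axiomatized). The finite-axiomatization hypothesis on $\T$ itself plays the complementary role of making $D$ and $E$ simultaneously empty precisely when $\T$ is unsatisfiable: by the contrapositive of completeness applied to $\top\vdash\bot$, a finitely-axiomatized consistent theory has a model, so when $\T$ is consistent $E$ is nonempty and matches the nonempty $D$. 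The main obstacle is therefore purely one of attribution and care—getting the identity $\T_\M=\mathrm{Th}(\M)$ right and resisting the temptation to invoke completeness on an infinitely-axiomatized diagrammatic theory—rather than any new computation.
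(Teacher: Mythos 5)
Your proof is correct. The paper offers no explicit argument for this corollary---it is simply announced as ``a consequence of this completeness result''---so the intended route is presumably: completeness for the finitely-axiomatized $\T$, combined with Lemma~\ref{characteristictheory}. Your route instead derives the set equality $D=E$ purely from the bijection between models and diagrammatic theories in Lemma~\ref{characteristictheory}: the inclusion $E\subseteq D$ is the observation that $\T_\M=\mathrm{Th}(\M)\supseteq\T$ for any $\M\models\T$ (using soundness for deductive closure), and $D\subseteq E$ follows by sending a diagrammatic extension $\T'$ to its unique model, which automatically models $\T$, and then invoking uniqueness in part (2). This is cleaner and, as you implicitly demonstrate, shows that neither Theorem~\ref{completeness} nor the finite-axiomatization hypothesis is actually load-bearing for the stated set equality: your argument works for arbitrary theories, with both sets empty exactly when $\T$ has no diagrammatic extension. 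The one place where your commentary is slightly off-target is the claimed ``complementary role'' of finite axiomatization: you suggest it is needed to make $D$ and $E$ simultaneously empty precisely when $\T$ is unsatisfiable, but your own two inclusions already yield $D=E$ unconditionally (if $D\neq\emptyset$ then Lemma~\ref{characteristictheory}(1) hands you a model of $\T$, so $E\neq\emptyset$; conversely any model of $\T$ produces a diagrammatic extension). So the hypothesis is inherited from the context in which the corollary is later applied (the inference-system completeness proof) rather than required for its truth. You are also right to locate the genuine analytic content in Lemma~\ref{characteristictheory}(1)---verifying that the atom-reading assignment satisfies every judgement of a (possibly non-finitely-axiomatized) diagrammatic theory is the completeness-flavoured step---and to quote it rather than misapply Theorem~\ref{completeness} to $\T'$.
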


Incompleteness over general theories (Theorem~\ref{th:incompleteness}) is a common trait of several many-valued
logics~\cite{Marra13,Yaacov13,yaacov10}, especially, if interpreted
over the reals.
A weaker form of completeness result, first proposed by Ben Yaacov~\cite{yaacov10}, 
is \emph{approximate completeness}. Rather than compromise on 
the theories, one asks instead whether a judgement can be ``proven up to 
arbitrary precision''. 

In $\LLs$, approximate completeness can be formally stated as follows: 
whenever all the models of a set of judgements $S$ are also models of 
$\vdash \psi$, the judgement $\epsilon \vdash \psi$ is provable from 
$S$ in $\LLs$, for any $\e > 0$.


It is not difficult to see that the above statement is still too strong
to hold in $\LLs$ for general sets of judgements $S$. 
Actually, already theories using only finitely-many atomic propositions
(one is enough) can falsify the statement.
\begin{fact}[Failure of approximate completeness]
Consider the set $S = \{ p \vdash n \mid n \in \naturals \}$ of judgements in 
$\LLs$, for $p \in \Prop$ a fixed atomic proposition, and take $\psi = \lnot p$.

The only model $\M$ of $S$ is such that $\M(p) = \infty$, because satisfying
all the judgements of the form $p \vdash n$, for $n \in \naturals$, is equivalent
to say that the interpretation of $p$ is $\infty$. Thus, it is also a model
for $\vdash \lnot p$.

Assume that approximate 
completeness holds in $\LLs$ and let $\e < \infty$. 
Then, $\lnot p$ 
is provable from $S$. 
As any proof is finite, there must exists a finite subset $S' \subseteq S$ 
such that $\e \vdash \lnot p$ is provable from $S'$. 

Define $N \coloneq \max \{n \mid p \vdash n \in S' \}$.
Then, $\M'(p) \coloneq N$ is a model for $S'$ and, by Theorem~\ref{soundness}, $\M \models (\e \vdash \lnot p)$.
That is, $\e = \M'(e) \geq \M'(\lnot\phi)$. However, $\M'(\lnot\phi) = \infty$, 
thus, $\M'(\lnot\phi) > \e$  ---contradiction.
\end{fact}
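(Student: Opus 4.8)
The plan is to refute approximate completeness by exhibiting the concrete pair $(S,\psi)$ and checking the two halves of the implication separately: first that $S \models (\vdash \psi)$ holds, and then that $\e \vdash \psi$ is \emph{not} provable from $S$ for a suitably chosen $\e$. Since approximate completeness would force the second from the first for \emph{every} $\e > 0$, producing a single $\e$ at which provability fails is enough to break it.

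For the semantic half I would unfold the semantics of judgements. A model $\M$ satisfies $p \vdash n$ exactly when $\M(p) \geq \M(n)$; as $n$ abbreviates $n * \un$ we have $\M(n) = n$, so the condition reads $\M(p) \geq n$. Imposing it for every $n \in \naturals$ forces $\M(p) = \infty$. By the characterization $\M \models (\vdash \lnot\phi)$ iff $\M(\phi) = \infty$ recorded right after the semantics, this gives $\M \models (\vdash \lnot p)$. Hence every model of $S$ is a model of $\vdash \psi$, and the hypothesis of approximate completeness is met.

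For the proof-theoretic half, I would fix some $\e$ with $0 < \e < \infty$ and argue by contradiction, assuming $\e \vdash \lnot p$ is provable from $S$. The crucial point is that a derivation is a finite sequence built from finitary rules, so it can invoke only finitely many axioms of $S$; gathering them yields a finite $S' \subseteq S$ with $S' \Vdash (\e \vdash \lnot p)$. Putting $N \coloneq \max\{ n \mid (p \vdash n) \in S'\}$, the assignment $\M'(p) \coloneq N$ is a model of $S'$, since every premise $p \vdash n$ in $S'$ satisfies $n \leq N = \M'(p)$. Soundness (Theorem~\ref{soundness}) then yields $\M' \models (\e \vdash \lnot p)$, that is $\e = \M'(\e) \geq \M'(\lnot p) = \infty \dotdiv N = \infty$, contradicting $\e < \infty$.

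I expect the only real subtlety to lie in the finiteness-of-proofs step: the argument depends on the fact that a single derivation consumes only finitely many premises, which is exactly what lets the genuinely infinite theory $S$ behave, from the viewpoint of any fixed proof, like one of its finite truncations. The counter-model $\M'$ is then tailored to satisfy that truncation while sending $\lnot p$ to $\infty$, and this mismatch between the finite information available to a proof and the infinitary constraint $\M(p) = \infty$ is precisely what obstructs approximate completeness.
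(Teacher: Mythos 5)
Your proposal is correct and follows essentially the same route as the paper: semantic entailment via the forced value $\M(p)=\infty$, then a contradiction obtained by extracting a finite subtheory $S'$ from any putative proof of $\e\vdash\lnot p$ and building the counter-model $\M'(p)=N$ to which soundness applies. Your explicit computation $\M'(\lnot p)=\infty\dotdiv N=\infty$ in fact cleans up a couple of notational slips in the paper's own final step.
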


Despite the fact that we cannot hope for a general form of approximate strong
completeness, we can still recover a mildly restricted version of it by focusing on a
suitable well-behaved class of judgements and theories over finitely-many atomic
propositions.

Let $\Prop[n] = \{p_1, \dots, p_n \}$ be a finite set of atomic propositions, 
and denote by $\LLs(n)$ the logic $\LLs$ restricted over $\Prop[n]$. Then, using the
Hurwicz's general form of Farkas' Lemma~\cite{Hurwicz2014}, and following a similar proof
structure as in Theorem \ref{completeness}, we can prove the following approximate
completeness result. 
\begin{theorem}[Restricted approximate completeness] \label{th:approxCompl}
Let $S$ be a set of normal judgements in $\LLs(n)$ such that it has only
models valued over $[0, \infty)$. If a normal judgement%
\footnote{By an abuse of notation, here we actually mean that $\vdash \phi$ is
provably equivalent to a judgement in normal form.}
$\vdash \psi$ is a semantic consequence of $S$ in $\LLs(n)$, 
then for any $\e > 0$, $\e \vdash \psi$ is 
provable from $S$.
\end{theorem}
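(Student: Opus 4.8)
The plan is to mirror the proof of Theorem~\ref{completeness}, replacing the Motzkin transposition theorem---which is valid only for finite systems---by Hurwicz's general form of the Farkas' Lemma, which applies to infinite systems and whose conclusion is membership in the \emph{closure} of a finitely generated cone; this closure phenomenon is precisely what accounts for the $\e$-slack and hence for getting only $\e\vdash\psi$ rather than $\vdash\psi$. First I would reduce to the affine case exactly as in Theorem~\ref{completeness}: the alethic and finitist axioms of $S$ are handled separately, and the remaining (affine) normal judgements, together with the target, are translated into affine inequalities over the variables $x_i = \M(p_i)$, $i\le n$. Because $S$ has only models valued in $\preals$, every model is a genuine point $x=(x_1,\dots,x_n)\in\reals^n$ (no coordinate is $\infty$), so the whole problem lives in $\reals^n$ and the real-valued Farkas machinery applies. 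The affine axioms of $S$ become a (possibly infinite) system $L_j(x)\ge 0$ ($j\in J$); to these I append the coordinate inequalities $x_i\ge 0$, each the provable tautology $p_i\vdash\top$, so that the positive orthant is encoded explicitly. Finally $\vdash\psi$ becomes a single affine inequality $L(x)\ge 0$, while the target $\e\vdash\psi$ becomes $L(x)+\e\ge 0$ (since $\M(\e\vdash\psi)$ holds iff $\e\ge\M(\psi)$, i.e.\ iff $L(x)\ge -\e$).

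The hypothesis $S\models_{\LLs(n)}(\vdash\psi)$ now reads: $L(x)\ge 0$ holds at every $x\in\reals^n$ satisfying all $L_j(x)\ge 0$ and all $x_i\ge 0$. Here I would invoke Hurwicz's general Farkas' Lemma for this semi-infinite system. In the finite case the generated cone is finitely generated, hence closed, and Motzkin produces an \emph{exact} nonnegative certificate for $L$; with infinitely many axioms the cone generated by the functionals $L_j$, the coordinate functionals, and the nonnegative constants need not be closed, and Hurwicz's lemma shows only that $L$ lies in its \emph{closure}. The form I would extract is: for every $\e>0$ there are a \emph{finite} subset of indices and nonnegative multipliers $\lambda_k,u_i,t_0$ with
\begin{equation*}
L(x)+\e \;=\; \sum_k \lambda_k\,L_{j_k}(x) \;+\; \sum_{i\le n} u_i\,x_i \;+\; t_0 \qquad\text{identically in } x.
\end{equation*}
This is exactly the point where a single consequence fails to be exactly derivable (matching the failure of approximate completeness for unrestricted $S$), yet each perturbed functional $L+\e$ is.

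It then remains to read this identity back into $\LLs$, which proceeds as in Theorem~\ref{completeness}. Each $L_{j_k}$ is the affine inequality of an axiom of $S$, each $x_i\ge 0$ is the tautology $p_i\vdash\top$, and the nonnegative combination is realised by iterating the derived weighted-combination rule
\begin{equation*}
\infrule{\phi_1\vdash\psi_1 & \phi_2\vdash\psi_2}{r*\phi_1\ot s*\phi_2\vdash r*\psi_1\ot s*\psi_2}
\end{equation*}
with the scalars $\lambda_k,u_i$; the surplus constant $t_0+\e$ is absorbed into the scalar $\e$ guarding the antecedent, yielding a proof of $\e\vdash\psi$ from $S$.

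The hard part will be the correct deployment of Hurwicz's lemma. Concretely, one must (i) route the finiteness hypothesis so that the feasible region is a subset of $\reals^n$ rather than of $\extreals^n$, and include the nonnegativity functionals among the generators, so that the setup matches the hypotheses of the semi-infinite Farkas theorem; and (ii) justify that membership of $L$ in the \emph{closure} of the cone furnishes, for each \emph{fixed} $\e>0$, a genuine finite nonnegative certificate for the perturbed functional $L+\e$---the step that converts a topological (closure) statement into an algebraic (finite-combination) one, and thereby into an actual $\LLs$-derivation. This is the crux that distinguishes the present argument from the finite-system completeness proof, where closedness of finitely generated cones makes the whole issue disappear.
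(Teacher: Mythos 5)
Your proposal follows essentially the same route as the paper's proof: reduce to the affine/linear normal form as in Theorem~\ref{completeness}, use the hypothesis that all models are valued in $[0,\infty)$ to stay inside $\reals^n$, apply Hurwicz's general form of the Farkas' Lemma to conclude that the target functional lies in the \emph{closure} of the (not necessarily closed) cone generated by the axioms, and read a finite nonnegative combination back into $\LLs$ via the derived weighted-tensor rule. The points you make more explicit than the paper---including the nonnegativity functionals $x_i\ge 0$ among the generators, and demanding an exact finite certificate for $L+\e$ with the perturbation confined to the constant term rather than mere approximation of the whole functional in the dual space---concern precisely the step the paper itself leaves at the level of a sketch (``pick an appropriate $p\in P$ \dots\ to obtain a judgement $\e$-close to $\vdash\psi$''), so your identification of that conversion as the crux is accurate rather than a departure from the paper's argument.
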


\begin{proof}
	We assume that $S$ is not finite (the finite case is covered by Theorem~\ref{completeness}) 
	and in normal form.
	%
We further assume that the constant $\un$ is never used neither in $S$ nor in $\psi$.
This will guarantee us to work on linear maps, rather than affine ones. The generalisation to the case of affine maps can be done by invoking the Fundamental Theorem of Linear Inequalities (for details see \eg~\cite[Corollary~7.1h]{Schrijver1998}).
	%
	Without loss of generality, as in Theorem~\ref{completeness}, we can assume that $\vdash\psi$ is provably equivalent to  
	\begin{equation*}
		(\bigotimes_{i \leq n} a_i*p_i ) \vdash (\bigotimes_{i \leq n} b_i*p_i )
	\end{equation*} 
	and the non-assertive judgements in $S$ are 
	\begin{equation*}
		\left\{
		\begin{aligned} 
			(\bigotimes_{i \leq n} a^1_i*p_i ) 
			&\vdash (\bigotimes_{i \leq n} b^1_i*p_i )
			\\
			(\bigotimes_{i \leq n} a^2_i*p_i ) 
			&\vdash (\bigotimes_{i \leq n} b^2_i*p_i ) 
			\\
			&\dots
		\end{aligned}
		\right.
	\end{equation*}
	Let $\X = \reals^{n}$ and $\Y = \reals^\naturals$, and denote by $\X^*$, $\Y^*$
	their dual spaces, respectively (\ie, $\X^*$ is the set of all continuous linear functionals $\X \to \reals$).
	Define the maps $T_S \colon \X \to \Y$
	and $t_\psi \colon \X \to \reals$, 
	for $x_i \in \reals$, $j \in \naturals$ as follows
	\begin{align*}
		T_S(x_1, \dots, x_n)(j) &\coloneq \sum_{i \leq n} (a_i^j - b_i^j) x_i 
		&&
		t_\psi(x_1, \dots, x_n) \coloneq \sum_{i \leq n} (a_i - b_i) x_i 
	\end{align*}
	In abstract terms, the set of judgements in $S$ can be thought of as
	the inequality $T_S(x_1, \dots, x_n) \geq 0$ and, similarly, $\psi$ as the 
	inequality $t_\psi(x_1, \dots, x_n) \geq 0$.
	Define the sets
	\begin{align*}
		V_S &= \{ x^* \in \X^* \mid \forall x \in \X.\; T_S(x) \geq 0 \text{ implies } x^*(x) \geq 0 \} \,,
		\\
		Z_S &= \{ x^* \in \X^* \mid \exists y^* \in \Y^*. \; x^* = T^*_S(y^*) \text{ and } y^* \geq 0 \} \,.
	\end{align*}
	where $T^*_S \colon \Y^* \to \X^*$ is the adjoint of $T_S$ 
	uniquely defined by the adjoint property as $T^*_S(y^*) \coloneq y^* \circ T_S$.
	
	By Hurwicz's general form of Farkas' Lemma~\cite{Hurwicz2014}, 
	we know that $V_S$ is the \emph{regularly convex envelope} of $Z_S$, which 
	corresponds to the topological closure $\overline{Z_S}$ of $Z_S$ for finite-dimensional vector spaces as $\X^*$ is. 
	Let $\pi_k \colon \Y \to \reals$ denote the $k^{\text{th}}$-projection function defined 
	as $\pi_k( (x_i)_{i \in \naturals} ) = x_k$. Clearly $\pi_k \in \Y^*$. Moreover, 
	the finite positive linear combinations of these projections, forms a dense
	subset $P$ of $\Y^*_+ = \{ y^* \mid y^* \geq 0 \}$. 
	Notice that $Z_S = T^*_S(\Y^*_+)$. Since $T^*_S$ is a continuous function, $\overline{Z_S} = \overline{T^*_S(P)}$.
	As we have established that $V_S = \overline{Z_S}$, any element of $V_S$
	can be approached arbitrary close by a map of the form $T^*_S(p)$,
	for $p \in P$. In simpler terms, as $T^*_S(p) = p \circ T_S$, any element of 
	$V_S$ is arbitrary close to a finite positive linear combination of judgements
	in $S$.
	
	By hypothesis, $\psi$ is a semantic 
	consequence of $S$, meaning that, $t_\psi \in V_S$.
	From this, to prove $\vdash \psi$ from $S$ in $\LLs$ we 
	just need to pick an appropriate $p \in P$ (which exists) 
	and replicate the finite positive linear combination 
	it represents by using the derived rule
	\begin{equation*}
		\infrule{\phi_1\vdash\psi_1 & \phi_2\vdash\psi_2
		}{r*\phi_1\ot s*\phi_2\vdash r*\psi_1\ot s*\psi_2}
	\end{equation*}
	to obtain a judgement $\vdash\psi'$ that is $\e$-close to $\vdash\psi$.
As we assumed $\psi$ to be provably equivalent to $\vdash \psi$,
by chaining the two proofs together, we are done.
\end{proof}

\gcut{\section{Encodings}
  
After having developed three propositional logics for the Lawvere quantale 
let us now ask how they relate to other logics, namely, the classical Boolean 
logic, {\L}ukasiewicz logic~\cite{Lukasiewicz30}, and Ben Yaacov's
continuous propositional logic~\cite{Yaacov13}.
 
 
\subsection{Encoding of Boolean Propositional Logic}
 
\begin{theorem}[Encoding Boolean Logic] \label{th:boolenc}
The theory $\T$ in $\LL$ axiomatized by the only axiom 
\begin{equation*}
(\textsc{tnd}) \quad
\vdash\phi\lor(\lnot\phi)
\tag{\it tertium non datur}
\end{equation*}
coincides with classical Boolean logic ($\Prop$ is the set of atomic propositions, 
and $\lor,\land,\lnot$ are the classic disjunction, conjunction and negation respectively). More exactly, a judgement of propositional logic (note that it also belongs to $\LL$) is provable in the classic Boolean logic iff it is provable in $\LL$ from $\T$.
\end{theorem}
\begin{proof}
Consider the judgement $\gamma$ in Boolean logic with atomic 
propositions in $\Prop$. This is, obviously, also a judgement in $\LL$.
	
Note that a model of Boolean logic is a map $\M \colon \Prop \to \{\textit{true},\textit{false}\}$ that extends, as standard, to all logical symbols. If we consider the bijection between $ \{\textit{true},\textit{false}\}$ and $\{0,\infty\}$ that maps $\textit{true}$ to $0$ and $\textit{false}$ to $\infty$, we discover that $\M$ is also a model of $\LL$ that satisfies the axiom (\textsc{tnd}), hence it is a model of $\T$. 
Similarly, any model of $\T$, because it is a model of (\textsc{tnd}), interprets all the formulas as $0$ or $\infty$ and it is not difficult to verify that it is a model of Boolean logic.
	
From the soundness of Boolean logic we know that if $\gamma$ is 
provable in Boolean logic, it is satisfied by all its models. But since the 
models of Boolean logic are the models of $\T$ and vice versa, we 
obtain that all the models of $\T$ are models of $\gamma$. 
Because $\T$ is finitely axiomatized, we can apply Theorem~\ref{completeness} 
and we get that $\gamma$ is provable in $\T$.
	
Similarly, if $\gamma$ is provable from $\T$, then in all the 
models of $\T$, $\gamma$ is satisfiable. But the models of $\T$ are 
exactly the models of Boolean logic. Hence, in Boolean 
logic we have that all the models satisfy $\gamma$. Further, the 
completeness of Boolean logic guarantees that $\gamma$ 
is provable in Boolean logic.
\end{proof}

\subsection{Encoding of {\L}ukasiewicz propositional logic}

Next we show that {\L}ukasiewicz logic (\L) can be encoded 
in $\LLun$. Before doing that, we briefly recall the syntax and 
semantics of $\L$-formulas and refer to \cite{HajekBook} for further details.

The formulas of \L\ are freely generated from the atomic 
propositions by three logical connectives:
\begin{equation*}
 \top \mid \lnot \phi \mid \phi \to \psi	
 \tag{{\L}ukasiewicz}
\end{equation*}
The models of $\L$ are assignments $w \colon \Prop \to [0,1]$ of the propositional 
symbols to the unit interval, which are uniquely extended to the formulas as
shown below%
\footnote{We consider the semantics used, \eg\ in~\cite{Yaacov13}, where $0$
corresponds to truth and any $r \in (0, 1]$ to a degree of truth/falsity, where $1$ is absolute falsity.} 
\begin{align*}
w(\top) &\coloneq 0 \\
w(\lnot \phi) &\coloneq 1 - w(\phi) \\
w(\phi \to \psi) &\coloneq \max\{ w(\phi) - w(\psi), 0 \} \,.
\end{align*}
A formula $\phi$ is satisfied by a model $m$ whenever $w(\phi) = 0$.

The following is an Hilbet-style axiomatisation for $\L$, 
\begin{align*}
(\text{A1}) \quad &(\phi \to \psi) \to \phi \\
(\text{A2}) \quad &((\theta \to \phi) \to (\theta \to \psi)) \to (\psi \to \phi) \\
(\text{A3}) \quad &(\phi \to (\phi \to \psi)) \to (\psi \to (\psi \to \phi)) \\
(\text{A4}) \quad &(\phi \to \psi ) \to ( \lnot \psi \to \lnot\phi )
\end{align*}
Deduction is defined in the natural way with \textit{modus ponens} being the only
rule of inference. This axiomatization is (weak) complete \wrt\ the semantics 
above~\cite{Chang58,RoseR58}.

The encoding function $e \colon \L \to \LLun$ mapping $\L$-formulas to 
$\LLun$-formulas is defined as follows, for $p \in \Prop$ and 
$\phi, \psi \in \L$
\begin{align*}
  e(p) &\coloneq p \lor \un &
  e(\lnot \phi) &\coloneq e(\phi) \lol \un
  \\
  e(\top) &\coloneq \top &
  e(\phi \to \psi) &\coloneq e(\psi) \lol e(\phi) \,.
\end{align*}

\begin{theorem}[Encoding {\L}ukasiewicz Logic] \label{th:lukaenc}
Let $\T$ be the theory in $\LLun$ axiomatized by the only axiom
\begin{equation*}
(\textsc{1-bbd}) \quad
\lnot\lnot \psi \vdash \un \lol \psi \,.
\end{equation*}
Then, a formula $\phi$ of {\L}ukasiewicz logic is provable 
in $\L$ iff the judgement $\vdash e(\phi)$ is provable in $\LLun$ from $\T$.
\end{theorem}
\begin{proof}
Note that, a model $\M$ in $\LLun$ satisfies (\textsc{1-bdd}) iff
\begin{equation}
  \text{for all $\phi \in \LLun$}\,, \quad
  \M(\phi) < \infty \; \Longrightarrow \; \M(\phi) \leq 1 \,.
  \label{bdd}
\end{equation}

($\Leftarrow$) Let $\phi \in \L$ and assume $\vdash e(\phi)$ is provable in
$\LLun$ from $\T$. Let $w \colon \Prop \to [0,1]$ be a model of $\L$.
Obviously, it is also a model for $\LLun$ that moreover 
satisfies \eqref{bdd}, hence $m \models_{\LLun} \T$. By Theorem~\ref{soundness}, 
$w$ satisfies the judgement $\vdash e(\phi)$, or equivalently $w(e(\phi)) = 0$.
By an easy induction on $\phi$, one can show that for any $\L$-model, 
$w(\phi) = w(e(\phi))$. Thus $w(\phi) = 0$. As the $\L$-model $w$ in the argument
above was generic, we just proved that $\phi$ is a tautology in $\L$. 
By Chang-completeness~\cite{Chang58} of $\L$, we then have that $\phi$ is
provable in $\L$.

($\Rightarrow$) Assume $\phi$ is provable in $\L$. 
Let $\M \colon \Prop \to \extreals$ be a model of $\T$. By an easy induction
on $\phi$, one can prove that, for any model $\M$ of $\T$, $\M(e(\phi)) \leq 1$
---this follows because for propositional variables $p \in \Prop$,
$\M(e(p)) = \min\{\M(p), 1 \} \leq 1$, and the semantical interpretation of the
other logical connectives is closed is well-defined in $[0,1]$. Thus $\M$ is 
a proper model for $\L$. As we assumed, $\phi$ is provable in $\L$, by 
the soundness of the axiomatization for $\L$, we have $\M(\phi) = 0$.
As before, $\M(\phi) = \M(e(\phi))$, Thus $\M(e(\phi)) = 0$. As the theory $\T$
is finitely axiomatizable, by Theorem~\ref{completeness}, $\vdash e(\phi)$
is provable from $\T$.
\end{proof}

\subsection{Encoding of continuous propositional logic}
At last, we consider the encoding in $\LLs$ of \emph{continuous propositional logic} (C\L), which is a conservative extension of {\L}ukasiewicz logic proposed by Ben Yaacov~\cite{Yaacov13} to reason about Banach spaces.

The syntax and semantics of C\L\ is that of $\L$, to which we add one single extra
unary logical connective $\Half\phi$ with the following semantical
interpretation in $[0,1]$:
\begin{equation*}
  w(\Half\phi) \coloneq \frac{1}{2} w(\phi) \,.
\end{equation*}

C\L\ has a (weak) complete axiomatization, consisting of the axioms (A1)--(A4) of 
$\L$ to which we add 
\begin{align*}
(\text{A5}) \quad &\Half\phi \to (\phi \to \Half\phi) \\
(\text{A6}) \quad &(\phi \to \Half\phi) \to \Half\phi
\end{align*}

For the encoding of C\L\ into $\LLs$, we consider as encoding function
$e \colon \text{C\L} \to \LLs$ the obvious extension of the one used for 
{\L}ukasiewicz logic and such that:
\begin{equation*}
  e(\Half\phi) \coloneq \frac{1}{2}*e(\phi) \,.
\end{equation*}

\begin{theorem}[Encoding Continuous Logic]
Let $\T$ be the theory in $\LLs$ axiomatized by the only axiom
\begin{equation*}
(\textsc{1-bbd}) \quad
\lnot\lnot \psi \vdash \un \lol \psi \,.
\end{equation*}
Then, a formula $\phi$ of continuous prop. logic is provable 
in C\L\ iff the judgement $\vdash e(\phi)$ is provable in $\LLs$ from $\T$.
\end{theorem}
\begin{proof}
Similar to the proof of Theorem~\ref{th:lukaenc}.
\end{proof}}

\section{Inference Systems for the Lawvere Quantale}
\label{sec:infSystems}

In this section, we extend further the concept of proof systems in \LLQ\  
and discuss \emph{inference systems}. These are obtained by requiring 
that a theory in $\LLs$ obeys extra inferences (proof rules), in addition to 
its axioms and to the inferences of $\LLs$. Because in \LLQ\ inferences 
cannot be internalized as judgements (see Fact~\ref{fact:wdt}), the effect 
of closing a theory by a rule does not always produce another theory, 
as happens in classical logics, but often a set of theories. 

Note that, all the inferences of \LLQ\ have finite sets of hypotheses 
and so, when one works with theories of \LLQ, will only have derived 
rules that contains a finite set of hypotheses. 
But if we now allow ourselves to work with inferences that might not be 
derived from the proof rules, we might have to handle inferences with 
a countable set of hypotheses. 
In fact, for our purpose, we are interested in only one type of such inferences 
that we shall call \emph{inductive inferences}.

\begin{definition}[Inductive inferences] \label{def:inductiveinf}
An \emph{inductive inference} in LLQ  is an inference of type 
\begin{equation*}
\infrule{\{\vdash\phi_i\mid i\in\mathbb N\}}{\vdash\psi}
\end{equation*}
such that for any $i,j\in\naturals$ with $i<j$, $\phi_j\vdash\phi_i$.
\end{definition}

Observe that the inferences with a finite number of hypotheses are 
all particular cases of inductive inferences, since they all can be equivalently
represented firstly as inferences with only one hypothesis (the conjunction of all the hypotheses); and secondly, this hypothesis can be seen as a constant sequence of hypotheses.
Hence, all the proof rules of \LLQ\  
and all the inferences that can be derived from them are inductive inferences.
Last but not least, observe that any axiom $\vdash\psi$ can be seen as an inductive inference with an empty set of hypotheses.

\begin{definition}
A theory $\T$ of $\LLs$ is \emph{closed under the inductive inference} 
\begin{equation*}
\infrule[I]{\{\vdash\phi_i\mid i\in\mathbb N\}}{\vdash\psi}
\end{equation*}
if for any diagrammatic extension $\T^+$ of $\T$ we have that 
\begin{itemize}
  \item either $\vdash\psi\in\T^+$, 
  \item or there exists $\e\in\preals$, $\e>0$ and 
	$i\in\mathbb N$ such that $\phi_i\vdash\e\in\T^+$.
\end{itemize}
\end{definition}
When a model $\M$ is such that  $\{\vdash\phi_i\mid i\in I\}\models_\M\vdash\psi$, we say that it is a model of the inference $I$.

Observe that the previous definition makes sense semantically, since it implies that a theory $\T$ of $\LLs$ is closed under the inference $I$ iff any model $\M$ of $\T$ is a model of $I$.

\begin{definition}[Inference System]
An \emph{inference system in $\LLs$} is a set $\R$ of inductive 
inferences in $\LLs$.

We say that a model satisfies an inference system if it is a model for 
each inference in the system. 
	
A theory is closed with respect to an inference system, 
if it is closed under each inference in the system.
\end{definition}

Since we can see any axiom of a theory as a particular inference with an empty set of hypotheses, we see that any finite
axiomatic system of \LLQ\ is, in fact, a particular case of an inference system. 
There exists, however, interesting mathematical theories, such as the quantitative equational logic, that cannot be presented by using an axiomatic system in \LLQ, but only by using an inference system.

Because the finite axiomatic systems in \LLQ\ are particular type of
inference systems, we can read the completeness theorem for 
finitely-axiomatized theories proven before, as a particular case of 
completeness for inference systems. In what follows, we will enforce 
these results and prove completeness results directly for inference systems. 

\begin{theorem}[Completeness for inference systems]
Let $\R$ be an inference system of $\LLs$ and 
\vspace*{-6pt}
\begin{equation*}
\infrule[I]{\{\vdash\phi_i\mid i\in\mathbb N\}}{\vdash\psi}
\end{equation*}
an inductive inference. If $I$  is satisfied by all the models of $\R$, 
then all the finitely axiomatized theories closed under $\R$ are 
also closed under $I$.
\end{theorem}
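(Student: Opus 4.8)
The plan is to turn the syntactic closure condition into a purely semantic one and then chain two inclusions of model classes. Recall from Lemma~\ref{characteristictheory} and Corollary~\ref{characteristicmodel} that for a finitely axiomatized theory $\T$ of $\LLs$ the diagrammatic extensions of $\T$ are exactly the diagrammatic theories $\T_\M$ of the models $\M$ of $\T$, and that a judgement belongs to $\T_\M$ precisely when it is satisfied by $\M$. Consequently, closure under an inductive inference can be read off the models of $\T$: using $\M\models(\vdash\phi)$ iff $\M(\phi)=0$ and $\M\models(\phi\vdash\e)$ iff $\M(\phi)\ge\e$, the theory $\T$ is closed under $I$ iff every model $\M$ of $\T$ for which $\M(\phi_i)=0$ for all $i$ also has $\M(\psi)=0$, i.e.\ iff every model of $\T$ is a model of $I$.

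First I would fix a finitely axiomatized $\T$ closed under $\R$ and show that every model of $\T$ is a model of $\R$. Let $\M$ be a model of $\T$ and let $I'\in\R$ be arbitrary, with hypotheses $\{\vdash\phi'_i\mid i\in\naturals\}$ and conclusion $\vdash\psi'$. The theory $\T_\M$ is a diagrammatic extension of $\T$ (it is diagrammatic and contains $\T$, since $\M$ models every judgement of $\T$), so applying the closure of $\T$ under $I'$ to this particular extension yields $\vdash\psi'\in\T_\M$ or $\phi'_i\vdash\e\in\T_\M$ for some $\e>0$ and some $i$. Rewriting membership in $\T_\M$ as satisfaction by $\M$, this reads $\M(\psi')=0$ or $\M(\phi'_i)>0$ for some $i$, which is exactly the statement that $\M$ is a model of $I'$. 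As $I'$ was arbitrary, $\M$ is a model of $\R$; since by hypothesis every model of $\R$ is a model of $I$, every model of $\T$ is a model of $I$.

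It remains to convert this back into the closure condition for $I$. Let $\T^+$ be any diagrammatic extension of $\T$. Invoking the correspondence once more, $\T^+=\T_\M$ for a genuine model $\M$ of $\T$, and $\M$ is a model of $I$ by the previous paragraph. Hence either $\M(\psi)=0$, giving $\vdash\psi\in\T^+$, or $\M(\phi_i)>0$ for some $i$; in the latter case any $\e$ with $0<\e\le\M(\phi_i)$ is a legitimate finite scalar of $\LLs$ and satisfies $\M\models(\phi_i\vdash\e)$, so $\phi_i\vdash\e\in\T^+$. In both cases the defining clause for $\T$ being closed under $I$ holds at $\T^+$, and as $\T^+$ was arbitrary we conclude that $\T$ is closed under $I$.

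The one delicate point---and the step I expect to be the genuine obstacle---is the passage from a diagrammatic extension of $\T$ to an actual model of $\T$. This can fail for general theories: the incompleteness phenomenon exhibited earlier shows that a consistent theory with infinitely many axioms may fail to be realised by any model, so the clean bijection between diagrammatic extensions and models can break down. It is exactly Corollary~\ref{characteristicmodel}, resting on the completeness Theorem~\ref{completeness} for finitely axiomatized theories, that restores this correspondence and thereby licenses the semantic reading of closure used throughout; this is where the finiteness hypothesis is indispensable. The remaining steps are routine bookkeeping: translating $\gamma\in\T_\M$ into $\M\models\gamma$ and selecting a positive finite scalar $\e\le\M(\phi_i)$.
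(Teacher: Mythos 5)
Your proposal is correct and follows essentially the same route as the paper: reduce closure under an inductive inference to the semantic statement that every model of the theory models the inference, using the correspondence between diagrammatic extensions and models of a finitely axiomatized theory (Corollary~\ref{characteristicmodel}), and then chain the two model-class inclusions. The only difference is presentational: you spell out explicitly that every model of $\T$ is a model of $\R$ via the extension $\T_\M$, a step the paper delegates to its remark that closure under an inference is equivalent to all models satisfying it; your identification of the finiteness hypothesis as the load-bearing ingredient is exactly right.
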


\begin{proof}
Let $\T$ be a finitely axiomatized theory closed under $\R$. 
Then any model $\M$ of $\T$ satisfies $I$, hence
\begin{itemize}
	\itemsep 0.1pt
	\item either $\M \models (\vdash \psi)$
	\item or there exist $i\in\mathbb N$ and $\e>0$ such 
	that $\M \models (\phi_i \vdash \e)$.
\end{itemize}
Applying Corollary \ref{characteristicmodel}, to $\M$ corresponds a diagrammatic theory $\T_\M$ such that $m \models (\phi \vdash \phi')$ implies $\phi\vdash\phi'\in\T_\M$; and to each diagrammatic extension $\T^+$ of $\T$ corresponds a model $\M_{\T^+}$ such that $\M_{\T^+} \models (\phi\vdash\phi')$ implies $\phi\vdash\phi'\in\T^+$.

Consider now an arbitrary diagrammatic extension $\T^+$ of $\T$. Since $\M_{\T^+}$ is a model of $\T$, we have that
\begin{itemize}
	\itemsep 0.1pt
	\item either $\M_{\T^+} \models (\vdash \psi)$, implying $\vdash\psi\in\T^+$,
	\item or there exist $i\in\mathbb N$ and $\e>0$ such that 
	$\M_{\T^+} \models (\phi_i \vdash \e)$, implying $\phi_i\vdash\e\in\T^+$.
\end{itemize}
Hence, $\T$ is closed under $I$.
\end{proof}

\subsection{The inference system of Quantitative Algebra}

In this section, we show how we can use \LLQ\ as a support for quantitative equational reasoning~\cite{Mardare16}.
Quantitative algebras~\cite{Mardare16} have been introduced, as a generalization of universal algebras, meant to axiomatize not only congruences, but algebraic structures over extended metric spaces. Given an algebraic similarity type $\Omega$,
a quantitative algebra is an $\Omega$-algebra supported by an extended metric space, so that all the algebraic operators are nonexpansive. Such a structure can be axiomatized using an extension of equational logics that uses, instead of equations of type $s=t$ for some terms $s,t$, quantitative equations of type $s=_\e t$ for some $\e\in\preals$. This quantitative equation is interpreted as "the distance between the interpretation of $s$ and $t$ is less or equal to $\e$".

In the theory of quantitative algebras, $=_\e$ are treated as classic Boolean predicates, so in any model, $s=_\e t$ is either \textit{true} or \textit{false}. However, a different way to look to this, is to actually think that we only have one equality predicate and $s=t$ is interpreted in the Lawvere quantale, thus allowing us to reason about the distance between $s$ and $t$. For instance, instead of $s=_\e t$, we could could use the syntax of $\LLs$, treate $s=t$ as an atomic proposition, and write $\e\vdash s=t$. This allows us to properly reason about extended metric spaces and encode, in our logic, the entire theory of quantitative equational reasoning.

In this section, we show how such an encoding is defined. $\LLs$ has already all the necessary ingredients to do this work. However, since $\LLs$ is only propositional, the way to do this is to treat all the equations as atomic propositions. This is exactly how we encode the classic equational logic into Boolean propositional logic. And, as in the classic case, while this is sufficient, it unfortunately requires an infinite set of axioms. 

As we have already anticipated, the theory of quantitative equational logic requires an inference system in $\LLs$, and it cannot be only encoded using an axiomatic system. This is because a judgement of type $s=_\e t\vdash s'=_\delta t'$ in quantitative reasoning corresponds to the inference $\dfrac{\e\vdash s=t}{\delta\vdash s'=t'}$ which cannot be internalized due to the fact that in $\LLs$ the deduction theorem fails.

Concretely, assuming an algebraic similarity type $\Omega$ and a set $X$ of variables, we construct all the possible algebraic terms. Let $\Omega X$ be the set of these terms. We define $\LLs$ for $\Prop=\{s=t\mid s,t\in\Omega X\}$. This gives us the syntax we need. 

The original axioms of quantitative equational logic are presented in Table~\ref{table:QA}, where they are stated for arbitrary terms $s,t,u, s_1,\dots,s_n,t_1,\dots,t_n\in\Omega X$, for arbitrary $n$-ary operator $f:n\in\Omega$, arbitrary positive reals $\e,\e'\in\preals$ and arbitrary decreasing convergent sequence $(\e_i)_{i\in\mathbb N}$ of positive reals with limit $\e$. These axioms, together with the standard substitution, cut and assumption rules, provide the proof system of quantitative equational logics. 

\begin{table}[tb]
	\begin{align*} 
		\text{\textbf{(Refl)}} \quad
		&  \vdash t=_0 t \quad \, \\ 
		\text{\textbf{(Symm)}} \quad 
		& s=_\e t\vdash t=_\e s \, \\ 
		\text{\textbf{(Triang)}} \quad
		&  t=_\e u, u=_{\e'}s\vdash t=_{\e+\e'}s \, \\ 
		\text{\textbf{(Max)}} \quad 
		& s=_\e t\vdash s=_{\e+\e'} t \, \\ 
		\text{\textbf{(Nexp)}} \quad
		& \{s_i=_\e t_i\mid i\leq n\}\vdash f(s_1,\dots,s_n)=_\e f(t_1,\dots,t_n) \quad \, \\ 
		\text{\textbf{(Cont)}} \quad 
		& \{s=_{\e_i} t\mid i\in\naturals\}\vdash s=_{\e}t \, 
	\end{align*}
	\caption{Quantitative algebras}
	\label{table:QA}
\end{table}

When translated into LLQ, the substitution, cut, and assumption rules are embedded in the way a proof operates. And the axioms of quantitative equational logic can be translated into the corresponding inferences in Table~\ref{table:qainf}.

\begin{table}[tb]
\begin{gather*} 
\infrule[refl]{}{\vdash t=t}
\qquad
\infrule[symm]{\e \vdash s= t}{\e \vdash t=s}
\qquad
\infrule[triang]{\e\vdash t=u & \e'\vdash u=s}{\e\ot\e'\vdash t=s}
\qquad
\infrule[max]{\e\vdash s=t}{\e\ot\e'\vdash s=t}
\\[2ex]
\infrule[nexp]{\e\vdash s_1=t_1 & \dots & \e\vdash s_n=t_n
	}{\e\vdash f(s_1, \dots ,s_n)=f(t_1,\dots,t_n)}
\qquad
\infrule[cont]{\e_1\vdash s=t & \dots & \e_i\vdash s=t & \dots}{\e\vdash s=t}
\end{gather*}
	\caption{Quantitative algebras}
	\label{table:qainf}
\end{table}

However, in this translation, the set of axioms is actually infinite, because the terms
equalities are names for atomic propositions and, as such, we will have, for instance, a
(\textsc{refl}) inference rule for each term $t$, a (\textsc{symm}) inference rule for
each tuple of terms $s$ and $t$, etc. This is not surprising, as the same situation
happens when we encode the classic equational logic developed for universal algebras in
propositional logic. 

Observe also that the axiom (\textsc{cont}) of quantitative equational logic, which is an infinitery axiom, is translated into $\LLs$ as an inductive inference rule. Indeed, first of all, we can convert each hypothesis of type $\e_i\vdash s=t$ into the equivalent one, $\vdash\e_i\lol(s=t)$. And secondly, because for $i\geq j$ we have $\e_i\leq\e_j$, and this implies that $\e_j\lol(s=t)\vdash\e_i\lol(s=t)$ is a theorem in $\LLs$.

A limitation of this encoding comes from the fact that we need an infinite set of
inductive inferences to rule the quantitative equational reasoning. But this is similar
with what is happening in the classical logic, when one encodes the classic equational
logic into propositional logic. And as in the classic case, this can be avoided by
extending $\LLs$ with predicates. This would allow us to present a more compact and
finitary inference system.  We leave this extension for future work.

\section{Conclusions}

In this paper, we developed three propositional logics interpreted in the Lawvere
quantale. We develop natural deduction systems for them, which collect rules similar to
rules well-known from other logics. We show that despite their natural arithmetic
interpretation, these logics manifest important metatheoretical original features that
differentiate them from other related logics. We prove that the logics are incomplete in
general, but complete if we restrict to finitely-axiomatized theories. We present a
normalization algorithm that proves the consequence is decidable and, when made efficient,
establishes complexity bounds.  Although one of our logics and results for it are known in
the context of product logics, our proofs are novel in all cases.

We also show that  quantitative equational logic can all be encoded in our new
settings. Moreover, we demonstrate that for this class of logics one can either use
axiomatic systems or systems of inferences, the second providing a higher expressivity
from the point of view of mathematical theories that can be developed. 

It would be interesting to extend our logic to allow products as well as sums. In that
case we would be interested in systems of polynomial inequalities, when the
Krivine--Stengle Positivstellensatz would surely come into play~\cite{K64,S74} in place of
the Farkas' Lemma. 

\section*{Acknowledgements}
Prakash Panangaden's visits to the University of Edinburgh have been supported by a
Strategic Talent Grant from the School of Informatics, University of Edinburgh under a
scheme funded by Huawei. His research has also been supported by research grants from
NSERC.  

\bibliographystyle{./entics}
\bibliography{main}

\end{document}

\newpage


\appendix

\title{Appendix}

This appendix contains three sections. Section \ref{A} contains some proofs of important
lemmas that were only stated in the paper, and a complete proof of the completeness
theorem. Section \ref{B} contains a set of results that characterize the deduction
aparatus in our logics and are used in the completeness proofs. Section
\ref{sec:normalizationAlg} contains a detailed presentation of the normalization
algorithm.


\section{Complete proofs of the main results}\label{A}

In this section, we present a couple of missing proofs of some lemmas.

\paragraph*{Proof of the Totality Lemma}
The following statements are provable in all \LLQ.
\begin{enumerate}[itemsep={1ex}, topsep={1ex}]
\item 
$\text{If }
\Big[
\dfrac{\vdash\rho~~~~\phi\vdash\psi}{\vdash\theta}
\text{ and }
\dfrac{\vdash\rho~~~~\psi\vdash\phi}{\vdash\theta}
\Big]
\text{ then }
\dfrac{\vdash\rho}{\vdash\theta}$.

\item 
$\text{If }
\Big[
\dfrac{\vdash\rho~~~~\vdash\lnot\phi}{\vdash\theta}
\text{ and }
\dfrac{\vdash\rho~~~~\vdash\lnot\lnot\phi}{\vdash\theta}
\Big]
\text{ then }
\dfrac{\vdash\rho}{\vdash\theta}$.
\end{enumerate}	

\begin{proof}(Proof of Lemma~\ref{totalityl1})	
	1. It is a consequence of the following after observing that 
	$\phi\vdash\psi$ and $\vdash\phi\lol\psi$ are equivalent. From
	\begin{equation*}
		\dfrac{\vdash\rho\land(\phi\lol\psi)}{\vdash\theta}~~~\text{ and }~~~\dfrac{\vdash\rho\land(\psi\lol\phi)}{\vdash\theta}
	\end{equation*}
	we derive, applying Lemma~\ref{meta4}, 
	$$\dfrac{\vdash(\rho\land(\phi\lol\psi))\lor(\rho\land(\psi\lol\phi))}{\vdash\theta}.$$ Hence, 
	$$\dfrac{\vdash\rho\land((\phi\lol\psi)\lor(\psi\lol\phi))}{\vdash\theta}.$$ 
	Involving (\textsc{tot}), we get $$\dfrac{\vdash\rho}{\vdash\theta}$$.
	
	2. It is a consequence of the following. From 
	\begin{equation*}
		\dfrac{\vdash\rho\land(\lnot\phi)}{\vdash\theta}~~~\text{ and }~~~\dfrac{\vdash\rho\land(\lnot\lnot\phi)}{\vdash\theta}
	\end{equation*}
	we derive, applying Lemma~ref{meta4}, 
	$$\dfrac{\vdash(\rho\land(\lnot\phi))\lor(\rho\land(\lnot\lnot\phi))}{\vdash\theta}.$$
	Hence, 
	$$\dfrac{\vdash\rho\land((\lnot\phi)\lor(\lnot\lnot\phi))}{\vdash\theta}.$$ 
	Involving (\textsc{wem}), we get 
	\begin{equation*}
		\dfrac{\vdash\rho}{\vdash\theta} \,.
	\end{equation*}
\end{proof}

\begin{proof}[Proof of Lemma \ref{max}]
	1. This is a consequence of soundness.  If a theory is inconsistent, then every formula
    of the form \(\varepsilon\vdash p\) as well as \(p\vdash\varepsilon\)can be proved so
    it is impossible to assign a value to formulas.
	\\2. For a model $\M$, define $\T_\M=\{\Gamma\vdash\phi\mid
    \Gamma\models_\M\phi\}$. Suppose $\vdash\phi\lor\psi\in \T_\M$.  
	Then either $\phi\vdash\psi$ or $\psi\vdash\phi$ are in $\T_\M$. 
	Assume $\phi\vdash\psi\in\T_\M$, then $\vdash \psi \lollol \phi\lor\psi\in\T_\M$. 
	Since $\vdash\phi\lor\psi\in \T_\M$, so $\vdash\psi\in\T_\M$.
\end{proof}

\paragraph*{Proof of Lemma~\ref{characteristictheory}}

In $\LLs$ we have that
\begin{enumerate}
  \item Every diagrammatic theory has a unique model. 
  \item Every model satisfies a unique diagrammatic theory.
  \item A theory is diagrammatic iff it is maximal consistent.
\item Every disjunctive consistent theory has a unique diagrammatic extension; and a unique model.
	
\end{enumerate}

\begin{proof} 

	1. Let $\T$ be a diagrammatic theory. For each $p\in\Prop$, there exists
    $v_p\in\extreals$ such that $p\dashv\vdash v_p\in\T$, where if $v_p=\infty$ the symbol
    $v_p$ is replaced by $\bot$ in the respective judgement. 
	
	We construct the model $\M_\T \colon \Prop \to \extreals$ by 
	$\M_\T(p)=v_p$ for any $p\in\Prop$.  Such an assignment of values to the propositional
    variables defines a unique model.  Let $\M'$ be an arbitrary model of $\T$. For 
	each $p\in\Prop$, $\M'(p)=v_p$, hence $\M'=\M_\T$.  Thus $\M_{\T}$ is the unique model
    satisfying $\T$.

	2. Let $\M$ be a model of $\LLs$. We showed in Lemma~\ref{max} that 
	$\T_\M=\{\Gamma\vdash\phi\mid\Gamma\models_\M\phi\}$
	is a consistent theory admitting $\M$ as a model. We prove it is 
	diagrammatic. Let $\phi\in\LLs$ and $v_\phi=\M(\phi)\in\extreals$. 
	If $v_\phi=\infty$, then $\phi\models_\M\bot$, hence $\phi\vdash\bot\in\T_\M$.
	Otherwise, $v_\phi\models_\M\phi$ and $\phi\models_\M v_\phi$, 
	implying $\phi\dashv\vdash v_\phi\in\T_\M$. 
	We prove that it is unique: suppose there exists a different 
	diagrammatic theory $\T\neq\T_\M$ that is satisfied by $\M$. 
	Let $\phi\vdash\psi$ be a judgement that is present in only one 
	of the theories. Suppose $\M(\phi)=r$ 
	and $\M(\psi)=s$. Since both these theories are diagrammatic, 
	$\phi\dashv\vdash r,~\psi\dashv\vdash s\in\T\cap\T_\M$. 
	Since $\phi\vdash\psi$ belongs to one of these theories, in this one 
	we also have $r\vdash s$. 
	Since this theory must be consistent, we obtain that $r\geq s$. 
	But then, $r\vdash s$ is a theorem in $\LLs$, hence is present 
	in the two theories. But then, using the fact that 
	$\phi\dashv\vdash r,~\psi\dashv\vdash s\in\T\cap\T_\M$, we get
    $\phi\vdash\psi\in\T\cap\T_\M$. Hence, the assumption 	that exists a judgement
    $\phi\vdash\psi$ that belongs to only one 	of the theories is false, implying that the
    two theories coincide.

	
	
	3.	($\Rightarrow$) Suppose $\T$ is diagrammatic and let $\phi\vdash\psi\not\in\T$. 
	Then $\M_\T(\phi)<\M_\T(\psi)$. Hence, there exists
	$r,s\in\preals$ such that $\M_\T(\phi)<r<s<\M_\T(\psi)$. As 
	$\T$ is diagrammatic and disjunctive, $r\vdash\phi,~\psi\vdash s\in\T$. 
	From here we get that in the theory generated by 
	$\T\cup\{\phi\vdash\psi\}$ we can prove $r\vdash s$ for 
	$r<s$, which in $\LLs$ is sufficient to prove inconsistency.
	\\($\Leftarrow$) Suppose $\T$ is maximal consistent. We firstly prove that it is disjunctive. Suppose $\vdash\phi\lor\psi\in\T$ but $\vdash\phi\not\in\T$. Then, from the maximality, there must exists a proof for $\vdash\bot$ in $\T\cup\{\vdash\phi\}$. Similarly, if $\vdash\psi\not\in\T$, there must exists a proof for $\vdash\bot$ in $\T\cup\{\vdash\psi\}$. Combining these two proofs, one can get a proof of $\vdash\bot$ from $\T\cup\{\vdash\phi\lor\psi\}=\T$ - contradictions with the consistency of $\T$. Hence $\T$ is disjunctive and consistent. Hence, it is diagrammatic.
	
	4. Consider a disjunctive consistent theory $\T$ and let $p\in\mathbb P$. As it is disjunctive, for any $r\geq 0$ we can either prove $p\vdash r$ or $r\vdash p$. This determines a unique real $v_p$ such that if $p\vdash r\in\T$, then $v_p\geq r$, and similarly if $r\vdash p$. Thus, we have a model of $\T$. Further, we use the other cases.
\end{proof}

\begin{proof} (Complete proof of Theorem~\ref{completeness})
	
	The proof is made similarly for $\LL,\LLun$ and $\LLs$, adapting 
	only the arguments to the appropriate context. Hereafter we 
	present the proof for $\LLs$, as it is the most complex of all.
	
	We do the proof in three steps.
	\medskip
	
	I. Firstly, we prove the statement under the assumption that $\gamma$ is a normal judgement and $\T$ a normal theory. There are a few cases we have to consider.
	
	1. If $\gamma$ is the alethic judgement $\vdash\lnot p$ for some 
	atomic proposition $p\in\mathbb{P}$. In any model $\M$ of this 
	judgement we have $\M(p)=\infty$. Hence, the hypothesis 
	guarantees that in all models of $\T$, the value of $p$ is $\infty$. 
	This means that $p$ must be present in the axiomatization, because 
	otherwise it could have any value. Since $\T$ has a normal 
	axiomatization, it must have at least one assertive judgement of $p$. 
	
	If it is $\top\vdash p$, then all the models of $\T$ will evaluate $p$ to 
	$0$  ---impossible.
	
	If it is the finitist judgement $\vdash\lnot\lnot p$, all the models of $\T$ will evaluate $p$ to a finite value ---impossible.
	
	The only possibility left is $\vdash\lnot p$, and in this case our judgment is indeed present in $\T$. 
	
	2.  If $\gamma$ is the finitist judgement $\vdash\lnot\lnot p$ for some atomic proposition $p\in\mathbb{P}$. In any model $\M$ of this judgement we have that the value of $p$ is finite. Hence, the hypotehsis guarantees that in all models of $\T$, the value of $p$ is finite as well. This means firstly that $p$ must be present in the axiomatization, because otherwise it could have $\infty$ as value. Since $\T$ has a normal axiomatization, it must have at least one assertive judgement of $p$. 
	
	If it is $\top\vdash p$, then it also contains $\vdash\lnot\lnot p$, since in all LLQ  we can prove $$\dfrac{\vdash \phi}{\vdash\lnot\lnot\phi}.$$
	
	If it is the finitist judgement $\vdash\lnot\lnot p$ itself, then $\gamma$ is in $\T$.
	
	If it is $\vdash\lnot p$, in this case all the models of $\T$ assign to $p$ the value $\infty$ ---impossible. 
	
	3. The last case is when $\gamma$ is affine:
	\begin{equation*}
		(\bigotimes_{i \leq n} r_i*p_i ) \ot r*\un \vdash (\bigotimes_{j \leq m} s_j*q_j ) \ot s*\un
	\end{equation*}  
	for $r,s,r_i,s_j \in\preals$, $p_i, q_j \in \Prop$, with $m$ and/or $n$ possibly having  value $0$. We need to prove that our judgement is provable from the axioms of $\T$.
	
	Consider a normal axiomatization of $\T$. From the definition of normal axiomatization, some of these axioms are alethic and none of the atomic proposition present in these alethic axioms are present in any other axioms. 
	
	If some of the atomic propositions $p_i,q_j$ appear in the alethic axioms of $\T$. There are a few cases to consider:
	\begin{itemize}
		\item If the alethic axioms where these appear are of type $\top\vdash p$ for some atomic proposition $p$, then it is sufficient to cancel all instances of $p$ from our judgement and focus on proving that the new judgement can be proven from the axioms of $\T$. This is sufficient to claim that our initial judgement can be proven as well, since the following rules are provable in all \LLQ:
		$$\dfrac{\top\vdash p~~~\phi\vdash\psi}{\phi\ot r*p\vdash\psi} ~~~\text{ and }~~~\dfrac{\top\vdash p~~~\phi\vdash\psi}{\phi\vdash\psi\ot r*p}.$$
		\item If an alethic axiom of type $\vdash\lnot p_i$ is in $\T$, then 
		\begin{equation*}
			(\bigotimes_{i \leq n} r_i*p_i ) \ot r*\un \vdash (\bigotimes_{j \leq m} s_j*q_j ) \ot s*\un
		\end{equation*}  
		is provable in $\T$ because in any \LLQ the following rule is 
		provable: $$\dfrac{p\vdash\bot}{r*p\ot\phi\vdash\psi}.$$
		\item If an alethic axiom of type $\vdash\lnot q_i$ is in $\T$, then all the models of $\T$ evaluate $s_1*q_1\ot..\ot s_n*q_n\ot s*\un$ to $\infty$. Since all the models of $\T$ are models of 
		$\gamma$, we must have that all models of $\T$ evaluate $r_1*p_1\ot..\ot r_n*p_n$ to $\infty$. Hence, all the models of $\T$ evaluate at least one of the atomic propositions $p_1,..,p_n$ to $\infty$. But for each $p_i$, in the normal axiomatization of $\T$ there is either the axiom $\vdash\lnot p_i$, or the axiom $\vdash\lnot\lnot p_i$, from the definition of normal axiomatization. If for all $p_i$ $\T$ contains the finitary axioms, then no model of $\T$ evaluates $r_1*p_1\ot..\ot r_n*p_n$ to $\infty$ - impossible. Hence, at least one axiom $\vdash\lnot p_i$ must be in the axiomatization of $\T$ - and we are in the previous case.
	\end{itemize}

	It remains to establish the case when none of the atomic propositions $p_1,\dots, p_n,
    q_1,\dots,q_m$ appear in the alethic axioms of $\T$. Consequently, all these atomic
    propositions apear in finitist axioms of $\T$. Hence, we are looking after real values
    for these atomic propositions. 
	
	In this case, we will prove that our judgement 
	\begin{equation*}
		(\bigotimes_{i \leq n} r_i*p_i ) \ot r*\un \vdash (\bigotimes_{j \leq m} s_j*q_j ) \ot s*\un
	\end{equation*} 
	can be proven from the non-assertive axioms of $\T$ only. 
	
	Using commutativity and associativity of tensor product, we will reorganize both our
    judgement and the non-assertive axioms of $\T$ so that we put together different
    copies of the same atomic proposition in a tensor product and use the facts that
    $0p=\top$ and $r*\top\ot\phi\dashv\vdash\phi$.

	%
	%
	%
	%
	
	So, without losing generality, we can assume that our judgement $\gamma$ is  
	\begin{equation*}
		(\bigotimes_{i \leq k} a_i*x_i ) \ot r*\un \vdash (\bigotimes_{i \leq k} b_i*x_i ) \ot s*\un
	\end{equation*} 
	and the non-assertive axioms of $\T$ are 
	\begin{equation*}
		\left\{
		\begin{aligned} 
			(\bigotimes_{i \leq k} a^1_i*x_i ) \ot r^1*\un 
			&\vdash (\bigotimes_{i \leq k} b^1_i*x_i ) \ot s^1*\un
			\\
			&\dots
			\\
			(\bigotimes_{i \leq k} a^l_i*x_i ) \ot r^l*\un 
			&\vdash (\bigotimes_{i \leq k} b^l_i*x_i ) \ot s^l*\un
		\end{aligned}
		\right.
	\end{equation*}	
	for some positive reals $a_i, b_i, a_i^j, b_i^j, r, s, r^j, s^j$ and atomic propositions $x_1, \dots, x_k$.
	Consider the matrices $A \in \reals^{l \times k}$, 
	$C \in \reals^{k\times 1}$ and vector $\beta \in \reals^{k}$.
	\begin{align*}
		A&=\begin{pmatrix} 
			a^1_1-b^1_1 & \dots & 
			a^1_k-b^1_k \\
			\dots &\dots&\dots
			\\
			a^l_1-b^l_1 &\dots& a^l_k-b^l_k 
		\end{pmatrix}
		&&
		\beta =\begin{pmatrix} r^1-s^1\\ \dots \\r^l-s^l\end{pmatrix}
		\\
		C &= (b_1 - a_1, \dots, b_k - a_k) \,
	\end{align*}
	and let $\delta = r-s$.
	Since from the hypothesis, any model of $\T$ is a model of $\gamma$, being also our working hypothesis in this case, there exists no $x = (x_1,\dots,x_k)\in \mathbb R^k$ 
	such that 
	\begin{gather*}
		Ax + \beta \geq 0 \,, \quad Cx + \delta >0 \,.
	\end{gather*}
	
	(Integer case)
	If we are in the case of $\LL$ or $\LLun$, then $a_i,b_i,a^j_i,b^j_i$ are all positive integers, and by applying Mozkin's rational transposition problem~\cite{Motzkin51} (here adapted already for the affine case) there exist 
	$t_0 \in \mathbb{Z}$ and $t=(t_1,\dots,t_l)\in\mathbb{Z}^{l\times 1}$, such that
	\begin{gather*}
		C x + \delta = t(Ax + \beta) + t_0 \,, \quad t \geq 0 \,, \quad t_0 \geq 0 \,.
	\end{gather*}
	
	But this means that by considering $|t_i|$ copies of the $i$-th non-assertive axiom
	of $\T$, and $t_0$ copies of $\un$, and repeatedly apply the rule 
	$$\dfrac{\phi_1\vdash\psi_1~~~\phi_2\vdash\psi_2}{\phi_1\ot\phi_2\vdash\psi_1\ot\psi_2}$$
	we obtain a proof in $\T$ for
	\begin{equation*}
		\big( (\bigotimes_{i \leq k} a_i*x_i ) \ot r*\un \big) 
		\vdash 
		\big( (\bigotimes_{i \leq k} b_i*x_i ) \ot s*\un \big) \,.
	\end{equation*}
	
	(Real case)
	If we are working in $\LLs$, by the general Mozkin transposition theorem there exist 
	$t_0 \in \reals$ and $t=(t_1,\dots,t_l)\in\reals^{l\times 1}$, such that
	\begin{gather*}
		C x + \delta = t(Ax + \beta) + t_0 \,, \quad t \geq 0 \,, \quad t_0 \geq 0 \,.
	\end{gather*}
	In $\LLs$ we can repeatedly apply the derived rule
	\begin{equation*}
		\infrule{\phi_1\vdash\psi_1 & \phi_2\vdash\psi_2
		}{r*\phi_1\ot s*\phi_2\vdash r*\psi_1\ot s*\psi_2}
	\end{equation*}
	and by following the pattern from $t$, $t_0$ and obtain 
	a proof from $\T$ for 
	\begin{equation*}
		\big( (\bigotimes_{i \leq k} a_i*x_i ) \ot r*\un \big) 
		\vdash 
		\big( (\bigotimes_{i \leq k} b_i*x_i ) \ot s*\un \big) \,.
	\end{equation*}	
	
	II. The second step in the proof is to assume that $\T$ is finitely-axiomatized 
	but not necessarily normal, and $\gamma$ is in normal form. 
	Let $V$ be a finite axiomatization of $\T$ and let $\N(V)=\{V_1,\dots,V_n\}$. 
	Hence, by applying Theorem~\ref{normalrep}, any model of $\T$ is a model 
	for at least one theory axiomatized by $V_i$; and reverse, any model of 
	any theory axiomatized by $V_i$ is a model of $\T$.
	
	Since any model of $\T$ is a model of $\gamma$, we get that for any 
	$i\leq n$, any model of $V_i$ is a model of $\gamma$. But $V_i$ is a normal axiomatization, so we can apply step I. and obtain that 
	$$\text{ for all }~~i\leq n,~~~\dfrac{V_i}{\gamma}$$
	Observe that $\{V_1,\dots, V_n\}$ is the signature of a decision tree with 
	the root labelled by $V$. Applying the second totality Lemma~\ref{totalityl2}, 
	we get 
	$$\dfrac{V}{\gamma}$$ 
	hence, $\gamma$ is provable in $\T$.
	
	III. Consider now the case when neither $\T$, nor $\gamma$ are in normal 
	form. Suppose that $V$ is a finite axiomatization for $\T$, that 
	$\N(V)=\{V_1,\dots,V_n\}$, that $\N(\gamma)=\{U_1,\dots,U_m\}$, 
	and assume that the trace of the path from the root to $U_i$ in the 
	decision tree $\N(\gamma)$ is $L_i$.
	
	Let $i\leq m$. If $\M$ is a model for $V\cup L_i$, since all the models of 
	$V$ are models of $\gamma$, we get that $\M$ is a model for 
	$L_i\cup\{\gamma\}$. But because $\N(\gamma)$ is a decision tree we 
	know that 
	$$\dfrac{\gamma~~L_i}{U_i}.$$ 
	Hence, $\M$ is a model of $U_i$. 
	
	Consider an arbitrary judgement $\Delta\vdash\delta\in U_i$. 
	It is in normal form and, moreover, we just proven that any model 
	of $V\cup L_i$ is a model of $\Delta\vdash\delta$. 
	We can apply the case II. and get that 
	$$\dfrac{V~~L_i}{\Delta\vdash\delta}~~\text{ implying further that }~~\dfrac{V~~L_i}{U_i}.$$
	
	But from the construction of the decision tree $\N(\gamma)$ 
	we know that $\dfrac{U_i}{\gamma}$.
	
	Hence, for each $i\leq m$, 
	$$\dfrac{V~~L_i}{\gamma}.$$ 
	And applying the second totality Lemma~\ref{totalityl2}, 
	we get $\dfrac{V}{\gamma}.$
\end{proof}


\section{Totality Lemmas and Decision Trees}\label{B}
In this section, we prove a series of metatheorems of \LLQ\ that have been applied in some of the main results. 

\begin{lemma}\label{meta1}
	In all \LLQ, 
	\begin{equation*}
		\text{If} \quad 
		\dfrac{\vdash\phi}{\vdash\psi} 
		\quad \text{then} \quad
		\left[ \dfrac{\vdash\phi\land\theta}{\vdash\psi\land\theta} 
		\quad\text{and}\quad
		\dfrac{\vdash\phi\lor\theta}{\vdash\psi\lor\theta}
		\right] \,.
	\end{equation*}
\end{lemma}

\begin{proof}
	We show one case.
	\begin{equation*}
		\infrule[$\land_2$]{
			\infrule{
				\infrule[$\land_3^*$]{\vdash \phi \land \theta}{\vdash \phi}
			}{\vdash \psi}
			&
			\infrule[$\land_3$]{\vdash \phi \land \theta}{\vdash \theta}
		}{\vdash \psi \land \theta}
	\end{equation*} 
	where ($\land_3^*$) is the symmetric rule of ($\land_3$), which is derivable.
	The other case follows similarly by using ($\lor_1$), ($\lor_2$), ($\lor_2^*$).
\end{proof}

\begin{lemma}\label{meta2}
	In all \LLQ
	\begin{equation*}
		\dfrac{\vdash\theta}{\vdash\phi} 
		\quad \text{ iff } \quad
		\dfrac{\vdash\theta}{\vdash\phi\land\theta}
	\end{equation*}
\end{lemma}
\begin{proof}
	($\Leftarrow$): $$\dfrac{\dfrac{\vdash\theta}{\vdash\phi\land\theta}}{\vdash\phi}.$$
	($\Rightarrow$): $\dfrac{\vdash\theta}{\vdash\phi}$ implies, using lemma \ref{meta1}, $\dfrac{\theta\land\theta}{\theta\land\phi}$, which is equivalent to $\dfrac{\vdash\theta}{\vdash\phi\land\theta}$. 
\end{proof}

\begin{lemma}\label{meta3}
	In all LLQ ,
	$$\left[ ~\dfrac{\vdash\theta}{\vdash\phi}~~\text{ and }~~\dfrac{\vdash\theta}{\vdash\psi}~\right]~~~\text{ iff }~~~\dfrac{\vdash\theta}{\vdash\phi\land\psi}.$$
\end{lemma}

\begin{proof}
	($\Leftarrow$): $$\dfrac{\dfrac{\vdash\theta}{\vdash\phi\land\psi}}{\vdash\theta}.$$
	($\Rightarrow$): Using lemma \ref{meta1}, from $\dfrac{\vdash\theta}{\vdash\phi}$ we infer $\dfrac{\vdash\theta\land\psi}{\vdash\phi\land\psi}$. Similarly, applying lemma \ref{meta2}, from $\dfrac{\vdash\theta}{\vdash\psi}$ we derive $\dfrac{\vdash\theta}{\vdash\theta\land\psi}$. Next, modus ponens completes the proof. 
\end{proof}

\begin{lemma}\label{meta4}
	In all LLQ ,
	$$\text{If }~~\left[~\dfrac{\vdash\phi}{\vdash\theta}~~~\text{ and }~~~\dfrac{\vdash\psi}{\vdash\theta}\right]~~~~\text{ then }~~~~\dfrac{\vdash\phi\lor\psi}{\vdash\theta}.$$	
\end{lemma}

\begin{proof}
	Applying lemma \ref{meta1} to $~~\dfrac{\vdash\phi}{\vdash\theta}~~$ and to $~~\dfrac{\vdash\psi}{\vdash\theta}~~$ we obtain $~~\dfrac{\vdash\phi\lor\psi}{\vdash\theta\lor\psi}~~$ and $~~\dfrac{\vdash\phi\lor\psi}{\vdash\theta\lor\phi}~~$ respectively. Using them in the context of lemma \ref{meta3} we get
	$$\dfrac{\vdash\phi\lor\psi}{\vdash(\theta\lor\phi)\land(\theta\lor\psi)},$$ which implies further $$\dfrac{\vdash\phi\lor\psi}{\vdash\theta\land(\phi\lor\psi)}.$$ Now, applying Lemma~\ref{meta3}, we get 
	\begin{equation*}
		\dfrac{\vdash\phi\lor\psi}{\vdash\theta} \,.
	\end{equation*}
\end{proof}

Now we prove a second version, a bit stronger, of the totality lemma that will help us with the normalization algorithm.

Hereafter, if $S,R,V$ are sets of judgements, we write 
$$\dfrac{S}{~V~}$$ to denote the fact that for all the judgements 
$\gamma \in V$, we have 
$$\dfrac{S}{~\gamma~}$$

There exists a stronger version of totality lemma that requires 
the concept of \emph{decision tree}.
\begin{definition}[Decision tree]
A \emph{decision tree} in $\mathcal{L}$, where 
$\mathcal{L}\in\{\LL,\LLun,\LLs\}$, is a finite binary tree with the nodes 
labelled byfinite sets of judgements in $\mathcal{L}$ such that
\begin{itemize}
  \item when a node labelled by $R$ has only one child, labelled by $V$, then the following rules are provable in $\mathcal{L}$ 
$$\dfrac{~R~}{V}~~~\text { and }~~~\dfrac{~V~}{R}.$$

  \item when a node labelled by $R$ has two children, the labels 
  of the children are of type $V_1\cup\{\vdash\phi\}$ and 
  $V_2\cup\{\vdash\psi\}$ respectively, where $(\vdash\phi ,\vdash\psi)$ 
  is a pair of supplementary judgements, 
  and the following rules are provable in $\mathcal{L}$ 
  \begin{align*}
  \dfrac{V_1 \quad \vdash\phi}{R} \quad \dfrac{V_2 \quad \vdash\psi}{R}
  \quad
  \dfrac{R \quad \vdash\phi}{V_1}
  \quad \text{ and }\quad
  \dfrac{R \quad \vdash\psi}{V_2} \,.
  \end{align*}
\end{itemize}
The \emph{signature of a decision tree} is the set of the labels of its leafs.
If $F_1,\dots, F_n$ are all the nodes that split on a path between two connected nodes $S$ and $T$, and for each $i\leq n$, $\vdash\phi_i$ is the judgement from the pair of  supplementary judgements used to split the node $F_i$, we call the set $\{\vdash\phi_1, \dots, \vdash\phi_n\}$ the \emph{trace of the path} from $S$ to $T$.
\end{definition}

Decision trees can be composed: when the label of a leaf of one 
tree coincides with the label of the root of another, we can merge the 
root of the second with the leaf of the first obtaining a composed tree. 
It's trivial to verify that this tree is indeed a decision tree.  

\begin{corollary}\label{dectree}
Let $S$ and $T$ be the labels of two connected nodes of a 
decision tree in $\mathcal{L}$, where $\mathcal{L}\in\{\LL,\LLun,\LLs\}$, 
and $V$ be the trace of the path from $S$ to $T$. 
Then, the following rules are provable in $\mathcal{L}$.
$$\dfrac{S~~V}{T}~~\text{ and }~~\dfrac{T~~V}{S} \,.$$
\end{corollary}

\begin{lemma}[Second Totality Lemma]\label{totalityl2}
Consider a decision tree in $\mathcal{L}$,  of signature $\{F_1,\dots ,F_n\}$ 
and the root labelled by $R$, where $\mathcal{L}\in\{\LL,\LLun,\LLs\}$. 
Suppose that for each $i\leq n$, let $L_i$ is the trace of the path from 
the root to the leaf $F_i$. The following implications are provable 
in $\mathcal{L}$.
\begin{enumerate}
	\item $$\text{If }~~\left[\text{ for all }i\leq n,~~\dfrac{R~~L_i}{\vdash\theta}~~~\right]~~~~\text{ then }~~~~\dfrac{R}{\vdash\theta};$$
	\item $$\text{If }~~\left[\text{ for all }i\leq n,~~\dfrac{F_i}{\vdash\theta}~~~\right]~~~~\text{ then }~~~~\dfrac{R}{\vdash\theta}.$$
\end{enumerate}

\end{lemma}

\begin{proof}
1. We prove the statement by induction on the structure of the decision tree, starting from the leafs upwards. The base case are the trees with only one level of children.

If $F$ is a leaf that has no siblings and $S$ is its parent, then the trace of the path from $S$ to $F$ is empty and the implication is trivially true.

If $F_1, F_2$ are two siblings leafs having the parent $S$, and $\vdash\phi_1$, $\vdash\phi_2$ are the  supplementary judgements marking the spliting to $F_1$ and $F_2$ respectively, then the trace of the path from $S$ to $F_i$ is $\{\vdash\phi_i\}$. In this case, the implication is one of the cases 2. or 4. of the first totality lemma \ref{totalityl1}.

For the inductive step, suppose the statement is true for trees of maximum depth $n$, and we prove it for trees of depth $n+1$. 

Let $S$ be the root of a tree of depth $n+1$.

If $S$ has a unique child $F$, the statement is trivially true since the tree with the root $F$ has the length $n$ and the trace of the root from $S$ to $F$ is empty.

If $S$ has two children, $S_1$ and $S_2$, these are the roots of two trees of depth at most $n$. Suppose that the first one has the root $S_1$, the signature 
$\{F^1_1,\dots, F^1_{k_1}\}$ and the trace of the path from $S_1$ to $F^1_i$ is $L^1_i$. And the second has the root $S_2$, the signature 
$\{F^2_1,\dots,F^2_{k_2}\}$ and the trace of the path from $S_2$ to $F^2_i$ is $L^2_i$. Suppose also that the split in $S$ is marked by two  supplementary judgements, $\vdash\phi_1$ leading to $S_1$ and $\vdash\phi_2$ leading to 
$S_2$. 

Consider now two new decision trees, $T_1$ obtain by adding 
$\vdash\phi_1$ to all the nodes of $S_1$ except the root and having 
$S\cup\{\vdash\phi_1\}$ as a root; and $T_2$ obtain by adding 
$\vdash\phi_2$ to all the nodes of $S_2$ except the root and having 
$S\cup\{\vdash\phi_2\}$ as a root. Because $S_1$ and $S_2$ are 
children of $S$ in the initial tree, guarded by $\vdash\phi_1$ and 
$\vdash\phi_2$ respectively, we get that indeed $T_1$ and $T_2$ 
are decision trees. Moreover, the trace of the path from the root of 
$T_i$ to $G^i_j$ is still $L^i_j$ as in the case of $S_i$. Moreover, 
$T_i$ has the depth at most $n$. 

From the working hypotehsis we know that for each $i\leq 2$ we 
have that for each $j\leq k_i$, 
$$\dfrac{S~~\vdash\phi_i~~L^i_j}{\vdash\theta}.$$ 
Hence, for each $i\leq 2$ we have that for each $j\leq k_i$, 
$$\dfrac{S\cup\{\vdash\phi_i\}~~L^i_j}{\vdash\theta}.$$ 
Applying the inductive hypotehsis in the tree $T_i$ we get from here that
 $$\dfrac{S~~\vdash\phi_1}{\vdash\theta}~~\text{ and }~~\dfrac{S~~\vdash\phi_2}{\vdash\theta}.$$ 
 Next, the cases 2. or 4. of the first totality lemma \ref{totalityl1}, we get the proof.

2. We prove by induction on the structure of the decision tree, starting from leafs upwards, that $\vdash\theta$ is provable from each set of judgements that labels a node of the tree. It implies that it is provable from $R$. 

If $F$ is a leaf that has no siblings and $S$ is its parent, then from $\dfrac{F}{\vdash\theta}$ we derive $\dfrac{S}{\vdash\theta}$ since we have $\dfrac{~S~}{F}$.

If $F_1, F_2$ are two siblings leafs having the parent $S$, then there exists a pair of  supplementary judgements $\vdash\phi$ and $\vdash\psi$ such that $F_1=V_1\cup\{\vdash\phi\}$ and $F_2=V_2\cup\{\vdash\psi\}$. From the definition of decision tree, we know that $$\dfrac{S~~~\vdash\phi}{V_1}~~~\text{ and }~~~\dfrac{S~~~\vdash\psi}{V_2},$$ implying $$\dfrac{S~~~\vdash\phi}{V_1~~~\vdash\phi}~~~\text{ and }~~~\dfrac{S~~~\vdash\psi}{V_2~~~\vdash\psi}.$$ Using the hypotehsis, we have $$\dfrac{V_1~~~\vdash\phi}{\vdash\theta}~~~\text{ and }~~~\dfrac{V_2~~~\vdash\psi}{\vdash\theta}.$$ From these we get $$\dfrac{S~~~\vdash\phi}{\vdash\theta}~~~\text{ and }~~~\dfrac{S~~~\vdash\psi}{\vdash\theta}.$$ Since $\vdash\phi$ and $\vdash\psi$ are  supplementary judgements, using the first totality lemma \ref{totalityl1}, we get $\dfrac{S}{\vdash\theta}.$
\end{proof}


\section{Normalization Algorithm}\label{sec:normalizationAlg}

In this section, we will prove that any finitely axiomatized theory has a normal representation, and moreover, we will propose an algorithm that will construct a normal representation for such a theory.

\paragraph{The discrimination function} $\D$ is a nondeterministic function that takes a judgement $\Gamma\vdash\phi$ as input and returns a finite set of finite sets of judgements as output, so that $M$ is a model of $\Gamma\vdash\phi$ iff it is a model for at least one of the sets of judgements in $\D(\Gamma\vdash\phi)$. It is defined inductively on the syntax of the judgement, and the cases are described below. To simplify the future development, we present the cases as decision trees (recall Definition~\ref{decisiontree}) - it is not difficult to verify that each of the graph belows are indeed decision trees. The function $\D$ takes the label of the root as input and returns the signature of the tree as output.  

\smallskip
Rule 1:\smallskip

\begin{tikzpicture}
	[
	grow                    = right,
	sibling distance        = 3em,
	level distance          = 10em,
	edge from parent/.style = {draw, -latex},
	every node/.style       = {font=\footnotesize},
	sloped
	]
	\node [root] {$\phi_1,\dots,\phi_n\vdash\psi$}
	child { node [env] {$\phi_1\ot\dots\ot\phi_n\vdash\psi$}
		edge from parent node [above] {}};
\end{tikzpicture}

Rule 2:\smallskip

\begin{tikzpicture}
	[
	grow                    = right,
	sibling distance        = 3em,
	level distance          = 10em,
	edge from parent/.style = {draw, -latex},
	every node/.style       = {font=\footnotesize},
	sloped
	]
	\node [root] {$\gamma\vdash (\phi\land\psi)\ot\rho$}
	child { node [env] {$\gamma\vdash \phi\ot\rho$ \\ $\gamma\vdash \psi\ot\rho$}
	edge from parent node [above] {}};
\end{tikzpicture}

Rule 3:\smallskip

\begin{tikzpicture}
	[
	grow                    = right,
	sibling distance        = 3em,
	level distance          = 10em,
	edge from parent/.style = {draw, -latex},
	every node/.style       = {font=\footnotesize},
	sloped
	]
	\node [root] {$\gamma\vdash (\phi\lor\psi)\ot\rho$}
	child { node [env] {$\phi\vdash\psi$\\$\gamma\vdash \psi\ot\rho$}
		edge from parent node [below] {} }
	child { node [env] {$\psi\vdash\phi$\\$\gamma\vdash \phi\ot\rho$}
		edge from parent node [above] {} };
\end{tikzpicture}

Rule 4: (For $r>0$) \smallskip

\begin{tikzpicture}
	[
	grow                    = right,
	sibling distance        = 3em,
	level distance          = 10em,
	edge from parent/.style = {draw, -latex},
	every node/.style       = {font=\footnotesize},
	sloped
	]
	\node [root] {$\gamma\vdash r*\bot\ot\rho$}
	child { node [env] {$\gamma\vdash\bot$}
	edge from parent node [above] {}};
\end{tikzpicture}

Rule 5:\smallskip

\begin{tikzpicture}
	[
	grow                    = right,
	sibling distance        = 6em,
	level distance          = 6em,
	edge from parent/.style = {draw, -latex},
	every node/.style       = {font=\footnotesize},
	sloped
	]
	\node [root] {$\gamma\vdash (\phi\lol\psi)\ot\rho$}
	[child anchor=west]
	child { node [env] {$\phi\vdash\psi$\\$\gamma\vdash\rho$}
		edge from parent node [below] {} }
	child { node [env] {$\psi\vdash\phi$\\$\gamma\vdash (\phi\lol\psi)\ot\rho$}
		child { node [env] {$\vdash\lnot\psi$\\$\psi\vdash\phi$\\$\gamma\vdash (\phi\lol\psi)\ot\rho$}
			child { node [env] {$\vdash\lnot\phi$\\$\vdash\lnot\psi$\\$\gamma\vdash\rho$}
				edge from parent node [below] {} }
			child { node [env] {$\vdash\lnot\lnot\phi$\\$\vdash\lnot\psi$\\$\vdash\lnot\gamma$}
				edge from parent node [above] {}
			}
			edge from parent node [below] {} }
		child { node [env] {$\vdash\lnot\lnot\psi$\\$\psi\vdash\phi$\\$\gamma\ot \phi\vdash \psi\ot\rho$}
			edge from parent node [above, align=center]
			{}
		}
		edge from parent node [above] {} };
\end{tikzpicture}

Rule 6:\smallskip

\begin{tikzpicture}
	[
	grow                    = right,
	sibling distance        = 3em,
	level distance          = 10em,
	edge from parent/.style = {draw, -latex},
	every node/.style       = {font=\footnotesize},
	sloped
	]
	\node [root] {$(\phi\lor\psi)\ot\rho\vdash\gamma$}
	child { node [env] {$\phi\ot\rho\vdash\gamma$ \\ $\psi\ot\rho\vdash\gamma$}
		edge from parent node [above] {}};
\end{tikzpicture}

Rule 7:\smallskip

\begin{tikzpicture}
	[
	grow                    = right,
	sibling distance        = 3em,
	level distance          = 10em,
	edge from parent/.style = {draw, -latex},
	every node/.style       = {font=\footnotesize},
	sloped
	]
	\node [root] {$(\phi\land\psi)\ot\rho\vdash\gamma$}
	child { node [env] {$\phi\vdash\psi$\\$\phi\ot\rho\vdash\gamma$}
		edge from parent node [below] {} }
	child { node [env] {$\psi\vdash\phi$\\$\psi\ot\rho\vdash\gamma$}
		edge from parent node [above] {} };
\end{tikzpicture}

Rule 8: For $r>0$ \smallskip

\begin{tikzpicture}
	[
	grow                    = right,
	sibling distance        = 3em,
	level distance          = 10em,
	edge from parent/.style = {draw, -latex},
	every node/.style       = {font=\footnotesize},
	sloped
	]
	\node [root] {$r*\bot\ot\rho\vdash\gamma$}
	child { node [env] {$\vdash\top$}
		edge from parent node [above] {}};
\end{tikzpicture}

Rule 9:\smallskip

\begin{tikzpicture}
	[
	grow                    = right,
	sibling distance        = 6em,
	level distance          = 6em,
	edge from parent/.style = {draw, -latex},
	every node/.style       = {font=\footnotesize},
	sloped
	]
	\node [root] {$(\phi\lol\psi)\ot\rho\vdash\gamma$}
	[child anchor=west]
	child { node [env] {$\phi\vdash\psi$\\$\rho\vdash\gamma$}
		edge from parent node [below] {} }
	child { node [env] {$\psi\vdash\phi$\\$(\phi\lol\psi)\ot\rho\vdash\gamma$}
		child { node [env] {$\vdash\lnot\psi$\\$\psi\vdash\phi$\\$(\phi\lol\psi)\ot\rho\vdash\gamma$}
			child { node [env] {$\vdash\lnot\phi$\\$\vdash\lnot\psi$\\$\rho\vdash\gamma$}
				edge from parent node [below] {} }
			child { node [env] {$\vdash\lnot\lnot\phi$\\$\vdash\lnot\psi$}
				edge from parent node [above] {}
			}
			edge from parent node [below] {} }
		child { node [env] {$\vdash\lnot\lnot\psi$\\$\psi\ot\rho\vdash\gamma\ot \phi$}
			edge from parent node [above, align=center]
			{}
		}
		edge from parent node [above] {} };
\end{tikzpicture}

Rule 10: For $r>0$,  \smallskip

\begin{tikzpicture}
	[
	grow                    = right,
	sibling distance        = 3em,
	level distance          = 10em,
	edge from parent/.style = {draw, -latex},
	every node/.style       = {font=\footnotesize},
	sloped
	]
	\node [root] {$\vdash\lnot\lnot r*\bot$}
	child { node [env] {$\top\vdash\bot$}
		edge from parent node [above] {}};
\end{tikzpicture}

Rule 11: For $r>0$,\smallskip

\begin{tikzpicture}
	[
	grow                    = right,
	sibling distance        = 3em,
	level distance          = 10em,
	edge from parent/.style = {draw, -latex},
	every node/.style       = {font=\footnotesize},
	sloped
	]
	\node [root] {$\vdash\lnot\lnot r*\top$}
	child { node [env] {$\vdash\top$}
		edge from parent node [above] {}};
\end{tikzpicture}

Rule 12: For $r>0$,\smallskip

\begin{tikzpicture}
	[
	grow                    = right,
	sibling distance        = 3em,
	level distance          = 10em,
	edge from parent/.style = {draw, -latex},
	every node/.style       = {font=\footnotesize},
	sloped
	]
	\node [root] {$\vdash\lnot\lnot r*\un$}
	child { node [env] {$\vdash\top$}
		edge from parent node [above] {}};
\end{tikzpicture}

Rule 13:\smallskip

\begin{tikzpicture}
	[
	grow                    = right,
	sibling distance        = 3em,
	level distance          = 10em,
	edge from parent/.style = {draw, -latex},
	every node/.style       = {font=\footnotesize},
	sloped
	]
	\node [root] {$\vdash\lnot\lnot (\phi\land\psi)$}
	child { node [env] {$\vdash\lnot\lnot \phi$\\$\vdash\lnot\lnot\psi$}
		edge from parent node [above] {}};
\end{tikzpicture}

Rule 14:\smallskip

\begin{tikzpicture}
	[
	grow                    = right,
	sibling distance        = 3em,
	level distance          = 10em,
	edge from parent/.style = {draw, -latex},
	every node/.style       = {font=\footnotesize},
	sloped
	]
	\node [root] {$\vdash\lnot\lnot (\phi\land\psi)$}
	child { node [env] {$\phi\vdash\psi$\\$\vdash\lnot\lnot \psi$}
		edge from parent node [below] {} }
	child { node [env] {$\psi\vdash\phi$\\$\vdash\lnot\lnot \phi$}
		edge from parent node [above] {} };
\end{tikzpicture}

Rule 15:\smallskip 

\begin{tikzpicture}
	[
	grow                    = right,
	sibling distance        = 3em,
	level distance          = 10em,
	edge from parent/.style = {draw, -latex},
	every node/.style       = {font=\footnotesize},
	sloped
	]
	\node [root] {$\vdash\lnot\lnot (\phi\ot\psi)$}
	child { node [env] {$\vdash\lnot\lnot \phi$\\$\vdash\lnot\lnot \psi$}
		edge from parent node [above] {}};
\end{tikzpicture}

Rule 16: \smallskip

\begin{tikzpicture}
	[
	grow                    = right,
	sibling distance        = 3em,
	level distance          = 10em,
	edge from parent/.style = {draw, -latex},
	every node/.style       = {font=\footnotesize},
	sloped
	]
	\node [root] {$\vdash\lnot\lnot (\phi\lol\psi)$}
	child { node [env] {$\phi\vdash\psi$}
		edge from parent node [below] {} }
	child { node [env] {$\psi\vdash\phi$\\$\vdash\lnot\lnot\psi$}
		edge from parent node [above] {} };
\end{tikzpicture}

The function $\D$ applies uniformly to $\LL$, $\LLun$ and $\LLs$. 
The only difference is that in $\LL$ we don't use Rule 12, and in 
$\LL$ and $\LLun$ the value of $r>0$ in the Rules 4, 8, 10, 11, and 12 
will be an integer and $r*\phi$ is in fact $r\phi$.

\medskip

The following results does not depend on the nondeterministic 
choices used in computing $\D$.
\begin{lemma}\label{discrimination1}
Let $\gamma$ be a judgement in $\mathcal{L} \in\{\LL,\LLun,\LLs\}$.
\begin{enumerate}
\item Any disjunctive theory in $\mathcal{L}$ that contains $\gamma$ 
also contains all the judgements of at least one of the elements in $\D(\gamma)$.

\item Any disjunctive theory in $\mathcal{L}$ that contains the judgements 
of at least one element in $\D(\gamma)$ also contains $\gamma$.

\item $\M$ is a model for $\gamma$ iff $\M$ is a model for at least 
one of the elements in $\D(\gamma)$.
\end{enumerate}
\end{lemma} 

\begin{proof}
Suppose that the signature of $\D(\gamma)$ is $\{F_1,\dots,F_n\}$.
	 
	1. Walking from the root to the leaves of the decision tree $\D(\gamma)$, one needs to chose between  supplementary judgements every time a node splits. Given a disjunctive theory $\T$ that contains $\gamma$, it must contain a set of choices of these  supplementary judgements; let $S$ be it. Consequently, this is (or contains) the trace of the path from the root to a particular leaf $F$. Hence, there exists a set $V$ of judgements so that $F=V\cup S$. Moreover, $S\subseteq\T$ and $\gamma\in \T$, meaning $$\dfrac{\T}{S~~~\gamma}.$$ From the corollary \ref{dectree} we know that $$\dfrac{S~~~\gamma}{F},~~~\text{ hence, }~~~\dfrac{\T}{F}.$$
	
	2. Let $F$ be the label of a leaf, and suppose that $\T$ is a disjunctive theory such that $F\subseteq\T$. From the definition of decision tree we know that $\dfrac{F}{\gamma}.$ Hence $\gamma\in\T$.
	
	3. ($\Rightarrow$) Suppose $\gamma$ has a model $\M$. From lemma \ref{max}, there exists a disjunctive consistent theory $\T_\M$ containing $\gamma$ and $\M$ is a model of $\T$. Applying 1. from this lemma, at least one of the elements of $\D(\gamma)$ is a subset of $\T_\M$. Hence, this element has $\M$ as a model.
	
	($\Leftarrow$) Suppose that an element $V\in\D(\gamma)$ admits a model $\M$. From lemma \ref{max}, there exists a disjunctive consistent theory $\T_\M$ that contains all the judgements satisied by $\M$, hence, $V\subseteq\T_\M$. Applying 2. of this lemma, $\gamma\in\T_\M$, hence $\M$ is a model of $\gamma$.
\end{proof}

 
 Next, we extend $\D$ to take  as inputs finite sets of judgements. We do it as follows, where $V$ is an arbitrary set of judgements:
\begin{align*}
\D(\emptyset) &\coloneq \emptyset \\
 \D(V) &\coloneq \{(V\setminus\{\gamma\})\cup T\mid \gamma\in V \text{ and }T\in\D(\gamma)\} \,.
 \end{align*}
Observe that $\D$ remains a nondeterministic function and its action can still be presented as a decision tree. Moreover, if we apply $\D$ to a finite set of judgements $V$ we get a decision tree; and if we apply $\D$ again to the labels of its leafs we get a new decision tree; and we can continue applying $\D$ to the labels of the leafs of the previous tree, obtaining larger and larger trees. However, this process will eventually end when one cannot apply $\D$ any more. This happens when all the judgements of the leafs are in normal form. Let's denote by $\D^*(V)$ this decision tree. Of course $\D^*(V)$ is not uniquely defined and one can get differently shaped trees depending on the nondeterministic choices in applying $\D$. However, for the results we present hereafter it is not important which of these trees we choose.

Given a finite set $V$ of judgements and an atomic 
proposition $p\in\Prop$, the \textit{$p$-saturation of $V$} 
is defined as follows 
$$sat_p(V)=\{V\cup\{\vdash\lnot p\},V\cup\{\vdash\lnot\lnot p\}\}.$$ 
This function can be represented by the following decision tree:

\begin{tikzpicture}
	[
	grow                    = right,
	sibling distance        = 3em,
	level distance          = 10em,
	edge from parent/.style = {draw, -latex},
	every node/.style       = {font=\footnotesize},
	sloped
	]
	\node [root] {$V$}
	child { node [env] {$\vdash\lnot p$\\$V$}
		edge from parent node [below] {} }
	child { node [env] {$\vdash\lnot\lnot p$\\$V$}
		edge from parent node [above] {} };
\end{tikzpicture}

Let $sat(V)$ be the composition of the decision trees of $sat_p(V)$ for all atomic propositions $p$ that appear in at least one judgement in $V$. Their order is not important. If $P$ is the set of atomic propositions that appear in the judgements of $V$, then 
\begin{align*}
sat(V) ={} &\{V\cup\{\vdash\lnot x \mid x\in W\} \cup {} \\
	&\{\vdash\lnot\lnot y\mid y\in P\setminus W\}\mid W\subseteq P\}.
\end{align*}

If $V$ is a finite set of judgements, let the \textit{refinement of $V$} be the set $\mathit{ref}(V)$ of judgements obtained as follows:
\begin{enumerate}
	\item if either $\vdash\bot\in V$, or for some formula $\phi$ both $\vdash\lnot\phi$, $\vdash\lnot\lnot\phi\in V$, replace $V$ with $\{\vdash\bot\}$;
	\item identify all alethic judgements in $V$ and do simultaneously
	\begin{itemize}
		\item if $\top\vdash p\in V$ replace all the occurrences of $p$ in all the other judgements in $V$ with $\top$;
		\item if $p\vdash\bot\in V$ replace all the occurrences of $p$ in all the other judgements in $V$ with $\bot$;
	\end{itemize}
	\item let $\mathit{ref}(V)$ be the result of there replacements.		
\end{enumerate}	

These replacements produce equivalent results in every Lawvere logic and for this reason we can describe the effect of refinement using the following decision tree:

\begin{tikzpicture}
	[
	grow                    = right,
	sibling distance        = 3em,
	level distance          = 10em,
	edge from parent/.style = {draw, -latex},
	every node/.style       = {font=\footnotesize},
	sloped
	]
	\node [root] {$V$}
	child { node [env] {$\mathit{ref}(V)$}
		edge from parent node [above] {}};
\end{tikzpicture}

Next, we use these functions to present an algorithm that computes the normal representation of a finitely axiomatized theory (identified by its finite axiomatization). Moreover, being the presentation of these functions as decision trees, the algorithm itself is represented by a decision tree. 

\subsection*{The Normalisation Algorithm}

\begin{enumerate}
	\item \textbf{Input:} a finite set $R$ of judgements;
	\item\textbf{Let} $\mathcal X:=sat(R)$;
	\item \textbf{For each} leaf identity $F$ of $\mathcal X$ \textbf{compute} $\D^*(F)$; \\ \textbf{let} $\mathcal Y$ be the decision tree obtained by composing $\mathcal X$ with these trees;
	\item \textbf{For each} leaf identity $W$ of $\mathcal Y$ \textbf{compute} $\mathit{ref}(W)$; \\ \textbf{let} $\mathcal Z$ be the decision tree obtained by composing $\mathcal Y$ with these trees;
	\item \textbf{If} $\mathcal Z\neq\mathcal X$, \textbf{let} $\mathcal X:=\mathcal Z$ and \textbf{go back to 3};
	\item \textbf{Else, output} $\mathcal Z$.
\end{enumerate}

Observe that the algorithm always terminates on finite inputs. Since it uses $\D$, it is nondeterministic. However, the results presented hereafter remain true independently of the nondeterministic choices. Also, the computation of the algorithm can be represented as a decision tree with the root indexed by $R$, and the structure given by the composition of the decision trees used in the steps of the algorithm. 
%
%

\end{document}

\section{Useful lemmas}
In this subsection, we prove a series of theorems of \LLQ\ as well as some useful derived rules. For $\gamma_i$, $\gamma$ judgements, a \emph{derived rule}
\begin{equation*}
	\infrule{\gamma_1 & \cdots & \gamma_n}{\gamma}
\end{equation*}
denotes that $\gamma$ is provable from $\{ \gamma_1, \dots, \gamma_n \}$. By denoting this provability statements in the form of rules of inference will just make easier their applications in the formation of new proofs, as by adding them in the
deductive systems will not change the derivable theories.

\begin{lemma}\label{struct-lemma}
	The following are provable in all \LLQ:
	\begin{align*} 
		\textbf{1.} \quad 
		&\infrule{\Gamma, \phi\land\psi\vdash\theta & \phi\vdash\psi
		}{\Gamma,\phi\vdash\theta} &  
		\textbf{2.} \quad
		&\infrule{\Gamma\vdash\phi\lor\psi & \phi\vdash\psi}{\Gamma\vdash\psi} 
		\\[1ex]
		\textbf{3.} \quad 
		&\infrule{\Gamma\vdash\phi & \Delta\vdash\psi
		}{\Gamma,\Delta\vdash\phi\land\psi} &
		\textbf{4.} \quad
		&\infrule{\Gamma,\top\vdash\phi}{\Gamma\vdash\phi} 
		\\[1ex]
		\textbf{5.} \quad
		&\infrule{}{\phi\vdash\phi\lor\psi} \quad & 
		\textbf{6.} \quad
		&\infrule{}{\phi\land\psi\vdash\phi} 
		\\[1ex]
		\textbf{7.} \quad
		&\infrule{}{\vdash\phi\land\psi \lollol \psi\land\phi}  & 
		\textbf{8.} \quad 
		&\infrule{}{\vdash\phi\lor\psi \lollol \psi\lor\phi} 
		\\[1ex]
		\textbf{9.} \quad
		&\infrule{}{\vdash(\phi\land\psi)\lor\phi \lollol \phi} & 
		\textbf{10.} \quad 
		&\infrule{}{\vdash(\phi\lor\psi)\land\phi \lollol \phi}
		\\[1ex]
		\textbf{11.} \quad
		&\infrule{}{\vdash\phi\land\top \lollol \phi} & 
		\textbf{12.} \quad 
		&\infrule{}{\vdash\phi\lor\top \lollol \top}
		\\[1ex]
		\textbf{13.} \quad
		&\infrule{}{\vdash\phi\land\bot \lollol \bot} & 
		\textbf{14.} \quad 
		&\infrule{}{\vdash\phi\lor\bot \lollol \phi}
	\end{align*}
\end{lemma}

\begin{proof}
	We prove a few cases. The other cases are as easy.
	
	1. 
	\begin{equation*}
		\infrule[cut]{
			\infrule[$\land_2$]{
				\phi\vdash\phi 
				& 
				\phi\vdash\psi
			}{\phi\vdash\phi\land\psi} 
			& 
			\Gamma,\phi\land\psi\vdash\theta
		}{\Gamma,\phi\vdash\theta}
	\end{equation*}
	
	3.
	\begin{equation*}
		\infrule[$\land_2$]{
			\infrule[weak]{\Gamma\vdash\phi}{\Gamma, \Delta\vdash\phi} 
			& 
			\infrule[weak]{\Delta\vdash\psi}{\Gamma,\Delta\vdash\psi}
		}{\Gamma,\Delta\vdash\phi\land\psi}
	\end{equation*}
	
	4.
	\begin{equation*}
		\infrule[cut]{
			\infrule[top]{}{\vdash\top}
			& 
			\Gamma,\top\vdash\phi
		}{\Gamma\vdash\phi}
	\end{equation*}
\end{proof}

\begin{lemma}\label{quantale-lemma}
	The following are provable in all \LLQ:
	\begin{align*}
		&\textbf{1.} \quad \infrule{}{\vdash \phi\ot(\psi\ot\rho) \lollol (\phi\ot\psi)\ot\rho}
		\\[1ex]
		&\begin{aligned}
			\textbf{2.} \quad 
			& \infrule{\Gamma\vdash\phi\ot\psi}{\Gamma\vdash\phi} &  
			\textbf{3.} \quad 
			& \infrule{\Gamma\vdash\phi & \Delta\vdash\psi}{\Gamma,\Delta\vdash\phi\ot\psi} 
			\\[1ex]
			\textbf{4.} \quad 
			& \infrule{}{\phi\ot\psi\vdash\phi} &
			\textbf{5.} \quad 
			& \infrule{}{\vdash \phi\ot\psi\lollol\psi\ot\phi}
			\\[1ex]
		\textbf{6.} \quad
		& \infrule{}{\vdash \phi\ot\top\lollol\phi } & 
		\textbf{7.} \quad 
		& \infrule{}{\vdash \phi\ot\bot\lollol\bot }
		\\[1ex]
		\textbf{8.} \quad
		& \doubleinfrule{\vdash\phi\lol\psi}{\phi\vdash\psi} & 
		\textbf{9.} \quad 
		& \infrule{\vdash\psi}{\vdash\phi\lol\psi}
		\\[1ex]
		\textbf{10.} \quad
		& \infrule{}{\vdash\phi\lol\top} & 
		\textbf{11.} \quad 
		& \infrule{}{\vdash\phi\lol\phi }
		\\[1ex]
		\textbf{12.} \quad
		& \infrule{}{\phi\vdash\psi\lol\phi } & 
		\textbf{13.} \quad 
		& \infrule{}{\phi\ot(\phi\lol\psi)\vdash\psi }
		\\[1ex]
		\textbf{14.} \quad
		& \infrule{\psi\vdash\phi}{\psi\vdash\phi\ot(\phi\lol\psi)} & 
		\textbf{15.} \quad 
		& \infrule{\phi\vdash\psi & \theta\vdash\rho}{\phi\ot\theta\vdash\psi\ot\rho}
	\end{aligned}
\end{align*}
\end{lemma}

\begin{proof}
We prove a few cases. The other cases are as easy.

1. We show the proof of one of the two linear implications. 
\begin{equation*}
	\infrule[$\ot_2$]{
		\infrule[$\ot_1$]{
			\infrule[$\ot_1$]{
				\infrule[Lemma~\ref{quantale-lemma}.(3)]{
					\infrule[Lemma~\ref{quantale-lemma}.(3)]{
						\phi\vdash\phi~~~\psi\vdash\psi}{\phi,\psi\vdash\phi\ot\psi}
					& 
					\rho\vdash\rho
				}{\phi,\psi,\rho\vdash(\phi\ot\psi)\ot\rho}
			}{\phi,\psi\ot\rho\vdash(\phi\ot\psi)\ot\rho}
		}{\phi\ot(\psi\ot\rho)\vdash(\phi\ot\psi)\ot\rho}
	}{\vdash \phi\ot(\psi\ot\rho) \lol (\phi\ot\psi)\ot\rho}
\end{equation*}
The other implication follows similarly. We conclude by ($\land_2$).

2. 
\begin{equation*}
	\infrule[cut]{
		\Gamma\vdash\phi\ot\psi
		&
		\infrule[$\ot_1$]{
			\infrule[weak]{\phi\vdash\phi}{\phi,\psi\vdash\phi}
		}{\phi\ot\psi\vdash\phi} 
	}{\Gamma\vdash\phi}
\end{equation*}

3. Let $\theta$ be the result of taking all the elements of $\Gamma$ 
and connect them by $\ot$ ---recall that $\ot$ is associative.
\begin{equation*}
	\infrule[rep. $\ot_1$]{
		\infrule[$\lol_1$]{
			\infrule[weak]{\Delta\vdash\psi}{\Delta,\phi\lol\theta\vdash\psi}
			&
			\infrule[rep. $\ot_1$]{\Gamma\vdash\phi}{\theta\vdash\phi}
		}{\theta,\Delta\vdash\phi\ot\psi}
	}{\Gamma,\Delta\vdash\phi\ot\psi}
\end{equation*}
\end{proof}

\begin{lemma} The following are provable in $\LLun$ and $\LLs$.
\begin{align*} 
	\text{\textbf{1.}} \;
	& \frac{\vdash\un}{\top\vdash\bot} \quad & 
	\text{\textbf{2.}} \;
	& \frac{\un\vdash\bot}{\top\vdash\bot} \, &
	\text{\textbf{3.}} \;
	& \frac{\phi\vdash\phi\ot\un}{\phi\vdash\bot} \quad & 
	\text{\textbf{4.}} \;
	& \frac{\un\lol\phi\vdash\phi}{\phi\vdash\bot} \,
\end{align*}
\end{lemma}

\begin{lemma} The following are provable in $\LLs$:
\begin{align*} 
	\textbf{1.} \quad
	& \infrule{}{(r+s)*\phi\vdash r*\phi} & 
	\textbf{2.} \quad
	& \infrule{}{r*\top\dashv\vdash\top }
	\\[1ex]
	\textbf{3.} \quad 
	& \frac{r*\un\vdash(r+s)*\un}{\top\vdash\bot}~s>0 \,
\end{align*}
\end{lemma}
%
%

\begin{lemma}\label{totality-lemma}
The following is provable in \LLQ.
\begin{equation*}
	\infrule{
		\Gamma,\phi\lol\psi\vdash\theta  
		& 
		\Gamma,\psi\lol\phi\vdash\theta
	}{\Gamma\vdash\theta}
\end{equation*}
\end{lemma}

\begin{proof}
\begin{equation*}
	\infrule[Lemma~\ref{struct-lemma}.(4)]{
		\infrule[cut + tot]{
			\infrule{
				\Gamma,\phi\lol\psi\vdash\theta
				&
				\Gamma,\psi\lol\phi\vdash\theta
			}{\Gamma,(\phi\lol\psi)\lor(\psi\lol\phi)\vdash\theta}
		}{\Gamma,\top\vdash\theta}
	}{\Gamma\vdash\theta}
\end{equation*}
\end{proof}

\begin{lemma}\label{meta1}
In all \LLQ, 
\begin{equation*}
	\text{If} \quad 
	\dfrac{\vdash\phi}{\vdash\psi} 
	\quad \text{then} \quad
	\left[ \dfrac{\vdash\phi\land\theta}{\vdash\psi\land\theta} 
	\quad\text{and}\quad
	\dfrac{\vdash\phi\lor\theta}{\vdash\psi\lor\theta}
	\right] \,.
\end{equation*}
\end{lemma}

\begin{proof}
We show one case.
\begin{equation*}
	\infrule[$\land_2$]{
		\infrule{
			\infrule[$\land_3^*$]{\vdash \phi \land \theta}{\vdash \phi}
		}{\vdash \psi}
		&
		\infrule[$\land_3$]{\vdash \phi \land \theta}{\vdash \theta}
	}{\vdash \psi \land \theta}
\end{equation*} 
where ($\land_3^*$) is the symmetric rule of ($\land_3$), which is derivable.
The other case follows similarly by using ($\lor_1$), ($\lor_2$), ($\lor_2^*$).
\end{proof}

\begin{lemma}\label{meta2}
In all \LLQ
\begin{equation*}
	\dfrac{\vdash\theta}{\vdash\phi} 
	\quad \text{ iff } \quad
	\dfrac{\vdash\theta}{\vdash\phi\land\theta}
\end{equation*}
\end{lemma}
\begin{proof}
($\Leftarrow$): $$\dfrac{\dfrac{\vdash\theta}{\vdash\phi\land\theta}}{\vdash\phi}.$$
($\Rightarrow$): $\dfrac{\vdash\theta}{\vdash\phi}$ implies, using lemma \ref{meta1}, $\dfrac{\theta\land\theta}{\theta\land\phi}$, which is equivalent to $\dfrac{\vdash\theta}{\vdash\phi\land\theta}$. 
\end{proof}

\begin{lemma}\label{meta3}
In all LLQ ,
$$\left[ ~\dfrac{\vdash\theta}{\vdash\phi}~~\text{ and }~~\dfrac{\vdash\theta}{\vdash\psi}~\right]~~~\text{ iff }~~~\dfrac{\vdash\theta}{\vdash\phi\land\psi}.$$
\end{lemma}

\begin{proof}
($\Leftarrow$): $$\dfrac{\dfrac{\vdash\theta}{\vdash\phi\land\psi}}{\vdash\theta}.$$
($\Rightarrow$): Using lemma \ref{meta1}, from $\dfrac{\vdash\theta}{\vdash\phi}$ we infer $\dfrac{\vdash\theta\land\psi}{\vdash\phi\land\psi}$. Similarly, applying lemma \ref{meta2}, from $\dfrac{\vdash\theta}{\vdash\psi}$ we derive $\dfrac{\vdash\theta}{\vdash\theta\land\psi}$. Next, modus ponens completes the proof. 
\end{proof}

\begin{lemma}\label{meta4}
In all LLQ ,
$$\text{If }~~\left[~\dfrac{\vdash\phi}{\vdash\theta}~~~\text{ and }~~~\dfrac{\vdash\psi}{\vdash\theta}\right]~~~~\text{ then }~~~~\dfrac{\vdash\phi\lor\psi}{\vdash\theta}.$$	
\end{lemma}

\begin{proof}
Applying lemma \ref{meta1} to $~~\dfrac{\vdash\phi}{\vdash\theta}~~$ and to $~~\dfrac{\vdash\psi}{\vdash\theta}~~$ we obtain $~~\dfrac{\vdash\phi\lor\psi}{\vdash\theta\lor\psi}~~$ and $~~\dfrac{\vdash\phi\lor\psi}{\vdash\theta\lor\phi}~~$ respectively. Using them in the context of lemma \ref{meta3} we get
$$\dfrac{\vdash\phi\lor\psi}{\vdash(\theta\lor\phi)\land(\theta\lor\psi)},$$ which implies further $$\dfrac{\vdash\phi\lor\psi}{\vdash\theta\land(\phi\lor\psi)}.$$ Now, applying Lemma~\ref{meta3}, we get 
\begin{equation*}
	\dfrac{\vdash\phi\lor\psi}{\vdash\theta} \,.
\end{equation*}
\end{proof}

The first totality lemma allows us to prove the distributivity of the tensor product with respect to conjunction and disjunction respectively.

\begin{lemma}\label{quantale-distrib-lemma}
	The following are provable in all LLQ .
	\begin{align*} 
		\text{\textbf{1.}} \quad
		& (\rho\ot\phi)\land(\rho\ot\psi)\dashv\vdash\rho\ot(\phi\land\psi) \quad 
		\\ \text{\textbf{2.}} \quad 
		& (\rho\ot\phi)\lor(\rho\ot\psi)\dashv\vdash\rho\ot(\phi\lor\psi)\,\\[1ex]
		\text{\textbf{3.}} \quad
		& \frac{\Gamma\vdash\rho\ot\phi~~~\Gamma\vdash\rho\ot\psi}{\Gamma\vdash\rho\ot(\phi\land\psi)} \quad 
		\\[1ex] \text{\textbf{4.}} \quad 
		& \frac{\Gamma\vdash\rho\ot\phi}{\Gamma\vdash\rho\ot(\phi\lor\psi)} \,;
	\end{align*}
\end{lemma}

\begin{proof}
	1. ($\Rightarrow$): We have
	$$\dfrac{\dfrac{\dfrac{\dfrac{\vdash\phi\lol\psi}{\phi\vdash\psi}}{\phi\vdash\phi\land\psi}}{\rho\ot\phi\vdash\rho\ot(\phi\land\psi)}}{(\rho\ot\phi)\land(\rho\ot\psi)\vdash\rho\ot(\phi\land\psi)}$$
	and similarly
	$$\dfrac{\dfrac{\dfrac{\dfrac{\vdash\psi\lol\phi}{\psi\vdash\phi}}{\psi\vdash\phi\land\psi}}{\rho\ot\psi\vdash\rho\ot(\phi\land\psi)}}{(\rho\ot\phi)\land(\rho\ot\psi)\vdash\rho\ot(\phi\land\psi)}$$
	Applying Totality Lemma (lemma \ref{totalityl1} proven below), we get $$(\rho\ot\phi)\land(\rho\ot\psi)\vdash\rho\ot(\phi\land\psi).$$
	($\Leftarrow$):
	$$\dfrac{\dfrac{\phi\land\psi\vdash\phi}{\rho\ot(\phi\land\psi)\vdash\rho\ot\phi}~~~~\dfrac{\phi\land\psi\vdash\psi}{\rho\ot(\phi\land\psi)\vdash\rho\ot\psi}}{\rho\ot(\phi\land\psi)\vdash(\rho\ot\phi)\land(\rho\ot\psi)}.$$
	2. is proven similarly.
\end{proof}